\pdfoutput=1

\newif\ifLCNS
\LCNSfalse

\newif\ifFull
\Fulltrue
\ifLCNS \Fullfalse \fi

\ifLCNS
\documentclass{llncs}
\else
\documentclass[11pt]{article}
\topmargin 0pt
\advance \topmargin by -\headheight
\advance \topmargin by -\headsep
\textheight 9in
\oddsidemargin 0pt
\evensidemargin \oddsidemargin
\marginparwidth 0.5in
\textwidth 6.5in
\setlength{\pdfpagewidth}{8.5in}
\setlength{\pdfpageheight}{11in}
\fi

\usepackage{bm}
\usepackage{graphicx}
\usepackage{verbatim}
\usepackage{algorithm}
\usepackage{algorithmic}
\usepackage{amssymb,amsmath,latexsym}
\usepackage{multirow}
\usepackage{array}
\usepackage{times}
\usepackage{url}
\usepackage{xcolor}
\usepackage{array}
\usepackage{hyperref}
\usepackage{wrapfig}

\ifLCNS
\renewcommand{\subparagraph}{}
\usepackage[small, compact]{titlesec}

\usepackage{enumitem1}
\setlist{nosep}

\setlength{\intextsep}{0.5\baselineskip}

\fi

\usepackage{adjustbox}

\usepackage[numbers, sectionbib,sort]{natbib}
\ifLCNS  \fi

\makeatletter
\def\@begintheorem#1#2{\sl \trivlist \item[\hskip \labelsep{\bf #1\ #2:}]}
\def\@opargbegintheorem#1#2#3{\sl \trivlist
      \item[\hskip \labelsep{\bf #1\ #2\ (#3):}]}
\makeatother

\ifLCNS
\setlength{\topsep}{0pt}
\else
\newtheorem{theorem}{Theorem}[section]
\newtheorem{definition}[theorem]{Definition}
\newtheorem{lemma}[theorem]{Lemma}

\newenvironment{proof}{\noindent{\bf Proof:}}{\hspace*{\fill}\rule{6pt}{6pt}\bigskip} 
\fi

\newcommand{\shuffle}{Melbourne~}
\newcommand{\msgsize}{\mathsf{msgSize}}
\newcommand{\bucket}[1][]{\ifthenelse{\equal{#1}{}}{}{\mathsf{#1}\_}\mathsf{bucket}}
\newcommand{\m}[1]{\mathsf{#1}}
\newcommand{\Enc}{\mathsf{Enc}}
\newcommand{\Dec}{\mathsf{Dec}}
\newcommand{\key}{\mathsf{key}}
\newcommand{\dummy}{{\mathsf{dummy}}}
\newcommand{\seed}{{\mathsf{seed}}}
\newcommand{\loc}{{\mathsf{loc}}}
\newcommand{\adv}{{\mathcal{A}}}
\newcommand{\advB}{{\mathcal{B}}}
\newcommand{\me}{\mathrm{e}}
\newcommand{\getRange}{\m{getRange}}
\newcommand{\putRange}{\m{putRange}}
\newcommand{\getRangeDist}{\getRange\m{Dist}}
\newcommand{\putRangeDist}{\putRange\m{Dist}}
\newcommand{\proofInApp}[1]{\ifLCNS The proof is given in Appendix~\ref{#1}.\fi}
\newcommand{\shufflepass}{\m{shuffle\_pass}}
\newcommand{\optimshufflepass}{\m{optim\_shuffle\_pass}}
\newcommand{\setup}{\mathsf{Setup}}

\newcommand{\comments}[1]{}

\makeatletter

\newcommand{\Rmnum}[1]{\expandafter\@slowromancap\romannumeral #1@}
\newcommand*{\medcup}{\mathbin{\scalebox{1.5}{\ensuremath{\cup}}}}
\makeatother

\begin{document}

\title{\ifLCNS \large \fi The Melbourne Shuffle: 
                Improving Oblivious Storage in the Cloud}

\newcommand{\institutes}{$^1$Microsoft Research, $^2$University~of California, Irvine, $^3$Brown University}

\author{%
\ifLCNS \normalsize\fi
Olga Ohrimenko$^1$ \and
Michael T. Goodrich$^2$ \and 
Roberto Tamassia$^3$ \and 
Eli Upfal$^3$ 
\ifFull   \vspace{8pt}  \and \small\institutes \fi
}
\ifLCNS \institute{\institutes} \fi

\date{}

\maketitle

\pagestyle{plain}

\begin{abstract}
We present a simple, efficient, and secure
data-oblivious randomized shuffle algorithm.
This is the first secure data-oblivious shuffle that is not based on sorting. 
Our method can be used to improve previous oblivious
storage solutions for network-based outsourcing of data.
\end{abstract}

\section{Introduction}
\label{sec:intro}

One of the unmistakable recent trends
in networked computation and distributed information management is that of 
{\it cloud storage} (e.g., see~\cite{kk-ccs-10}), whereby users outsource 
data to external servers that manage and provide access to their data.
Such services relieve users from the burden of backing up and having
to maintain access to their data across multiple computing platforms, but,
in return, such services also introduce privacy concerns.
For instance, it is likely that cloud storage providers will want to
perform data mining on user data, and it is also possible that such
data will be subject to government searches.
Thus, there is a need for algorithmic solutions that preserve the
desirable properties of cloud storage while also providing privacy
protection for user data.

Of course, users can encrypt data they outsource to the cloud,
but this alone is not sufficient to achieve privacy protection,
because the data access patterns that users exhibit can 
reveal information about the content of their data
\ifFull
(e.g., see~\cite{cwwz-sclwa-10,DBLP:conf/ndss/IslamKK12}).
\else
(e.g., see~\cite{DBLP:conf/ndss/IslamKK12}).
\fi
Therefore, there has been considerable 
amount of recent research on algorithms for 
\emph{data-oblivious algorithms and storage}, which 
hide data access patterns for cloud-based
network data management solutions
(e.g., see~\cite{%
gm-paodor-11,gm-shuffle-12,gmot-orsew-11,prac_oram,logn_oram,%
ostrovsky_soda,scsl-orwcc-11,%
sss-tpor-11,path_oram,williams_single,wsc-bcomp-08}).
Such solutions typically work
by obfuscating a sequence of data accesses intended by a client 
by simulating it with the one
that appears indistinguishable from a random sequence of data accesses.
Often, such a simulation involves mixing the intended (real) accesses with
a sequence of random ``dummy'' accesses.
In addition,
so as to never access the same address twice (which would reveal
a correlation), such obscuring simulations also
involve continually moving items around in the server's memory space.
For this reason, the ``inner-loop'' computation required by such simulations is
a \emph{data-oblivious shuffling} operation, which moves
a set of items to random locations in fashion that disallows the server
to correlate the previous locations of items with their new locations.
This inner-loop process
requires putting items
in new locations that are independent of their old locations
while hiding the correlations between the two. 

The most common way this inner-loop
shuffling is implemented is, however, computationally expensive,
since it involves assigning
random (or pseudo-random) indices to items and then performing a data-oblivious 
sorting of these index-item pairs.
Examples of such
oblivious sorting algorithms
include Batcher's sorting network~\cite{b-snta-68}, which
requires  $O(n (\log n)^2)$ I/Os to sort data of size $n$,
or the AKS~\cite{aks} or Zig-zag sorting~\cite{g-zddos-14} networks,
which use 
$O(n \log n)$ I/Os, but with large constant factors that restrict
their practicality.
These algorithms are used in oblivious storage solutions by having
a client use the server as an external memory, with the I/Os directing the
client to issue commands to move
items from the server to the client's private memory and from the client's
private memory to the server.
Though these solutions achieve a desired
privacy level, they are expensive
in their (amortized) access overhead time and also in their monetary
cost when one considers a client outsourcing large volumes of data
and accessing it from a cloud server that charges per every data request.

In this paper, therefore, we are interested in algorithmic improvements
for oblivious storage solutions, 
in terms of their conceptual
complexity, constant factors, and monetary costs.
For instance, since cloud-storage servers typically charge users for each 
memory access but have fairly large bounds on the size of the messages
for such I/Os, we allow for messages to have modest sizes, such
as $O(\sqrt{n})$ for a storage of size $n$.
This necessarily also implies that the client has an equally
modest-sized private memory, 
in which to send and receive such messages (and also in which to perform
internal swaps of data items away from the prying eyes of the server).
Our goal in this research is to take advantage of such frameworks
to replace data-oblivious sorting with simple oblivious data shuffling
for the sake of providing simple, efficient, and cheap outsourced
data management.
Our framework, therefore, involves designing (or modifying) oblivious storage 
simulation algorithms where a client stores $n$ items at the server and
is allowed to issue a sequence of I/Os, each of which is 
a batch of reads and writes for the server's memory, for reasonable
assumptions on message size and private memory size.


\paragraph{Related Work.}

A \emph{shuffle} is an algorithm for rearranging an array to achieve a random
permutation of its elements.  Early shuffle methods were
motivated by the problem of shuffling a deck of cards.

\ifFull
Classic card shuffle methods (e.g.,
Knuth (or Fisher-Yates)~\cite{knuth-vol2},
the riffle shuffle~\cite{ad-shuffle-86},
Thorp shuffle~\cite{thorp1973}) are not data-oblivious,
however, as anyone observing card swaps or riffles (interleaving
two subdecks) of such methods can learn the final output permutation.
\else
Classic card shuffle methods (e.g.,
\cite{knuth-vol2,ad-shuffle-86,thorp1973}) are not data-oblivious,
however, as anyone observing card swaps or riffles (interleaving
two subdecks) of such methods can learn the final output permutation.
\fi
In ICALP~2012, Goodrich and Mitzenmacher~\cite{gm-shuffle-12} 
showed that one can, in fact, shuffle
a deck of~$n$ cards and guarantee that an observer cannot find
a particular card in the output permutation with probability better
than~$O(1/n)$.  However, this algorithm is not an effective shuffle 
for our purposes, since the
output permutations produced by the algorithm are not all equally likely
and there may be dependencies between large groups of cards that 
could be leaked.
Most other existing efficient data-oblivious shuffling methods assign random
values to the elements of the array and use a data-oblivious algorithm
to sort the array according to these values.

\paragraph{Our Oblivious Shuffling Results.}
Our \shuffle shuffle\ifFull\footnote{The name of our algorithm is inspired by a
\small{\href{{http://en.wikipedia.org/wiki/Melbourne_Shuffle}}{\texttt{``shuffle''}}
dance technique}.}
\else
~
\fi
algorithm
is instead the first data-oblivious shuffle method that
is not based on a data-oblivious sorting algorithm.
In Table~\ref{tbl:cmp_obl}, we compare the \shuffle shuffle,
showing that it outperforms
sorting-based shuffle methods.

\begin{table}[hbt]
\small
\begin{center}
  \caption{\label{tbl:cmp_obl}
 Comparison of data-oblivious sorting and shuffle
    algorithms over $n$ items.}
\ifLCNS    
\vspace{-10pt}
\fi
  \begin{tabular}{ >{\arraybackslash}m{\ifFull 3.8cm \else 3.5cm \fi} | >{\centering\arraybackslash}m{0.77cm} | >{\raggedleft\arraybackslash}m{1.5cm} | >{\raggedleft\arraybackslash}m{1.5cm} | >{\raggedleft\arraybackslash}m{1.5cm} | >{\raggedleft\arraybackslash}m{2cm} }
  & \parbox{0.6cm}{\adjustbox{angle=60,width=\width-1.1em}{~~Randomized}}  &
  {Private Memory}&
    {Message Size}&
      {External Memory} &
        {I/Os} \\
  \hline
  Batcher's network~\cite{b-snta-68} & & $O(1)$ & $O(1)$ & $O(n)$ & $O(n (\log n)^2)$ \\
  Batcher's network~\Rmnum{1} & & $O(\sqrt{n})$ & $O(\sqrt{n})$ & $O(n)$ & $O(\sqrt{n} (\log n)^2)$ \\
  Batcher's network~\Rmnum{2}~\cite{gm-paodor-11} & & $O(\sqrt{n})$ & $O(\sqrt[8]{n})$ & $O(n)$ & $O(n^{7/8})$ \\
  AKS~\cite{aks}, Zig-zag sort~\cite{g-zddos-14} & & $O(1)$ & $O(1)$& $O(n)$& $O(n \log n)$ \\
  Randomized shellsort~\cite{g-rsaso-10} & $\checkmark$ & $O(\sqrt{n})$ & $O(\sqrt{n})$ & $O(n)$ & $O(\sqrt{n} \log n)$ \\
  \hline
  \shuffle shuffle & $\checkmark$ & $O(\sqrt{n})$&  $O(\sqrt{n})$& $O(n)$ & $O(\sqrt{n})$ \\ 
 \shuffle shuffle ($c \geq 3$) & $\checkmark$ & $O(\sqrt[c]{n})$&  $O(\sqrt[c]{n})$& $O(n)$ & $O(c\sqrt[c]{n^{c-1}})$ \\ 
  \end{tabular}
\end{center}
\vspace{-17pt}
\end{table}

\paragraph{Improved Oblivious Storage.}
Oblivious storage and oblivious RAM (ORAM)
simulation solutions aim at minimizing the \emph{access overhead}, 
which is the amortized number of I/Os executed to perform a single
storage access request while keeping reasonable assumptions about the
size of private memory and of messages exchanged between the client
and the server\ifFull, e.g., sublinear in the size of outsourced
memory~$n$\fi. \ifFull In seminal work motivated by a software
protection application, \fi  Goldreich and
Ostrovsky~\cite{gold87,go-spsor-96} give two oblivious storage solutions
for a client with $O(1)$ private memory size:
the square root method with $O(\sqrt{n})$ overhead and the hierarchical method
with $O((\log n)^3)$ overhead. 
The hierarchical method was recently extended using techniques such as Bloom
filters~\cite{wsc-bcomp-08,williams_single} and cuckoo hash
tables~\cite{logn_oram,ostrovsky_soda,gm-paodor-11,pr-orr-20}.  E.g., the $\log n$-hierarchical solution of~\cite{logn_oram} uses
$O(\sqrt[d]{n})$ temporary memory and achieves $O(\log n)$ access
overhead, for~$d \ge 2$.  In~\cite{ostrovsky_soda} a similar method achieves $O({(\log
  n)}^2/ \log \log n)$ overhead with $O(1)$ private memory.  
\ifFull

\fi
All the above oblivious storage solutions rely on a periodic
data-oblivious shuffle of the server storage, a task done using
data-oblivious sorting. \ifFull This is the most expensive step of such solutions,
but since it happens only after a certain number of requests, it can
be amortized.  Otherwise, one can use techniques
of~\cite{gmot-orsew-11} to deamortize these solutions to bring the
worst case access overhead to be the same as the average case. \fi
Thus, we can use our \shuffle shuffle to implement the shuffle steps
of these algorithms.

Other oblivious storage solutions in~\cite{path_oram,scsl-orwcc-11,sss-tpor-11} allow the user to have~$o(n)$ private memory
and then by applying the same solution recursively on this private
memory bring it to $o(1)$ and adding a $\log n$ overhead.  For
example, Path ORAM~\cite{path_oram} uses $O(\log n)$ (stateful)
private memory and has $O((\log n)^2)$ access overhead.

As shown in~Table~\ref{tbl:oram}, oblivious storage solutions based
on our \shuffle shuffle are efficient and
practical.
\begin{table}[hbt]
\small
\begin{center}
  \caption{\label{tbl:oram}
 Comparison of oblivious storage solutions  for $n$ items.}
\ifLCNS
\vspace{-10pt}
\fi
  \begin{tabular}{ >{\arraybackslash}m{\ifFull 5.5cm \else 4.94cm \fi} | >{\raggedleft\arraybackslash}m{1.65cm} | >{\raggedleft\arraybackslash}m{1.65cm} | >{\raggedleft\arraybackslash}m{1.65cm} | >{\raggedleft\arraybackslash}m{1.65cm} }
  & {Private Memory}&
    {Message Size}&
      {External Memory} &
        {Access Overhead}\\
  \hline
  SquareRoot~\cite{go-spsor-96}  & $O(1)$ & $O(1)$ &  $O(n)$ & $O(\sqrt{n})$\\
  Path ORAM~\cite{path_oram} & $O(\log n)$& $O(\log n)$ & $O(n)$ & $O({(\log n)}^2)$ \\
  Bucket Hash Hierarchical~\cite{go-spsor-96}  & $O(1)$ & $O(1)$ &  $O(n \log n)$ & $O((\log n)^3)$\\
  Cuckoo Hash Hierarchical~\cite{logn_oram} ($d\ge2$)  & $O(\sqrt[d]{n})$ & $O(\sqrt[d]{n})$ &  $O(n)$ & $O(\log n)$\\
  \hline
  SquareRoot with \shuffle shuffle & $O(\sqrt{n})$&  $O(\sqrt{n})$& $O(n)$ & $O(1)$ \\ 
{Hierarchical with \shuffle shuffle} & $O(\sqrt[c]{n})$&  $O(\sqrt[c]{n})$& $O(n)$ & $O(c \log n)$ \\   
$(c \geq 3)$ & $O(\sqrt[c]{n} \log n)$&  $O(\sqrt[c]{n} \log n)$& $O(n)$ & $O(c)$ \\ 
  \end{tabular}
\end{center}
\vspace{-10pt}
\end{table}


\section{Preliminaries}

\ifFull \subsection{Cryptographic Primitives} \fi
\label{sec:crypto}

We analyze the security of cryptographic
primitives and our protocol in terms of the probability of success
for an adversary in
breaking them.
Let $k$ be a security parameter.
\ifFull
We consider a probabilistic adversary~$\adv$
whose running time is polynomial in~$k$.
\fi
We say that a scheme is secure
if for every probabilistic polynomial time (in $k$) (PPT)
adversary~$\adv$,
the probability of breaking
the scheme is at most
some \emph{negligible function} $\m{negl}(k)$, i.e., a function such that
$\m{negl}(k) < 1/|\m{poly}(k)|$ for every polynomial $\m{poly}(k)$.

\ifFull\paragraph{CPA Secure Encryption} \fi
We use a \emph{symmetric encryption scheme} $(\m{Enc}_{\key}, \m{Dec}_{\key})$
where $\key \leftarrow \{0,1\}^k$.
We require this scheme to be secure against the chosen-ciphertext
attack (CPA) for multiple messages~\cite{katz-lindell-07}.
\newcommand{\CPAsecure}{%
  During this attack an adversary~$\adv$ is allowed to make queries to
  oracles $\m{Enc}_\key$ and $\m{Dec}_\key$ on a polynomial number of
  sequences of $l$ messages of his choice.  After this ``warm-up''
  phase, $\adv$ comes up with two sequences, $M_0$ and $M_1$, of $l$
  messages and gives them to a challenger. The challenger secretly
  picks a bit~$b$ and calls $\m{Enc}_\key$ on each message of
  sequence~$M_b$.  Let $C$ be the sequence of ciphertexts that
  correspond to~$M_b$.  The challenger gives~$C$ to~$\adv$ who
  continues querying $\m{Enc}_{\key}$ and~ $\m{Dec}_{\key}$ on any
  sequence of ciphertexts except those in~$C$ for a polynomial number
  of times.  Finally, the adversary's task is to guess bit~$b$.  We
  call the above game \emph{Enc-IND-CPA} and say that~$\adv$ wins the
  game if he correctly guesses~$b$.  Then, $(\m{Enc}_{\key},
  \m{Dec}_{\key})$ is said to be secure if for all PPT adversaries,
  the probability of winning game Enc-IND-CPA is at most $1/2 +
  \m{negl}(k)$.  We omit using $\key$ when referring to $(\Enc,
  \Dec)$.  For an intuition behind Enc-IND-CPA secure encryption
  scheme, consider encrypting a message padded with a different random
  nonce each time it is encrypted. Hence, re-encryptions of the same
  plaintext look different with very high probability.
}
\ifFull \CPAsecure \else%
Informally, in the \emph{Enc-IND-CPA} security game, the adversary~$\adv$
picks two sequences of plaintexts and gives them to the challenger,
who encrypts one of them and returns the sequence of
ciphertexts, $C$, to the adversary.  $\adv$~is given full oracle
access to \ifFull the encryption and decryption algorithms\else
$\Enc_\key$ and limited oracle access to~$\Dec_\key$\fi,
where he cannot call $\Dec_\key$ on ciphertexts in~$C$.
We say that
\ifFull the encryption scheme \else($\Enc_\key, \Dec_\key)$~\fi
is secure if for every PPT adversary the probability
of guessing which plaintext
sequence was used to produce~$C$ is at most
$1/2+\m{negl}(k)$.
A formal definition is in the Appendix (Section~\ref{app:ccpasecure}).
\fi

\ifFull\paragraph{Pseudo-Random Permutation (PRP)}\fi 
Consider an array of $n$ elements that we wish to randomly rearrange
and let $D=[1,n]$ be the set of indices of~$A$.  We use a family of
\ifLCNS secure and \fi efficiently computable \emph{pseudo-random
  permutations} (PRPs) $\Pi_\seed: D \rightarrow D$, keyed using a
\ifFull $\seed$ from the set $\m{Seeds}(\Pi) = \{0,1\}^k$\else $k$-bit
seed\fi~\cite{katz-lindell-07}.
\newcommand{\PRPs}{%
  In order to pick a permutation, one picks a random $\seed$ from
  $\m{Seeds}(\Pi)$ and stores it privately.  Hence, whenever we refer
  to choosing a permutation, we refer to picking a new $\seed \in
  \m{Seeds}(\Pi)$.  Once the seed is fixed, we can evaluate $\Pi_\seed$ on a
  given index $x \in D$ via~$\Pi_\seed(x)$.

  The security of a family of PRPs is defined by comparing the
  behavior of a PPT adversary when he is given a truly random
  permutation versus a pseudo-random permutation picked using a random
  seed.  Formally, let $\mathcal{R}$ be the set of all permutations
  over the domain~$D$. A family of PRPs $\Pi$ is secure if for every
  probabilistic polynomial time adversary~$\mathcal{A}$, the
  probability of distinguishing between $\m{r} \xleftarrow{\$}
  \mathcal{R}$ and $\Pi_\seed$, where $\seed \xleftarrow{\$}
  \m{Seeds(\Pi)}$, is $1/2 + \m{negl}(k)$. 
%
} 
\ifFull \PRPs \else The security of a pseudo-random permutation is
formally defined in Appendix~\ref{app:prp}.  \fi

Given an array~$A$ of $n$ (key,value) pairs $(x,v)$ where $x \in [1,n]$,
we denote the permutation $\pi$ of $A$ as $B=\pi(A)$,
where $\pi = \Pi_\seed$ and
$B[x]=A[\pi(x)]$, $\forall x \in [1,n]$.
We will use the same notation when $A$ and $B$ are encrypted.
We refer to the original permutation of $A$,
as permutation~$\pi_0$. For~$A$, sorted using~$x$,
$\pi_0$ is the identity.

\ifFull \subsection{Storage Model}\fi
\label{sec:storage_model}
We consider a cloud storage model where a client stores a
dataset at a server while keeping a small amount of data in private
memory. For simplicity, we assume that the dataset is an array of
elements of equal size.  The client encrypts each element
and stores the elements at the server according to a \ifFull pseudo-random
permutation\else PRP\fi. The encryption key and the seed of the
\ifFull permutation \else PRP \fi are
kept private by the client and are not revealed to the server.

\ifFull\paragraph{Client Private Memory} We assume that the client
has access to a small private memory, $M$, which is comprised of permanent
storage and scratch space.  The permanent storage includes the
encryption key and the current seed of the permutation the client is
using, which together is of size~$O(1)$.  The rest of $M$ is used as a
scratch space while performing operations on the remote storage and is
not needed in between operations.  We require the size of the scratch
space to be sublinear in~$n$. Depending on the algorithm, we will use
a private storage of size $\sqrt[c]{n} \log n$ or $\sqrt[c]{n}$,
for an arbitrary integer $c \ge 2$.

Since the client can store a small number of elements at a time, we
assume that he does not try to request from the server more than he
can fit and process in~$M$.  Let the \emph{message size}, denoted
$\msgsize$, be the maximum elements that can be exchanged by the
client and server in one operation. We have that $\msgsize$ should be
less than the size of the scratch space.

\fi

\ifFull\paragraph{Server Memory}
The server supports the following
operations on an array~$S$.
\begin{itemize}
\item $\mathsf{get}(S, \loc)$: return element stored at a location $\loc$ in $S$.
\item $\mathsf{put}(S, \loc,e)$: put element $e$ to a location $\loc$ in $S$.
\item $\getRange(S, \loc,\ell)$: return an array $a$ with elements
at locations $\loc, \ldots, \loc+\ell-1$ in $S$, where $\ell \le \msgsize$.
\item $\putRange(S, \loc,a)$: write elements in array $a$
to locations $\loc, \ldots, \loc+ |a|-1$ in~$S$, where $|a| \le \msgsize$.
\item $\getRangeDist(S, \langle \loc_1, \ldots, \loc_c, \rangle ,
\langle \ell_1, \ldots, \ell_c, \rangle)$:
return an array $a$ with elements
at locations $\loc_i, \ldots,$ $\loc_i+\ell_i-1$ in $S$,$\forall i\in [1,c]$,
where $\sum \ell_i \le \msgsize$.
\item $\putRangeDist(S, \langle \loc_1, \ldots, \loc_c\rangle,
\langle a_1, \ldots, a_c\rangle)$: write elements in each array $a_i$
to locations $\loc_i, \ldots,$ $\loc_i+ |a_i|-1$ in~$S$, where $\sum |a_i| \le \msgsize$.
\end{itemize}
\else%
The server supports a standard set of operations on an array~$S$:
$\getRange(S, \loc,\ell)$ returns the elements at locations in range
$\loc, \ldots, \loc+\ell-1$;
$\putRange(S, \loc,a)$ writes the elements of array $a$
to locations $\loc, \ldots, \loc+ |a|-1$;
$\putRangeDist(S, \langle \loc_1, \ldots, \loc_c\rangle,$
$\langle a_1, \ldots, a_c\rangle)$~is~a~generalization of~$\putRange$
that can write several arrays
to non-sequential locations.
\fi
\ifFull Note that the number of elements in $\getRange$ and
$\putRange$ is limited by the maximum number of elements that
can be exchanged between the client and the server in one operation.
\fi

\ifFull
We assume that the server can perform operations $\mathsf{get}$ and
$\mathsf{put}$ in constant time and operations $\getRange$,
$\putRange$, $\getRangeDist$ and $\putRangeDist$
in time proportional to the number of elements
read or written,
but
each operation takes one~I/O.
\else
We assume the server performs the above operations in time
proportional to the number of elements read or written, but
each operation takes one~I/O.
\fi

\ifFull

\begin{definition}[Metadata]
\label{def:metadata}
The name of the array~$S$,
its size, location~$i$,~$l$, and the size of~$a$
are referred to as the \emph{metadata}
of a $\getRange$ or a $\putRange$ call.
Similarly for $\getRangeDist$ and $\putRangeDist$,
locations $\langle \loc_1, \ldots, \loc_c\rangle$,
$\langle \ell_1, \ldots, \ell_c, \rangle$ and sizes
of $a_1, \ldots, a_c$ are referred as metadata as well.
\end{definition}

\fi

\comments{
\subsection{The Balls-and-Bins Model}
We briefly review the analysis of the
well-known balls and bins model and
guide the reader to~\cite{} for more
information.
Consider $n$ balls thrown in $m$ bins,
where each ball is placed in a bin
independently and uniformly at random.
Using Chernoff bounds, one can
show that in the case when $m = n$
with probability at least~$1/n^c$, there is
no more than $c \log n$ balls in any bin.
While in the case when $m = n^{1/c}$,
with probability at least~$XXX$ there is
no more than $n^{1- 1/c}$ balls in any bin.
}


\section{Oblivious Shuffle Model}

In this section, we introduce a formal model for the oblivious shuffle
of an array.

\ifFull \subsection{Model} \fi
\label{sec:model}

\ifLCNS \vspace{-8pt} \fi

\begin{definition}[Shuffle]
We define a \emph{shuffle}~$\mathcal{S}$ as a pair of
algorithms $(\mathsf{Setup}, \mathsf{Shuffle})$, as follows.
\ifFull
\begin{itemize}
\else
\begin{itemize}[noitemsep]
\fi
\item $(s, S) \leftarrow \mathsf{Setup}(1^k)$
  Given security parameter $k$, run the key generation algorithm for a
  symmetric encryption scheme $(\m{Enc},\m{Dec})$ and store the key in
  secret state~$s$.  Also, allocate an auxiliary datastore~$S$.
\item $(\m{Enc}(\pi(A)), \alpha) \leftarrow \mathsf{Shuffle}(s, S,
  A, \pi)$
  Given secret state $s$,~auxiliary data store $S$,
  an array input~$A$,
  and a permutation $\pi$, return (1)~the
  encryption of the permutation of~$A$ according to~$\pi$;
  (2)~a transcript~$\alpha$ of the operations
  that transform~$\m{Enc}(A)$ to~$\m{Enc(\pi(A))}$ using auxiliary
  space~$S$.
\end{itemize}
Transcript~$\alpha$ is a sequence of $l$ (request, response) pairs
  $\langle (r_1, g_1), \ldots,(r_l, g_l)\rangle $ that capture
  the evolution of the datastore 
  via intermediate states~$S_1, S_2, \ldots, S_{l+1}$. 
  An invariant on each intermediate state is to store
  an encryption of some permutation of~$A$ along with
  any auxiliary data.
  For example~$S_1$
  contains~$\m{Enc}(A)$ and~$S_l$ contains~$\m{Enc}(\pi(A))$.
  Setting $S_1\leftarrow \{\m{Enc}(A), S\}$, $s_0 \leftarrow s$, $g_0 \leftarrow \bot$
  define the relationship between $r_i$ and $g_i$~as:
  \ifLCNS
  \vspace{-15pt}
  \fi
  \begin{align*}
    \langle ~ (s_i, r_i) \leftarrow \mathsf{GenRequest}(s_{i-1}, g_{i-1}), ~~
    (S_{i+1},g_{i}) \leftarrow \mathsf{GenResponse}(S_i, r_i) ~ \rangle .
  \end{align*}
  \ifLCNS
  \vspace{-15pt}
  \fi
  
 \noindent
Operations $\mathsf{GenRequest}$ and $\mathsf{GenResponse}$ generate
  a request~$r_i$ and a corresponding response~$g_{i}$ and
  are defined as follows:
\ifFull
\begin{itemize}
\else
\begin{itemize}[noitemsep]
\fi
\item $(s_i, r_i) \leftarrow \mathsf{GenRequest}(s_{i-1}, g_{i-1})$ Perform a
  computation based on a substructure of $S_{i-1}$, $g_{i-1}$, and
  generate next request to~$S_i$,~$r_i$.
\item $(S_{i+1}, g_{i}) \leftarrow \mathsf{GenResponse}(S_i, r_i)$
  Generate the response to request $r_i$ on~$S_i$: $S_{i+1}$ is the
  datastore $S_i$ updated according to $r_i$ and $g_{i}$ is the
  response to~$r_i$ with respect to~$S_i$.  For example, if $r_i$ is a
  get request, then $S_{i+1}=S_i$ and $g_{i}$ is the requested
  item. Also, if $r_i$ is a put request, then $g_{i}$ is empty.
\end{itemize}
The private state~$s$ is updated if needed after every request.
\end{definition}
\ifLCNS \vspace{-4pt} \fi

\noindent

\ifFull
\begin{figure}[hbt!]
  \vspace{-12pt}
\begin{center}
\else
  \begin{wrapfigure}{r}{2.5in}
    \leavevmode\hspace*{-6pt}
\fi
\ifLCNS
\vspace{-10pt}
\fi

\includegraphics[scale=\ifFull 0.45 \else 0.33 \fi ]{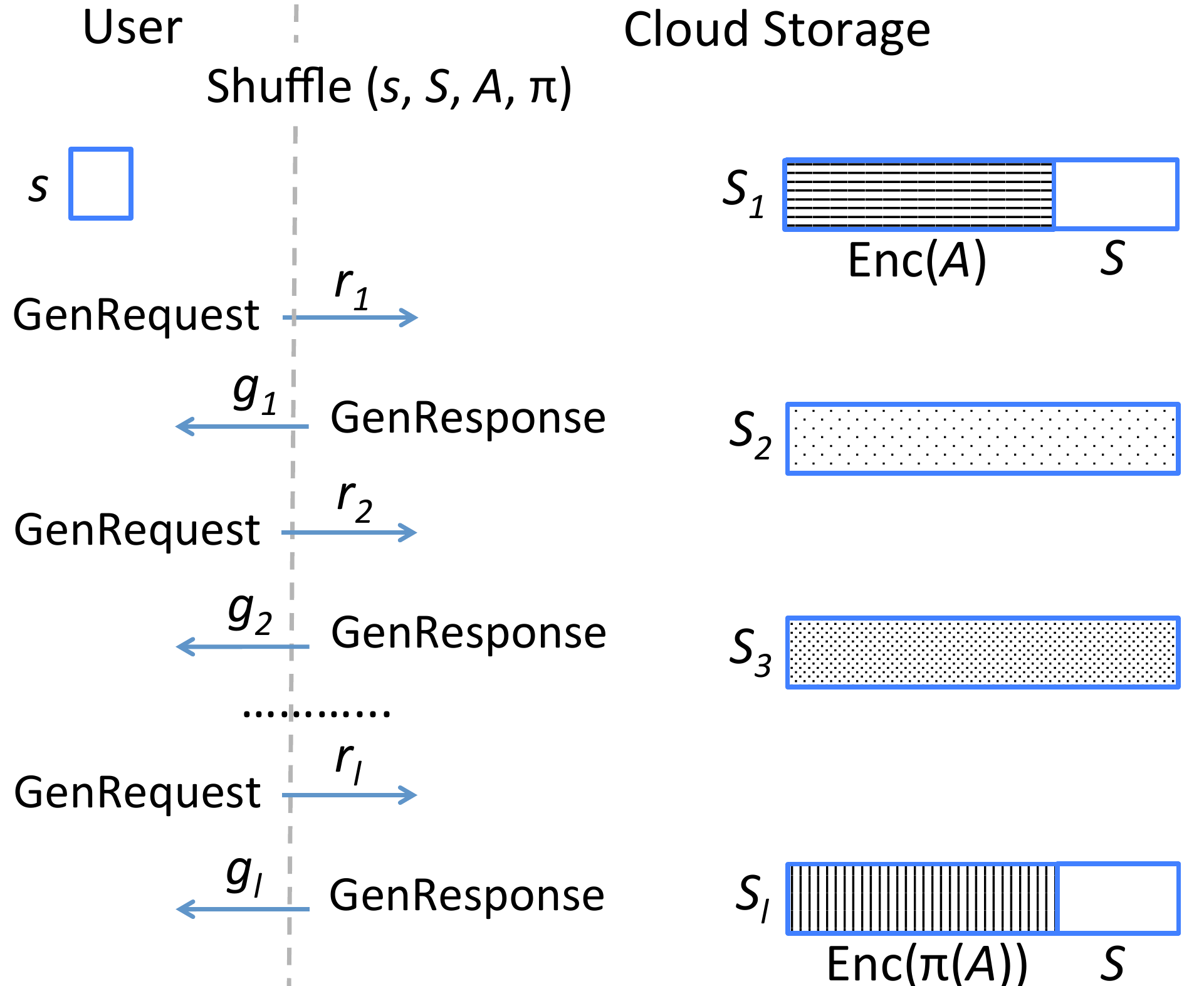}
\ifLCNS
\vspace{-14pt}
\fi
\caption{\label{fig:protocol}%
  Illustration of a shuffle $\mathcal{S}$ executed by the client and
  the cloud storage server (Section~\ref{sec:model}).  }
\ifFull
\end{center}
\end{figure}
\else
\vspace{-10pt}
  \end{wrapfigure}
\fi
In our cloud storage model, a shuffle~$\mathcal{S}$
is a distributed computation executed by the user and the server.
The user runs \ifFull
the $\mathsf{Setup}$ algorithm 
\else 
$\mathsf{Setup}$
\fi
to generate
the encryption key
and requests the server to allocate some space. 
He then runs \ifFull the $\mathsf{Shuffle}$ algorithm \else $\mathsf{Shuffle}$ \fi by accessing~$S$
through the server, that is, issuing requests to the server using~$\m{GenRequest}$.
The set of possible requests is defined by the storage
model supported by the server. 
\ifFull In our case this set is
$\{\mathsf{get},$ $\mathsf{put},$ $\getRange,$ $\putRange$, $\getRangeDist$, $\putRangeDist\}$
(see~Section~\ref{sec:storage_model}).
\fi
For every request~$r_i$, the server executes~$\mathsf{GenResponse}$,
locally updating~$S$ for put requests
and returning to the user the queried items for get requests.
(See Figure~\ref{fig:protocol} for an illustration.)

\ifFull
\subsection{Security} \fi

We capture the security of a shuffle $\mathcal{S}$ against a curious
server in the cloud storage model as a game, \emph{Shuffle-IND},
between~$\mathcal{S}$ and~a probabilistic
polynomial-time bounded (PPT) adversary~$\mathcal{A}$.
In this game, the inputs and outputs of~$\mathcal{S}$
that are revealed
to the server in the cloud storage model
are also revealed to~$\mathcal{A}$.
However, the secret state~$s$ kept by the client, any updates to it and
computations inside of~$\m{GenRequest}$ are kept private,
since in the cloud model they are also hidden and happen on the user side.

The game starts with $S$ running
$\m{Setup}$ once, allocating at the server space
to be used in subsequent computations.
$\mathcal{A}$ then tries to ``learn''
how~$\mathcal{S}$ performs the shuffle on a
sequence of $m_1$~input arrays and permutations picked by~$\mathcal{A}$.
Based on what~$\mathcal{A}$ learns, she picks
two challenges $(A_0, \tau_0)$ and $(A_1, \tau_1)$
each consisting of a data array to be permuted using
a corresponding permutation.
$\mathcal{S}$~secretly picks one pair and
performs the shuffle according to it.
The adversary is then allowed to observe~$\mathcal{S}$
shuffling another sequence of $m_2$ (input, permutation) pairs, also
picked by~$\mathcal{A}$.
Finally,~$\mathcal{A}$ has to guess which challenge
pair (input, permutation) $\mathcal{S}$ picked to shuffle.
Note that at any time, $\mathcal{A}$
can ask $\mathcal{S}$ to perform a shuffle
on any combination of $A_0$ or $A_1$ and permutations~$\tau_0$~or~$\tau_1$.

\ifFull We now give a formal definition of the game.  \else A formal
definition of the Shuffle-IND game is given in the Appendix
(Section~\ref{app:shufflesecure}).  \fi

\newcommand{\Shufflesecure}{%
\begin{definition}[Shuffle-IND]
\label{def:ind-cpa}
Let $A$ be an input array of size $n$ picked by
a PPT adversary $\mathcal{A}$.
$\mathcal{A}$ and~$\mathcal{S}$ engage in the following
game.
\begin{description}
\item[$\mathcal{S}$:] $(s,S) \leftarrow \setup(1^k)$.
\item for $j \in \{1, \ldots, m_1\}$, where $m_1$ is $\m{poly}(k)$:
   \begin{description} 
       \item $\mathcal{A}$: Pick array~$B_j$ and a permutation $\rho_j$.
       \item $\mathcal{S}$: Execute $(O_j, \alpha) \leftarrow \mathsf{Shuffle}(s, S, B_j, \rho_j)$.
       Reveal $O_j$ and~$\alpha$ to $\adv$.
     \end{description} 
  \item $\mathcal{A}$: Pick $(A_0, \tau_0)$ and $(A_1, \tau_1)$ of the same length.
   \item $\mathcal{S}$: Pick a secret bit $b$ and execute
 $(O, \alpha) \leftarrow \mathsf{Shuffle}(s, S, A_b, \tau_b)$. Reveal~$O$ and $\alpha$ to~$\adv$.
   \item for $j \in \{m_1+1, \ldots, m_1+m_2\}$, where $m_2$ is $\m{poly}(k)$:
   \begin{description}
      \item $\mathcal{A}$: Pick an encrypted array~$B_j$ and a permutation $\rho_j$.
       \item $\mathcal{S}$: Execute
        $(O_j, \alpha) \leftarrow \mathsf{Shuffle}(s, S, B_j, \rho_j)$. 
        Reveal $O_j$ and~$\alpha$ to $\adv$.
   \end{description}
   \item $\mathcal{A}$: Output bit $b'$.
   \item The adversary wins the game if $b=b'$.
   \end{description}   
\end{definition}
}

\ifFull \Shufflesecure \fi

\ifFull Using the Shuffle-IND game, we now define an \emph{oblivious
  shuffle}. \fi

\ifLCNS \vspace{-4pt} \fi
\begin{definition}[Oblivious Shuffle]
\label{def:obl-shuffle}
Let $k$ be the security parameter
and $n$ be a polynomial in $k$.
$\mathcal{S}$ is an oblivious shuffle over $n$~items
if for every \ifFull probabilistic adversary $\mathcal{A}$ running in time
polynomial in~$k$\else PPT adversary\fi,
the probability of winning the Shuffle-IND game\ifFull, $\Pr[b=b'] $, satisfies
$$\Pr[b=b'] \le \frac{1}{2} + \m{negl}(k).$$\else~is at most $1/2 +
\m{negl}(k)$. \fi
\end{definition}
\ifLCNS \vspace{-4pt} \fi

\ifFull \subsection{Performance} 

We measure the performance of an oblivious shuffle of an array of~$n$
items~$A$ using
the following parameters:
\begin{description}
\item \emph{Number of Requests}: the number of calls the user and the server
make to each other while performing the shuffle.
In the protocol, the number of requests is expressed using~$l$,
the length~of the transcript~$\alpha$.
This parameter measures the efficiency of a shuffle algorithm.
\item \emph{Message Size:} the maximum number of items
that can be sent between the user and the server in a
single operation,
i.e., the number of items sent in a request~$r_i$ or~a response~$g_i$.
\item \emph{User Private Memory:} the size of user's private memory~$s$ measured in terms
of the number of items of~$A$ that can be stored temporarily
plus the space required
to store an encryption key and a seed for a pseudo-random permutation.
The key and the seed are stored in the stateful part of~$s$ which is of a constant size,
while the space required to store the items is the scratch space
required only during the execution of the shuffle algorithm and is erased
afterwards.
\item \emph{User Computation}: the amount of computation the user performs
during the~$\m{Shuffle}$ algorithm, i.e., computation inside of~$\m{GenRequest}(s_i,g_{i-1})$.
\item \emph{Server Storage}: additional space required at the
server besides storing $n$~encrypted items of~$A$. This is captured by the datastore~$S$
in the protocol.
\item \emph{Server Computation}:
the computation performed by the server during~$\m{GenResponse}(S_i, r_i)$.
\end{description}

\else

\fi

\ifFull

In the cloud storage model, we wish to devise efficient shuffle solutions
that consider realistic
assumptions about the user.
For example, a user private memory of size~$n$
leads to a trivial and efficient solution where the user can download
the data from the server, shuffle and re-encrypt the items,
and send them back.
Instead we wish to construct solutions that add a small overhead
over a non-oblivious solution in terms
of the number of requests and client computation, but
assume private memory and message size sublinear in~$n$.
We will also assume that the server and the user can exchange
more than one item in one message or call.
The size of each message is limited by the size of user's
memory since this is how many items she can process at a time.

Finally, we consider current cloud storage providers that store
user data and can efficiently retrieve
and write small amounts of data as requested by the user.
Such storage providers charge their users
according to a pay-per-use
model. Hence, we wish to limit space requirements
of our solutions.

\fi

\section{The \shuffle Shuffle}
\label{sec:shuffle}

In this section, we present a basic version of our \shuffle shuffle
algorithm. An optimized version is given in the next
section. 
\ifFull
The basic \shuffle shuffle uses private memory and messages of size $O(\sqrt{n} \log n)$,
$O(n \log n)$ server storage 
and processes in $O(\sqrt{n})$ requests.
The optimized \shuffle shuffle has a smaller
message and memory overhead,
and requires constant number more accesses to the server.
In particular, it relies on private memory and messages
of size $O(\sqrt{n})$, $O(n)$ server storage
and $O(\sqrt{n})$ accesses.
\else
Background and intuition for  \shuffle shuffle are given in
the Appendix (Section~\ref{app:intuition}).
\fi

\newcommand{\Shuffleintiuition}{%

An important ingredient of our solution is probabilistic encryption.
Everything stored at the server is encrypted and every time
an item is read from the server, the user decrypts it, re-encrypts it
and writes it back. Since we use CPA-secure encryption,
the ciphertexts produced for the same item always look
different and, hence, the server, aka the adversary,
cannot tell whether the ciphertexts correspond to the same item or not.

The goal of our oblivious shuffle is to reveal
to the adversary only information that she would
expect to see in a random permutation
with very high probability.
For example, even for a secret permutation
picked uniformly at random,
the adversary can guess with probability~$1/n$
that the first element of the input array of size~$n$ appears
in some location~$i$ of the output permutation.
Continuing with this intuition, suppose we split the input array of size~$n$
into~$\sqrt{n}$ buckets where every bucket has
$\sqrt{n}$ items, and similarly for the output
permutation.
In this case, using the analysis of the balls-and-bins model,
the adversary can guess that with high probability,
each bucket in the output permutation
has  $O(\log n)$ elements
from any particular bucket of the input array.

We build on the observation above
and move elements from input buckets
to output buckets
by imitating the balls-and-bins process.
That is, if the size of the
input bucket is the same as the number
of output buckets, we place $O(\log n)$
elements of every input bucket in every output bucket.
If the number of elements in a bucket is much larger than
the number of output buckets, i.e., the number
of output buckets is~$n^{1/c'}$ while input bucket has
$n^{1/c}$ items, for constants $c$ and $c'$ s.t. $c' > c$, then
we move~$O(n^{1/c - 1/c'})$ items
to every output bucket.

The reader may have noticed that in the first example
above, elements of an input bucket of size $\sqrt{n}$
are placed in $\sqrt{n}$ output buckets in batches of~$O(\log n)$
items. What are the additional items?
These additional items are referred to as \emph{dummy items}.
A dummy item is a real item with a fake key and some nonce
value such that the size of the dummy and real item are equal.
Moreover, since all the data is re-encrypted every time
it is written to the server, the server cannot tell
which items in a batch are real and which are dummy.

}

\ifFull \Shuffleintiuition \fi

\ifFull \subsection{Overview} \else
\paragraph{Overview}
\fi

We assume that each element in the input array, $A$, is a key-value
pair $(x,v)$ for every $x\in D$.
The algorithm proceeds
in two phases: \emph{distribution} and \emph{clean-up}.
For each phase,
the data store $S$ is split in several logical subparts: $I$, $T$ and $O$.
$I$~is an array containing $n$~encrypted items of the input~$A$
permuted according to some permutation~$\pi_0$ (initially,
$\pi_0$ is the identity).
$T$ is an encrypted temporary array used during the shuffle;
finally, after the shuffle is done $O$ contains the output of the shuffle,
i.e., re-encrypted items
of~$I$ permuted according to~$\pi$.
If the shuffle needs to be executed again,
the user sets $I \leftarrow O$ and $\pi_0 \leftarrow \pi$.
We further divide each subpart of~$S$
in buckets of equal size. The number of buckets
and how it effects the runtime of the algorithm
will be determined later.

During the distribution phase items of every bucket of~$I$ along with
some dummy items are re-encrypted and distributed equally among
buckets of~$T$. Here, the distribution of item $(x,v)$ is done
according to its final location $\pi(x)$ in $O$.
After the distribution phase the intermediate array~$T$ contains
real and dummy items. Moreover, the items
appear in correct buckets but not in correct positions within each bucket.
The clean-up phase remedies this by reading one bucket at
a time, removing dummy items, distributing the real items
correctly within the bucket and writing the bucket to $O$.

The distribution phase alone cannot produce every possible permutation
since the number of items sent from a bucket of $I$ to a
bucket of $T$ is limited. E.g., the identity permutation cannot be
achieved.  To rectify this, we execute two shuffle passes.  First, for
a permutation~$\pi_1$ picked uniformly at random and then for the
desired permutation~$\pi$.  Although this framework still allows
failures, our algorithm can produce every permutation,
failing with very small probability independent of the desired
permutation~$\pi$.

\ifFull \subsection{Algorithm} \else
\paragraph{Algorithm} \fi
\label{sec:algorithm}

The complete shuffle algorithm~$\m{shuffle}(I, \pi, O)$
is shown in~Algorithm~\ref{alg:shuffle-alg}
\ifFull
where $I$ is the encryption of the input array~$A$,
$\pi$ is the desired permutation
and the last argument is the output array where the
algorithm is expected to put an encryption of~$\pi(A)$.
We omit the $\setup$ from the discussion since
it is trivial: the client simply runs a key generation
to setup a secure encryption scheme
and a seed generator for pseudo random
permutations.

\else .
\fi
The algorithm makes two calls to
$\m{shuffle\_pass}$
\ifFull (Algorithm~\ref{alg:shuffle-pass}),
\else
(Algorithm~\ref{alg:shuffle-pass} in the Appendix, Section~\ref{app:shufflepass}),
\fi
first for a random permutation $\pi_1$
and then for the desired permutation~$\pi$.
We proceed with the description
of $\m{shuffle\_pass}(I, T, \rho, O)$
where $I$ and~$O$ are defined as in
$\m{shuffle}$, $T$ is a temporary
array and $\rho$ is the desired permutation.
We use the convention of giving arrays $I$, $T$ and~$O$
as inputs to the shuffle pass algorithm for the ease of explanation.
In the cloud storage scenario that we consider here, one simply specifies the location
where these arrays are stored remotely\ifFull, e.g., the name of a file and a
location within it\fi.
Given an input array of size $n$,
this method has messages
and client's private memory of size $O(\sqrt{n} \log n)$
and server memory of size $O(n \log n)$.
These user and server memory requirements are temporary
and are reduced to
$O(1)$ and $n$,~respectively, when the shuffle is finished.
As mentioned before, method $\shufflepass$ is split into a
distribution phase and a clean-up phase.

\begin{algorithm}[t!]
\caption{\label{alg:shuffle-alg}%
  The complete \shuffle shuffle algorithm, $\mathsf{shuffle}(I, \pi, O)$, where
  the user can read and store in private memory~$M$ up to $\sqrt{n} \times p \log n$~elements,
  $p \ge \me$.}
$\bm{I}$: array of $n$ encrypted elements $(x,v)$;
$\bm{\pi}$:  permutation;
 $\bm{O}$: permutation of $I$ according to $\pi$, where every element is re-encrypted.
\begin{algorithmic}[1]
    \STATE Let $\pi_1$ be a random permutation
    \STATE Let $T$ be an empty array of size $n \times p \log n$ stored remotely 
    \STATE $\m{shuffle\_pass}(I, T, \pi_1, O)$
    \STATE $I \leftarrow O$
    \STATE $ \m{shuffle\_pass}(I, T, \pi, O)$
 \end{algorithmic}
\end{algorithm}

\newcommand{\Shufflepass}{%
\begin{algorithm}[hbtp]
\caption{\label{alg:shuffle-pass}%
  Single pass $\mathsf{shuffle\_pass}(I, T, \rho, O)$ of the \shuffle shuffle algorithm, where
  the user can read and store in private memory~$M$ upto $\sqrt{n} \times p \log n$~elements.}
$\bm{I}$: array of $n$ encrypted elements $(x,v)$;
$\bm{T}$: auxiliary array that fits $n \times p \log n$ encrypted elements,
where $p$ is a constant that is a parameter of the algorithm;
$\bm{\rho}$: permutation;
 $\bm{O}$: permutation of $I$ according to $\rho$, where every element
is re-encrypted.
\begin{algorithmic}[1]
    \STATE $\m{max\_elems} \leftarrow p \log n$
    \STATE $\m{num\_buckets} \leftarrow \sqrt{n}$
    \STATE \COMMENT{\emph{Distribution phase: distribute elements of~$I$ into~$T$}}
    \FOR[read buckets of $I$]{$\mathsf{id}_I \in \{0, \ldots, \m{num\_buckets}-1\}$}\label{line:simple-dist-for-start}
    	\STATE $\bucket_M \leftarrow \mathsf{getRange}(I,\mathsf{id}_I\times \sqrt{n}, \sqrt{n})$ \label{line:simple-getRange1}
	\STATE $\bucket[rev]_M \leftarrow \mathsf{empty\_map}()$ \COMMENT{Reverse map of bucket ids in $T$ to elements}~
	\FOR[Assign elements their bucket ids in $T$]{$e \in \bucket_M$}
	         \STATE $(x,v) \leftarrow \Dec(e)$
	         \STATE $\mathsf{id}_T \leftarrow  \lfloor \rho(x)/\sqrt{n} \rfloor$  \COMMENT{Bucket id of element $(x,v)$ in $T$ according to its location in $O$}
	         \STATE $\bucket[rev]_M[\mathsf{id}_T].\mathsf{add}(\Enc(x,v))$  \COMMENT{Collect elements of same bucket} \label{line:simple-enc}
	\ENDFOR
	 \STATE \COMMENT {Can be done via a single $\putRangeDist$ for $\sqrt{n}$ batches of size $\m{max\_elems}$}
	\FOR[Distribute $\bucket_M$ in buckets of $T$]{$\mathsf{id}_T \in \{0, \ldots,\m{num\_buckets}-1\}$} \label{line:simple-for-start}
		\IF{$\mathsf{size}(\bucket[rev]_M[\mathsf{id}_T] ) > \mathsf{max\_elems}$} \label{line:simple-fail-if-start}
			\STATE \textbf{fail}  \label{line:fail} \COMMENT{$\rho$ moves more than $p\log n$ elements from a bucket of $I$ to a bucket of $T$}
		\ENDIF \label{line:simple-fail-if-end}
		\STATE \COMMENT{Hide how many real elements go to $T$'s buckets by padding with encrypted dummies}~
		\STATE $\bucket[rev]_M[\mathsf{id}_T] \leftarrow \m{dummy\_pad}(\bucket[rev]_M[\mathsf{id}_T], \m{max\_elems})$ \label{line:simple-pad}
		
		\STATE \COMMENT{Write a batch of $\mathsf{max\_elems}$ from every bucket of $I$ to every bucket of~$T$}~
		\STATE $\mathsf{putRange}(T,\mathsf{id}_T\times \sqrt{n} \times \mathsf{max\_elems} +\mathsf{max\_elems}\times \mathsf{id}_I, \bucket[rev]_M[\m{id}_T])$  \label{line:simple-putRange1} 
	\ENDFOR	  \label{line:simple-for-end}
    \ENDFOR \label{line:simple-dist-for-end}
    \STATE \COMMENT{\emph{Clean-up phase: clean $T$ and write the result to~$O$}}
    \FOR[read buckets of $T$]{$\mathsf{id}_T \in \{0, \ldots, \m{num\_buckets} -1\}$}    
    	\STATE $\bucket_M \leftarrow
		\mathsf{getRange}(T,\mathsf{id}_T\times \sqrt{n} \times\mathsf{max\_elems}, \sqrt{n} \times\mathsf{max\_elems})$ \label{line:simple-getRange2}
	\STATE \COMMENT{Decrypt the bucket, remove $\dummy$,
              sort real elements using $\rho$ and re-encrypt}
	\STATE  $\bucket_M \leftarrow \m{clean}(\bucket_M)$ \label{line:simple-clean}
	\STATE \COMMENT{The distribution phase guarantees that $\bucket_M$ contains exactly $\sqrt{n}$ elements}~
	\STATE $\mathsf{putRange}(O,\mathsf{id}_T\times \sqrt{n},\bucket_M)$	 \label{line:simple-putRange2} 
    \ENDFOR
 \end{algorithmic}
\end{algorithm}
}

\ifFull \Shufflepass \fi

\paragraph{Distribution Phase}
The distribution phase of method $\shufflepass$ \ifFull
(Algorithm~\ref{alg:shuffle-pass})\fi, shown in
Figure~\ref{fig:distr-phase}, imitates throwing balls into bins by
putting elements from every bucket of~$I$ to every bucket of~$T$
according to the permutation $\rho$.  In particular, a batch of $p
\log n$ encrypted elements from every bucket of $I$ is put in every
bucket of $T$ ($\bucket[rev][\m{id}_T]$ in the pseudo-code).  Here,
$p$ is a constant and is determined in the analysis.

\ifFull
\begin{figure}[t!]
\else
\begin{figure}[hbt]
\fi
\begin{center}
\includegraphics[scale=\ifFull 0.45 \else 0.33 \fi]{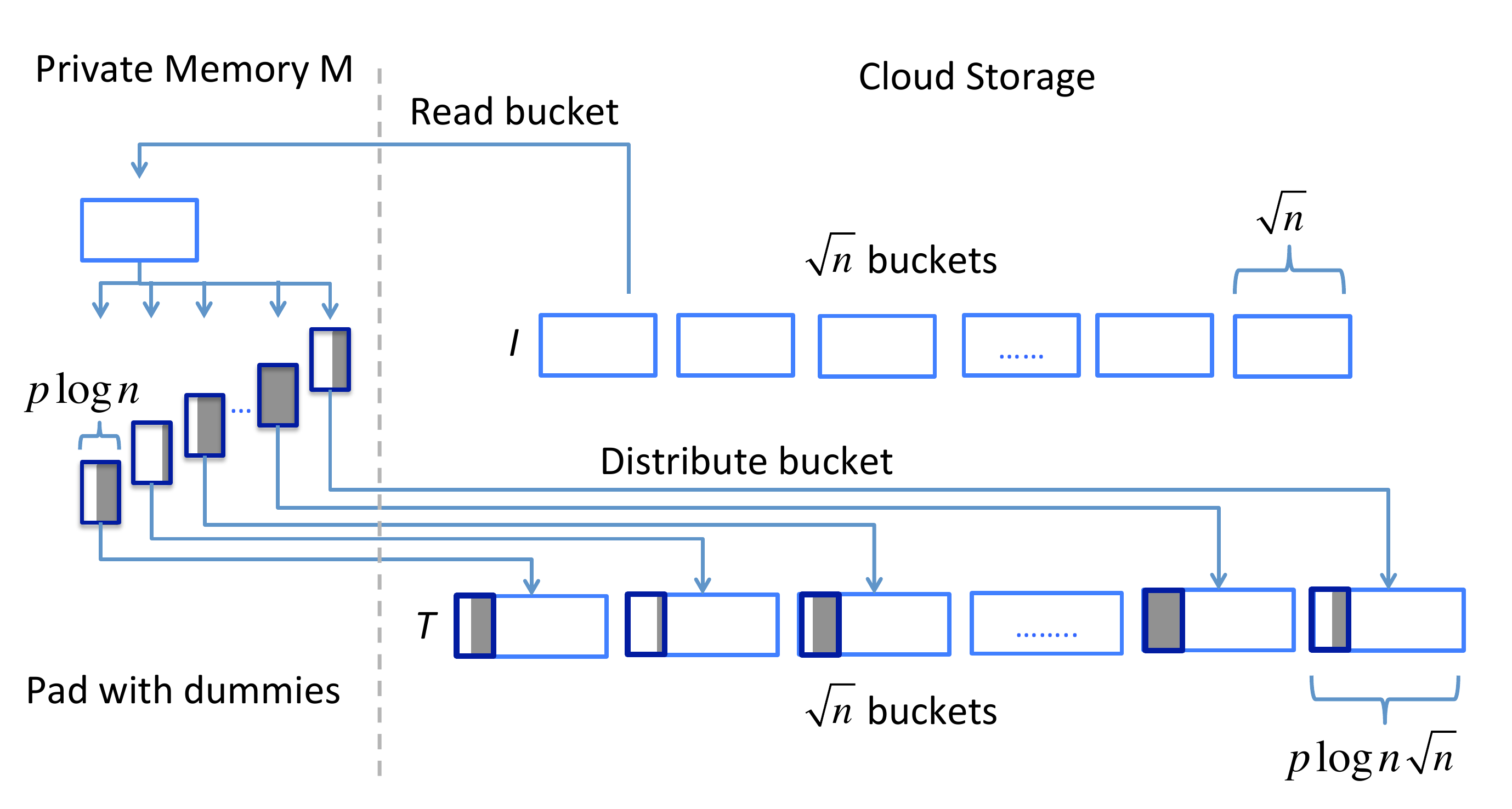}
\vspace{-12pt}
\caption{Illustration of the distribution phase of
  $\m{shuffle\_pass}$ (Algorithm~\ref{alg:shuffle-pass}).
  Shadowed regions represent dummy values added
  to pad each batch to the size of~$p \log n$. 
  The batches are encrypted, hence, one cannot tell
  where and how many dummy values there are in each batch.}
\label{fig:distr-phase}
\end{center}
\vspace{-5pt}
\end{figure}

Each batch contains real and dummy elements.
The first batch is filled in with real elements $(x,v)$
that would go to the first bucket in $O$ according
to~$\rho$, i.e., the elements for which
$\lfloor \rho(x)/\sqrt{n} \rfloor = 0$. Similarly for every
other batch.
Since a bucket of $I$ contains only $\sqrt{n}$ elements and we put
$\sqrt{n} \times p \log n$ elements in total in all buckets in~$T$, most batches will
have less than~$p \log n$ elements.
We pad such batches with dummy elements
to hide where and how many elements of $I$'s bucket are placed in~$T$~(line~\ref{line:simple-pad}).
Note that a batch is re-encrypted before it is written to $T$, completely hiding the content
and making it impossible to recognize where dummy or real elements are (lines~\ref{line:simple-enc} and~\ref{line:simple-pad}).
If according to $\rho$ more than $p \log n$ elements are mapped from
a bucket of $I$ to a bucket of $T$, the algorithm fails (line~\ref{line:fail}). We later consider
what happens in case of a failure.
\ifFull We note that $\sqrt{n}$ calls to~$\putRange$
in the loop in lines~\ref{line:simple-for-start}-\ref{line:simple-for-end}
is for the ease of explanation only.
These calls can be substituted by a single call~$\putRangeDist$,
putting $\sqrt{n} \times p \log n$ elements all at once.
Hence, for every bucket read from~$I$, there is only \emph{one}
corresponding write to~$T$.
\else
We note that all $\sqrt{n} \times p \log n$ elements from batches of a single bucket can be written
to~$T$ using a single call to~$\putRangeDist$.
\fi

\paragraph{Clean-up Phase}
The distribution phase leaves $T$ with two problems:
first, though the elements are in correct buckets according to $\rho$ they are not in
the correct locations inside the buckets, and second, $T$ contains dummy elements.
To remedy these problems, the clean-up phase
in~Algorithm~\ref{alg:shuffle-pass},
illustrated in Figure~\ref{fig:cleanup-phase},
proceeds by reading buckets of~$T$ of size
$\sqrt{n} \times p \log n$ and writing in their place buckets
of size $\sqrt{n}$. 

\ifFull  \begin{figure}[bt]  \else \begin{figure}[hbt] \fi
\begin{center}
\includegraphics[scale=\ifFull 0.45 \else 0.33 \fi]{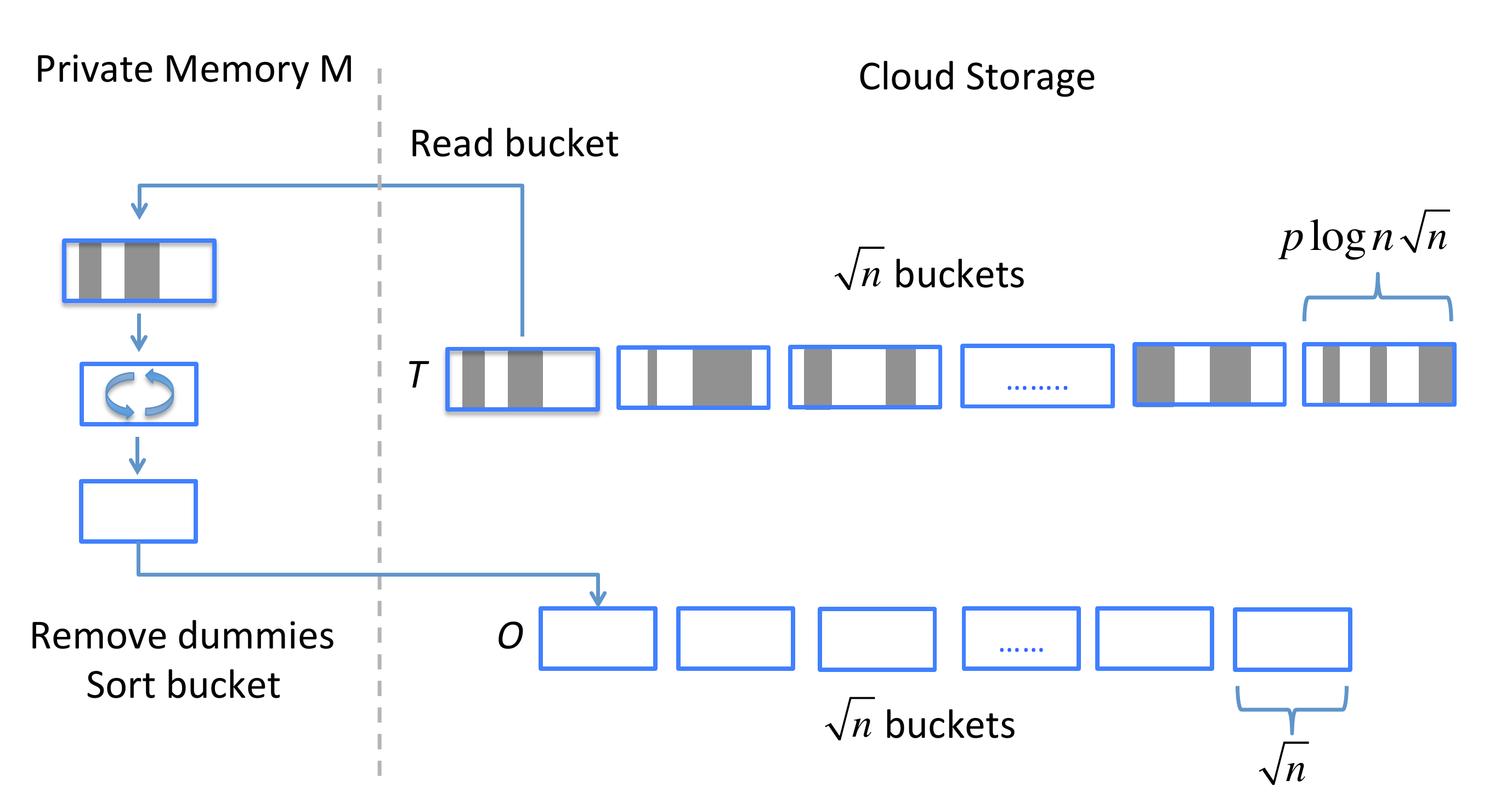}
\vspace{-12pt}

\caption{An illustration of the clean-up phase of $\m{shuffle\_pass}$  (Algorithm~\ref{alg:shuffle-pass}). Shadowed regions represent dummy values
that are removed during the clean-up phase.}
\label{fig:cleanup-phase}
\end{center}
\vspace{-14pt}

\end{figure}

When processing each bucket, the algorithm removes
dummy elements, sorts
the remaining content of every bucket according to their final location in~$O$~(line~\ref{line:simple-clean}).
It is important to note that each written bucket contains exactly $\sqrt{n}$ elements
before it is being written back. This follows from the fact that
elements were distributed to buckets according to the permutation~$\rho$
and the algorithm failed in the distribution phase for those~$\rho$
that would have resulted in more than~$\sqrt{n}$ elements in each bucket.
\comments{
We note that for the $\sqrt{n}$ case the first part of~Algorithm~\ref{fig:sort-pass}
is applicable, since we can read buckets of size $\sqrt{n}$ and sort
them in private memory~$M$.}

\paragraph{Performance} 
The performance of the \shuffle shuffle is summarized in the following
theorem.
\proofInApp{app:performancebasic}

\begin{theorem} \label{theorem:performance}
Given an input array of size $n$, 
the Melbourne shuffle (Algorithm~\ref{alg:shuffle-alg})
executes $O(\sqrt{n})$ operations, each exchanging a message of  size
$O(\sqrt{n} \log n)$, between
a user with private memory
of size  $O(\sqrt{n} \log n)$ and a server
with storage of size $O(n \log n)$. Also, the user and server
perform $O(n \log n)$ work.
\end{theorem}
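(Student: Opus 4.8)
The plan is to treat Theorem~\ref{theorem:performance} as a resource-accounting argument, since $\mathsf{shuffle}$ (Algorithm~\ref{alg:shuffle-alg}) is just two invocations of $\shufflepass$ (Algorithm~\ref{alg:shuffle-pass}) plus the pointer update $I \leftarrow O$, so the factor of two is absorbed into the $O(\cdot)$. I would therefore analyze a single pass and then double the counts. Throughout I would use the fixed parameters $\m{num\_buckets} = \sqrt{n}$ and $\m{max\_elems} = p\log n$, so that each of the $\sqrt{n}$ buckets of $I$ and $O$ holds $\sqrt{n}$ items, while each of the $\sqrt{n}$ buckets of $T$ holds $\sqrt{n}\cdot p\log n$ items and $T$ as a whole holds $n\cdot p\log n$ items.

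For the \textbf{I/O count}, I would argue that the distribution phase runs $\sqrt{n}$ outer iterations, each issuing exactly one $\getRange$ on $I$ together with one write to $T$. The subtle point, and the step I expect to matter most, is that the $\sqrt{n}$ separate $\putRange$ calls appearing in the inner loop of the pseudocode must be coalesced into a single $\putRangeDist$ call per outer iteration; otherwise the naive count would be $O(n)$ I/Os rather than $O(\sqrt{n})$. This coalescing is legitimate because the server supports $\putRangeDist$ (Section~\ref{sec:storage_model}) and the $\sqrt{n}$ batches written out of one $I$-bucket total $\sqrt{n}\cdot p\log n$ items, within the allowed message size. Hence the distribution phase costs $O(\sqrt{n})$ I/Os. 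The clean-up phase likewise runs $\sqrt{n}$ iterations, each with one $\getRange$ of a $T$-bucket and one $\putRange$ to $O$, giving another $O(\sqrt{n})$ I/Os. Summing and doubling yields $O(\sqrt{n})$ operations overall.

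For \textbf{message size} and \textbf{private memory}, I would observe that the largest object transmitted or buffered at any moment is a $T$-bucket, of size $\sqrt{n}\cdot p\log n = O(\sqrt{n}\log n)$: in the distribution phase this is the reverse map of $\sqrt{n}$ padded batches held in $M$ before the single $\putRangeDist$, and in the clean-up phase it is the bucket read by $\getRange$ before $\m{clean}$ reduces it to $\sqrt{n}$ items. Adding the $O(1)$ permanent storage for the key and seed gives private memory $O(\sqrt{n}\log n)$, matching the message bound. For \textbf{server storage}, the datastore $S = \{I, T, O\}$ is dominated by $T$, of size $n\cdot p\log n = O(n\log n)$, with $I$ and $O$ contributing $O(n)$ each.

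Finally, for \textbf{work}, I would count encryptions and decryptions, which dominate both user and server effort. In one pass the user decrypts the $n$ real items of $I$ and produces the contents of $T$, a total of $n\cdot p\log n$ (re-)encrypted items across all buckets, then in clean-up decrypts these $O(n\log n)$ items and re-encrypts the $n$ survivors; every element is touched a constant number of times, so each pass is $O(n\log n)$ work and the whole algorithm is $O(n\log n)$. I do not anticipate a genuine difficulty beyond the $\putRangeDist$ coalescing observation above; the remaining bounds follow directly from the fixed bucket dimensions, and the possibility of a \textbf{fail} outcome is irrelevant to performance, since an aborted run uses no more resources than a completed one.
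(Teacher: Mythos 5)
Your proposal is correct and follows essentially the same route as the paper's proof: analyze one $\shufflepass$, coalesce the $\sqrt{n}$ inner-loop $\putRange$ calls into a single $\putRangeDist$ per bucket (the paper makes exactly this observation first), count $2\sqrt{n}$ $\getRange$, $\sqrt{n}$ $\putRangeDist$, and $\sqrt{n}$ $\putRange$ calls per pass, and read off the memory and work bounds from the size of $T$. Your accounting of the $O(n\log n)$ work via encryption/decryption operations is slightly more explicit than the paper's one-line remark, but the argument is the same.
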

\newcommand{\ProofPerformancebasic}{%
\begin{proof}
We first note that $\sqrt{n}$ calls to~$\putRange$
in the loop in lines~\ref{line:simple-for-start}-\ref{line:simple-for-end}
is for the ease of explanation only.
These calls can be substituted by a single call~$\putRangeDist$,
putting $\sqrt{n} \times p \log n$ elements all at once.

A single shuffle pass in Algorithm~\ref{alg:shuffle-pass}
requires $2\sqrt{n}$ calls to $\getRange$,
$\sqrt{n}$ calls to $\putRangeDist$,
and $\sqrt{n}$ calls to $\putRange$,
assuming the user and the server can exchange up to $\sqrt{n} \times p \log n$
elements in a single request.
The shuffle in Algorithm~\ref{alg:shuffle-alg}
requires $8\sqrt{n}$ requests in total since it makes
2 calls to the shuffle pass procedure.
The private memory required at the user to perform the shuffle is $\sqrt{n} \times p \log n$.
The required server's memory is $n \times p \log n$.
However, this overhead is temporary
since the increase in memory happens only during the shuffle pass
and is reduced to $n$ when the shuffle is finished.
Similarly for the user,
the memory of size~$\sqrt{n} \times p \log n$ is required only
during the shuffle.
We note that the total computation for the user and the server is $O(n \log n)$.
\end{proof}
}

\ifFull \ProofPerformancebasic \fi

\subsection{Security Analysis}
In this section, we show that the \shuffle shuffle
(Algorithm~\ref{alg:shuffle-alg}) is oblivious for every
permutation~$\pi$ with high probability. \ifLCNS Complete proofs are in the
Appendices~\ref{app:bbperms}--\ref{app:shuffle-alg}. \fi

\begin{definition}
\label{def:p-perm}
Let $A$ be an array of $n$ elements such that every $x \in [1,n]$ is
at location~$\pi_0(x)$ in~$A$. Let~$B$ be an array
that  stores a permutation $\pi$ of elements in~$A$,
i.e., $B = \pi(A)$.
Split~$A$ and $B$ in $\sqrt{n}$ buckets of equal size
and fix a constant $p \ge \me$.
Let $\pi$ be a permutation on $n$~elements
where every bucket of $B$ contains at most $p \log n$
elements of every bucket of $A$.
We refer to the set of all such permutations
as $P(\pi_0)$.
\end{definition}

\begin{lemma}
\label{lemma:bb-perms}
The size of set $P(\pi_0)$
is $(1-\m{negl}(n)) \times n!$, for every
permutation~$\pi_0$.
\end{lemma}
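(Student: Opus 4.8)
The plan is to recognize this as a balls-into-bins (occupancy) tail bound dressed up as a counting statement, and to bound the number of \emph{bad} permutations --- those in $S_n \setminus P(\pi_0)$ --- by a union bound. First I would restate membership in $P(\pi_0)$ combinatorially: partition the $n$ elements of $A$ into the $\sqrt n$ input buckets (determined by $\pi_0$) and the $n$ target positions into the $\sqrt n$ output buckets. For a fixed pair consisting of an input bucket $a$ and an output bucket $b$, let $X_{a,b}$ count the elements that start in $a$ and are sent to $b$ by $\pi$. Then $\pi \in P(\pi_0)$ exactly when $X_{a,b} \le p\log n$ for all $n$ bucket pairs $(a,b)$. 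I would also observe that $|P(\pi_0)|$ does not actually depend on $\pi_0$: relabelling the elements by $\pi_0$ is a symmetry of the count, which is precisely why the statement can hold uniformly over $\pi_0$.

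Next, for a uniformly random $\pi \in S_n$ and a fixed pair $(a,b)$, I would derive the marginal tail bound. Both buckets have size $m = \sqrt n$, so $\mathbb{E}[X_{a,b}] = m\cdot m/n = 1$. The cleanest estimate bounds the event $\{X_{a,b}\ge k\}$ by choosing which $k$ of the $m$ elements of $a$ land in $b$ and noting that any fixed $k$ of them land in the $m$ slots of $b$ with probability $\frac{m(m-1)\cdots(m-k+1)}{n(n-1)\cdots(n-k+1)} \le (m/n)^k$. This gives
\[
\Pr[X_{a,b}\ge k] \;\le\; \binom{m}{k}\left(\frac{m}{n}\right)^k \;\le\; \frac{m^k}{k!}\cdot\frac{m^k}{n^k} \;=\; \frac{(m^2/n)^k}{k!} \;=\; \frac{1}{k!}.
\]
Setting $k = p\log n$ and using $k! \ge (k/\me)^k$ together with $p\ge \me$ yields $\Pr[X_{a,b}\ge p\log n] \le (\me/(p\log n))^{p\log n} \le (1/\log n)^{p\log n}$.

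Finally I would take a union bound over the $\sqrt n \cdot \sqrt n = n$ bucket pairs to get $\Pr[\pi \notin P(\pi_0)] \le n\,(1/\log n)^{p\log n}$, and check that this is $\m{negl}(n)$: its logarithm is $\log n\,(1 - p\log\log n)$, which tends to $-\infty$ faster than $c\log n$ for every constant $c$, so the quantity decays faster than any inverse polynomial in $n$. Since $\pi$ is uniform over the $n!$ permutations, $|P(\pi_0)| = (1-\Pr[\pi\notin P(\pi_0)])\,n! = (1-\m{negl}(n))\,n!$, as claimed.

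The main thing to get right is \emph{not} independence --- the variables $X_{a,b}$ are heavily correlated (for instance $\sum_b X_{a,b} = \sqrt n$ holds deterministically), but a union bound only requires the per-pair marginal tail, which these correlations do not affect. The genuine work is the tail estimate itself: the $1/k!$ bound makes both the role of the constant $p \ge \me$ and the super-polynomial decay transparent, whereas a careless Chernoff application would obscure why $p\log n$ (rather than, say, $\log n/\log\log n$) is exactly the threshold needed to absorb the factor-$n$ loss from the union bound.
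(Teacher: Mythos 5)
Your proof is correct, and it reaches the same union-bound-over-$n$-bucket-pairs endgame as the paper, but the key technical step is genuinely different. The paper handles the dependence among the $X_a^b$ by invoking the Poisson approximation for balls and bins (Mitzenmacher--Upfal, Ch.~5.4): it replaces the $X_a^b$ with independent Poisson variables $Y_a^b$ of mean $1$, applies the Poisson tail bound $\Pr[Y_a^b \ge p\log n] \le \frac{1}{\me}\left(\frac{\me}{p\log n}\right)^{p\log n}$, takes the union bound over the $Y$'s, and then pays a factor of $2$ to transfer the bound back to the $X$'s. You instead observe that a union bound needs only the marginal tails, which you bound directly and combinatorially via $\Pr[X_{a,b}\ge k]\le\binom{m}{k}(m/n)^k\le 1/k!$ --- no independence surrogate, no transfer factor. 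Your route buys two things: it is more elementary (Stirling's $k!\ge(k/\me)^k$ is the only estimate needed), and it is arguably more rigorous here, since the experiment is a uniform random permutation rather than the independent balls-and-bins process to which the Poisson approximation theorem is literally stated to apply; your exact computation of the probability that $k$ fixed elements of bucket $a$ all land in bucket $b$ works directly for permutations. What the paper's route buys is reusability: the same Poisson machinery is redeployed essentially verbatim in the analogous lemma for the optimized shuffle (where the means are $\sqrt[4]{n}$ rather than $1$), whereas your first-moment bound would need the binomial-coefficient estimate reworked for that regime. Your closing remarks --- that $|P(\pi_0)|$ is independent of $\pi_0$ by relabelling, and that the correlations (e.g.\ $\sum_b X_{a,b}=\sqrt n$) are irrelevant to a union bound --- are both correct and make explicit two points the paper leaves implicit.
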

\newcommand{\ProofBBperms}{%
\begin{proof}
Let $\pi$ be a random permutation
from all possible $n!$ permutations.
We consider the relationship between
the input array~$A$ and a permutation
of~$A$, $B = \pi(A)$.
We start by splitting $A$ and $B$ in buckets of size $\sqrt{n}$
and numbering buckets from 1 to $\sqrt{n}$ using their order
in each array.
The analysis below estimates
how many permutations
can be constructed by restricting
the maximum number of elements from a bucket of $A$
appearing in any bucket of $B$ to $p \log n$.

Let $X_a^b$ be a random variable
that measures the number of elements
from $a$th bucket of~$A$ present in~$b$th bucket
of~$B$. The mean value of $X_a^b$ is 1,
since we are distributing $\sqrt{n}$ elements of~$a$
among~$\sqrt{n}$ buckets of~$B$.
Although $X_a^b$, for~$1 \le a,b \le \sqrt{n}$,
variables are dependent
between each other, we can use the Poisson
Approximation~\cite[Chapter 5.4]{mu-pcrap-05}
and instead work with~$n$ independent Poisson random variables $Y_{a}^b$
with mean 1.

Given $n$ variables $Y_{a}^b$ we are interested
in bounding the probability of the event that
there is no $a$ and $b$ such that $Y_{a}^b \ge p \log n$.
For a specific $a$ and $b$ it is:
\[
\Pr[Y_{a}^b \ge p \log n] \le \frac{1}{\me} \left(\frac{\me}{p \log n}\right)^{p \log n}.
\]
Using union bound, the probability
that at least one of the $Y_{a}^b$s
is greater than~$p \log n$ is at most
\[
 n \frac{1}{\me} \left(\frac{\me}{p \log n}\right)^{p \log n}.
\]
Since $Y_{a}^b$s are a Poisson
approximation of the variables
$X_a^b$, the probability
that at least one of the $X_{a}^b$s
is greater than~$p \log n$ is at most
\[
 2 n \frac{1}{\me} \left(\frac{\me}{p \log n}\right)^{p \log n}.
 \]
Setting $p \ge \me$ we get
\[
 2 n \frac{1}{\me} \left(\frac{\me}{p \log n}\right)^{p \log n}  \le \frac{2n}{(\log n)^{p \log n}} =
 \frac{2n}{n^{p \log \log n}} = 2\m{negl}(n) = \m{negl}(n).
\]
\ifLCNS \qed \fi
\end{proof}
}
\ifFull \ProofBBperms \fi

\begin{lemma}
\label{lemma:correctness}
Let $\pi_0$ be the initial permutation of $n$ elements
in~the input array~$I$.
Method $\shufflepass$ (Algorithm~\ref{alg:shuffle-pass}) succeeds for
all permutations $\rho \in P(\pi_0)$.
\end{lemma}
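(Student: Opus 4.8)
The plan is to trace the single pass $\shufflepass$ (Algorithm~\ref{alg:shuffle-pass}) and verify that, under the hypothesis $\rho \in P(\pi_0)$, the only place the algorithm can abort---the \textbf{fail} statement on line~\ref{line:fail}---is never reached. The key observation is that the failure condition and the membership condition for $P(\pi_0)$ are, by design, the same condition. Recall from Definition~\ref{def:p-perm} that $\rho \in P(\pi_0)$ means precisely that when both the input array (under $\pi_0$) and the output (under $\rho$) are split into $\sqrt{n}$ buckets of size $\sqrt{n}$, every output bucket receives at most $p\log n$ elements from every input bucket.

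First I would pin down what the distribution phase actually counts. For a fixed input bucket $\m{id}_I$, the inner loop assigns each element $(x,v)$ of that bucket the target bucket id $\m{id}_T = \lfloor \rho(x)/\sqrt{n}\rfloor$, which is exactly the index of the output bucket of $O$ into which $(x,v)$ will ultimately be placed. Thus $\mathsf{size}(\bucket[rev]_M[\m{id}_T])$ equals the number of elements of input bucket $\m{id}_I$ whose final location $\rho(x)$ lies in output bucket $\m{id}_T$. This is precisely the quantity $X_{\m{id}_I}^{\m{id}_T}$ from the proof of Lemma~\ref{lemma:bb-perms}. By the definition of $P(\pi_0)$, for every $\rho \in P(\pi_0)$ this count is at most $p\log n = \m{max\_elems}$ for all pairs $(\m{id}_I,\m{id}_T)$. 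Hence the test on line~\ref{line:simple-fail-if-start} is never satisfied, and the algorithm does not fail on any $\rho \in P(\pi_0)$.

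It then remains to check that ``not failing'' is the same as ``succeeding,'' i.e.\ that when no bucket overflows, the pass completes and writes a correct output. Here I would verify two invariants. The padding step (line~\ref{line:simple-pad}) pads each batch \emph{up} to $\m{max\_elems}$, which is well-defined exactly because no batch exceeds $\m{max\_elems}$; so every one of the $\sqrt{n}$ batches written to bucket $\m{id}_T$ of $T$ has the uniform size $p\log n$, and the $\putRange$ offsets in line~\ref{line:simple-putRange1} are consistent. In the clean-up phase, bucket $\m{id}_T$ of $T$ aggregates, over all $\sqrt{n}$ input buckets, exactly the real elements destined for output bucket $\m{id}_T$; since $\rho$ is a permutation, the total number of real elements mapped into each output bucket is exactly $\sqrt{n}$. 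Thus after $\m{clean}$ removes the dummies (line~\ref{line:simple-clean}) each cleaned bucket holds precisely $\sqrt{n}$ real elements, matching the comment before line~\ref{line:simple-putRange2} and making the final $\putRange$ into $O$ well-formed.

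I expect the main obstacle to be purely bookkeeping rather than conceptual: one must argue the second invariant---that every output bucket receives exactly $\sqrt{n}$ real elements---carefully, since this is what guarantees $\m{clean}$ produces a full bucket and no element is dropped or duplicated. This follows because the map $x \mapsto \lfloor\rho(x)/\sqrt{n}\rfloor$ partitions the $n$ indices according to which output bucket $\rho(x)$ falls in, and each output bucket has exactly $\sqrt{n}$ slots; but I would state it explicitly to close the lemma, since the $P(\pi_0)$ hypothesis only controls the \emph{per-input-bucket} contribution, not directly the per-output-bucket total.
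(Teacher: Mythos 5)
Your proposal is correct and follows essentially the same route as the paper's own (much terser) proof: the paper simply notes that the algorithm places elements into correct buckets and then sorts within buckets, and that by construction the \textbf{fail} on line~\ref{line:fail} is triggered exactly for those $\rho$ that move more than $p\log n$ elements from some bucket of $I$ to some bucket of $O$, i.e.\ exactly for $\rho \notin P(\pi_0)$. Your additional bookkeeping (uniform batch sizes after padding, and each output bucket receiving exactly $\sqrt{n}$ real elements since $\rho$ is a permutation) is a welcome elaboration of what the paper leaves implicit, but it is not a different argument.
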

\newcommand{\ProofCorrectness}{%
\begin{proof}
The algorithm allocates elements of $I$ in $O$
according to $\rho$ by first putting them
into corrects buckets (lines~\ref{line:simple-dist-for-start}--\ref{line:simple-dist-for-end}) and then sorting
every bucket using $\rho$ (line~\ref{line:simple-clean}).
By construction the algorithm fails for
any permutation~$\rho$ that requires
more than $p \log n$ elements from a bucket of $I$
mapped to buckets of $O$~(line~\ref{line:simple-fail-if-start}--\ref{line:simple-fail-if-end}).
\ifLCNS \qed \fi
\end{proof}
}
\ifFull \ProofCorrectness \fi

\begin{lemma}
\label{lemma:two-pass-fail}
Method $\m{shuffle}(I, \pi, O)$ (Algorithm~\ref{alg:shuffle-alg})
is a randomized shuffle algorithm
that succeeds with very high probability.
\end{lemma}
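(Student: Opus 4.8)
The plan is to reduce the success of $\m{shuffle}$ (Algorithm~\ref{alg:shuffle-alg}) to the success of its two component calls to $\shufflepass$, and to bound each pass's failure probability using Lemmas~\ref{lemma:correctness} and~\ref{lemma:bb-perms}. Recall from the algorithm that $\shufflepass(I,T,\rho,O)$ places the element with key $x$ at the absolute location $\rho(x)$ in $O$, independent of where that element currently sits in $I$; by Lemma~\ref{lemma:correctness} such a pass succeeds exactly when $\rho \in P(\pi_0)$, where $\pi_0$ is the arrangement of the pass's input. So it suffices to show that in both passes the applied permutation lies in the relevant set $P(\cdot)$ except with negligible probability, and then apply a union bound.

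First I would treat the opening pass $\shufflepass(I,T,\pi_1,O)$, whose input is arranged according to the starting permutation $\pi_0$ and whose applied permutation $\pi_1$ is uniformly random. By Lemma~\ref{lemma:bb-perms}, $|P(\pi_0)| = (1-\m{negl}(n))\, n!$ for every $\pi_0$, so a uniform $\pi_1$ lies in $P(\pi_0)$ with probability $1-\m{negl}(n)$; by Lemma~\ref{lemma:correctness} the first pass then succeeds except with negligible probability. Because $\shufflepass$ assigns element $x$ to location $\pi_1(x)$, after the assignment $I \leftarrow O$ the input to the second pass is arranged according to $\pi_1$.

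The main obstacle is the second pass $\shufflepass(I,T,\pi,O)$: here the applied permutation $\pi$ is chosen by the caller and may be adversarial, so a priori there is no reason for $\pi$ to lie in $P(\pi_1)$---indeed $\pi$ could be the identity, which belongs to no set $P(\cdot)$. The key observation I would use is that, by Lemma~\ref{lemma:correctness}, failure of the second pass depends only on the bucket cross-tabulation $M$ in which $M_{a,b}$ counts the elements with source bucket $\lfloor \pi_1(x)/\sqrt{n}\rfloor = a$ (from the arrangement left by the first pass) and destination bucket $\lfloor \pi(x)/\sqrt{n}\rfloor = b$; the pass fails iff some $M_{a,b} > p\log n$. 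Now fix $\pi$ and let $\pi_1$ vary uniformly at random. Since $\pi_1$ is a uniform permutation, the source-bucket labels form a uniformly random balanced assignment of the $n$ elements into the $\sqrt{n}$ buckets, independent of the fixed destination buckets determined by $\pi$. Thus the $\sqrt{n}$ elements of each fixed destination bucket are thrown uniformly into the $\sqrt{n}$ source buckets, so $M$ has exactly the distribution of the contingency table analyzed in Lemma~\ref{lemma:bb-perms}, with every cell of mean $1$. The same Poisson-approximation and union-bound calculation therefore yields $\Pr_{\pi_1}[\pi \notin P(\pi_1)] = \m{negl}(n)$, and the second pass succeeds except with negligible probability.

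Finally I would combine the two estimates by a union bound: $\m{shuffle}$ fails with probability at most $\m{negl}(n)+\m{negl}(n)=\m{negl}(n)$, so it succeeds with very high probability; note the union bound needs no independence between the two events, even though both depend on the single random $\pi_1$. When it does succeed, correctness is immediate, since the second pass sets each element's absolute position from $\pi$ alone and so writes exactly $\m{Enc}(\pi(A))$ to $O$, regardless of the random intermediate arrangement. Hence $\m{shuffle}$ is a randomized shuffle that realizes the requested permutation $\pi$ and succeeds with high probability.
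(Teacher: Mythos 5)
Your proof is correct and takes essentially the same route as the paper's: bound the first pass by $\Pr[\pi_1\notin P(\pi_0)]=\m{negl}(n)$ via Lemma~\ref{lemma:bb-perms}, observe that the second pass fails exactly when the $\pi_1$-to-$\pi$ bucket cross-tabulation has an overfull cell (the paper phrases this as $\pi_1\notin P(\pi)$ and reuses Lemma~\ref{lemma:bb-perms} by symmetry, which you re-derive directly), and union-bound the two events over the single random $\pi_1$. Your added justification that the contingency table has the same distribution for fixed adversarial $\pi$ is a welcome elaboration of a step the paper leaves implicit; the only slip is the illustrative aside that the identity ``belongs to no set $P(\cdot)$'' --- it fails to lie in $P(\mathrm{id})$ but does lie in $P(\pi_0)$ for most $\pi_0$ --- which does not affect the argument.
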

\newcommand{\ProofPassfail}{%
\begin{proof}
Let $\pi_0$ be the initial permutation
of the input array~$I$.
Algorithm~\ref{alg:shuffle-alg} makes two calls to shuffle
pass which
succeeds for all possible permutations
except for $1/n^{\Omega(\log \log n)}$ fraction of them~(Lemmas~\ref{lemma:bb-perms}
and~\ref{lemma:correctness}).
The first shuffle pass is executed for permutation $\pi_1$
on an input permuted according to some permutation~$\pi_0$. 
Since $\pi_1$ is picked using internal random
coins, the probability of the shuffle pass failing is independent of $\pi_0$ and
is bounded by~$1/n^{\Omega(\log \log n)}$.
If the first shuffle pass did not fail,
the shuffle pass is executed second time
with input permutation~$\pi$.
The second shuffle is executed on
the input array that is permuted according
to a random permutation $\pi_1 \in P(\pi_0)$.
The second pass does not fail iff $\pi_1 \in P(\pi)$.
Hence, Algorithm~\ref{alg:shuffle-alg} fails
if $\pi_1 \not\in P(\pi_0)$
or  $\pi_1 \not\in P(\pi)$.
By Lemma~\ref{lemma:bb-perms},
the probability of either of these events is negligible in $n$,
hence the basic \shuffle shuffle succeeds with very high probability
for any $\pi_0$ and~$\pi$.
\ifLCNS \qed \fi
\end{proof}
}
\ifFull \ProofPassfail \fi

We show that method $\shufflepass$ (Algorithm~\ref{alg:shuffle-pass})
is oblivious by mapping it to the Oblivious Shuffle Model in
Section~\ref{sec:model}, extracting the corresponding transcript and
showing that the transcript reveals no information about the
underlying permutation if the encryption scheme is CPA secure~(see
Section~\ref{sec:crypto}).

Method
$\shufflepass$ (Algorithm~\ref{alg:shuffle-pass}) corresponds to
$\m{GenRequest}$ in the model and calls to~$\getRange$ and~$\putRange$
trigger calls to $\m{GenResponse}$ at the server.  We do not describe
$\m{GenResponse}$ since it depends on the implementation details of
the remote storage provider.  We are only interested in the fact that
it uses server's state $S$ to store and maintain arrays $I, T$ and~$O$.
The transcript~$\alpha$ of the shuffle execution is defined as
follows.  The request~$r_i$ is either $\getRange(S, i, l)$ or
$\putRange(S, i, a)$, e.g., in line~\ref{line:simple-putRange2} $(S, x,
a)$ is $(O, \m{id}_T\times \sqrt{n},\m{bucket}_M$).  The response
$g_i$ to $\getRange$ is an array~$a$, e.g., $a$ contains $\sqrt{n}$
elements in line~\ref{line:simple-getRange1} and is stored
in~$\m{bucket}_M$. The response to $\putRange$ is empty. 
We first analyze \ifFull the metadata~(Definition~\ref{def:metadata})
that corresponds to \else the metadata (i.e., arguments not based on content) of \fi
every request between the client and the server, and show that, unless
the algorithm fails, they depend on the size of the input only, and
are independent from the input array and the desired permutation.
Hence, we obtain that the \shuffle shuffle is a data independent
shuffle algorithm. We finally show that if the content exchanged is
encrypted, as it is in method $\shufflepass$
(Algorithm~\ref{alg:shuffle-pass}), the \shuffle shuffle
(Algorithm~\ref{alg:shuffle-alg}) is oblivious.

\newcommand{\ProofIndep}{%
\begin{lemma}
\label{lemma:data-indep}
The metadata of requests exchanged between
the client and the server in method $\shufflepass$
(Algorithm~\ref{alg:shuffle-pass}) is independent of permutation~$\pi_0$
of the input~$I$
and output permutation~$\rho \in P(\pi_0)$,
and depends only on~$n$.
\end{lemma}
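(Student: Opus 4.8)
The plan is to fix the input permutation $\pi_0$ and an arbitrary target $\rho \in P(\pi_0)$, then to walk through the execution of $\shufflepass$ (Algorithm~\ref{alg:shuffle-pass}) listing every request sent to the server and checking that its metadata --- the array name, the start location, and the length --- is a function of $n$ and the loop counters alone, never of element values, of $\pi_0$, or of $\rho$. The first thing I would establish is that the control flow is itself data-independent: each phase is a nest of \texttt{for}-loops whose bounds $\m{num\_buckets}=\sqrt{n}$ and $\m{max\_elems}=p\log n$ depend only on $n$, and the sole content-dependent branch is the \textbf{fail} test on line~\ref{line:simple-fail-if-start}. Since $\rho \in P(\pi_0)$, Lemma~\ref{lemma:correctness} tells us the pass never fails, so this branch is never taken and the ordered sequence of requests is identical for every admissible $(\pi_0,\rho)$. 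It then suffices to inspect the arguments of each request type in turn.

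The array names and the locations require no real work. The distribution read on line~\ref{line:simple-getRange1} is $\getRange(I,\mathsf{id}_I\times\sqrt{n},\sqrt{n})$; the clean-up read on line~\ref{line:simple-getRange2} is $\getRange(T,\mathsf{id}_T\times\sqrt{n}\times\m{max\_elems},\sqrt{n}\times\m{max\_elems})$; the distribution write on line~\ref{line:simple-putRange1} targets offset $\mathsf{id}_T\times\sqrt{n}\times\m{max\_elems}+\m{max\_elems}\times\mathsf{id}_I$ in $T$; and the clean-up write on line~\ref{line:simple-putRange2} targets offset $\mathsf{id}_T\times\sqrt{n}$ in $O$. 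Each offset is an explicit arithmetic expression in $\mathsf{id}_I,\mathsf{id}_T$ and $n$, and the two read lengths are the constants $\sqrt{n}$ and $\sqrt{n}\times\m{max\_elems}$; none of these involves content.

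The heart of the argument, and the only place I expect to do genuine work, is showing that the two \emph{write} sizes are content-independent as well. For the distribution write, the batch is first passed through $\m{dummy\_pad}(\cdot,\m{max\_elems})$ on line~\ref{line:simple-pad}, so its length is always exactly $\m{max\_elems}=p\log n$, regardless of how many real elements $\rho$ happened to route from input bucket $\mathsf{id}_I$ into output bucket $\mathsf{id}_T$; the padding is precisely what erases this content-dependent count. For the clean-up write, I would argue that exactly $\sqrt{n}$ elements are written, by a counting argument: since $\rho$ is a bijection of $[1,n]$, exactly $\sqrt{n}$ indices $x$ satisfy $\lfloor\rho(x)/\sqrt{n}\rfloor=\mathsf{id}_T$, and because the pass did not fail every such real element reaches bucket $\mathsf{id}_T$ of $T$, so after $\m{clean}$ removes the dummies exactly $\sqrt{n}$ real elements remain to be written to $O$. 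Putting these together, every request's metadata is a fixed function of $n$, the same for all $\pi_0$ and all $\rho\in P(\pi_0)$, which is the claim. I would close by noting that the optimization collapsing the $\sqrt{n}$ writes of line~\ref{line:simple-putRange1} into one $\putRangeDist$ leaves the argument intact, as it merely bundles the same index-determined offsets and constant sizes into a single request.
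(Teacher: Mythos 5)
Your proof is correct and follows essentially the same route as the paper's: enumerate the $\getRange$/$\putRange$ requests of each phase and check that every array name, offset, and length is an arithmetic function of $n$ and the loop indices alone, with the $\putRangeDist$ bundling handled as an afterthought. If anything, you are slightly more careful than the paper, since you explicitly rule out the \textbf{fail} branch via Lemma~\ref{lemma:correctness} and justify the clean-up write size of exactly $\sqrt{n}$ by a counting argument, both of which the paper leaves implicit.
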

\begin{proof}
Let~$\alpha$ be a sequence of
(request, response) pairs exchanged between
the client and the server, denoted as $(r_i, g_i)$,
where~$r_i$ is either a $\getRange$
or $\putRange$ and~$g_i$ is $\bucket_M$ for~$\getRange$
and empty for~$\getRange$.
The sequence $\alpha$ can
be further split in $\sqrt{n}$ $(\getRange, \putRange)$ calls
that correspond to distribution phase and
$\sqrt{n}$ $(\getRange, \putRange)$ during
the clean-up phase.

The metadata of~$\putRange$
is the name of the array, the location within the array
and how many elements should be read. In the algorithm
these correspond to reading an array of size~$n$
sequentially
in buckets of size~$\sqrt{n}$ (distribution phase, line~\ref{line:simple-getRange1})
and~$\sqrt{n} \times p \log n$ (clean-up phase, line~\ref{line:simple-getRange2}).
These data depend only on~$n$ and $p$.
The metadata for a $\putRange$ call
consists of the array to be accessed, the location where
to put the data and the size of the data to be
written.
In the algorithm, first calls to~$\putRange$
place~$\sqrt{n}$ batches of size $p \log n$
in the temporary array to locations
that depend on the input bucket that has been
read using~$\getRange$~(line~\ref{line:simple-putRange1}).
These
locations are deterministic
since $\getRange$ simply scans the input array.
The second sequence of calls to $\putRange$
happens during the clean-up phase when buckets of size~$\sqrt{n}$
are written sequentially to the output array~(line~\ref{line:simple-putRange2}).
These calls are also deterministic.
It is also easy to show that
the transcript where a sequence of $\sqrt{n}$ calls to~$\putRange$
in lines~\ref{line:simple-for-start}-\ref{line:simple-for-end}
is substituted with a single call to $\putRangeDist$
is also deterministic.
\ifLCNS \qed \fi
\end{proof}
}
\ifFull \ProofIndep \fi

\newcommand{\ProofPass}{%
\begin{lemma}
\label{lemma:shuffle-pass}
Let $\pi_0$ be the initial permutation of $n$ elements
in~the input array~$I$ and let~$\rho$ be a permutation
from the set $P(\pi_0)$.
Method $\m{shuffle\_pass}(I, T, \rho, O)$
(Algorithm~\ref{alg:shuffle-pass})
is an oblivious
shuffle according to Definition~\ref{def:obl-shuffle}.
\end{lemma}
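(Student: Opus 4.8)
The plan is to prove that a single shuffle pass satisfies the Shuffle-IND game of Definition~\ref{def:obl-shuffle} whenever the desired permutation $\rho$ lies in $P(\pi_0)$, so that the pass does not fail. The natural approach is a reduction to the CPA-security of the underlying encryption scheme. The key conceptual point is that an adversary in the Shuffle-IND game observes exactly two kinds of information: the \emph{metadata} of each request (array name, offset, length) and the \emph{content} of each ciphertext returned by $\getRange$ or written by $\putRange$. By Lemma~\ref{lemma:data-indep}, the metadata of every request depends only on $n$ (and the constant $p$), and is therefore \emph{identical} for the two challenge pairs $(A_0,\tau_0)$ and $(A_1,\tau_1)$ chosen by $\adv$, provided $\tau_0,\tau_1\in P(\pi_0)$. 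Thus all information that could distinguish the two challenges is carried by the \emph{content} of the exchanged messages, and that content is encrypted under the CPA-secure scheme $(\Enc,\Dec)$.

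\textbf{Main steps.} First I would instantiate the Shuffle-IND game for method $\shufflepass$, identifying the transcript $\alpha$ concretely: each $r_i$ is a $\getRange$ or $\putRange$ (or the combined $\putRangeDist$) request, and each response $g_i$ is either an encrypted array $\bucket_M$ or empty. Second, I would invoke Lemma~\ref{lemma:data-indep} to conclude that the sequence of metadata is a fixed deterministic function of $n$, hence carries no information about $b$. Third — the heart of the argument — I would build a hybrid/reduction: given an adversary $\adv$ winning Shuffle-IND with non-negligible advantage, I construct a CPA adversary $\advB$ that internally simulates the shuffle game by forwarding all plaintext contents it needs encrypted to its own $\Enc$ oracle, and embeds the Shuffle-IND challenge by submitting the two sequences of plaintexts corresponding to $(A_0,\tau_0)$ and $(A_1,\tau_1)$ as its CPA challenge $(M_0,M_1)$. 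Since dummy items are padded to a fixed count $p\log n$ per batch and re-encrypted, the number and structure of ciphertexts is the same for both challenges; $\advB$ returns whatever bit $\adv$ outputs. Any distinguishing advantage of $\adv$ then translates directly into a CPA-distinguishing advantage for $\advB$, which must be negligible by assumption. I would then conclude $\Pr[b=b']\le \tfrac12+\m{negl}(k)$.

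\textbf{The main obstacle.} The delicate part is verifying that the two challenge plaintext sequences handed to the CPA oracle genuinely have the \emph{same length and shape}, so that the Enc-IND-CPA game is well-posed. This requires that both $\tau_0,\tau_1\in P(\pi_0)$ so neither pass fails, that the dummy-padding step (line~\ref{line:simple-pad}) produces batches of identical size $p\log n$ independent of how many real elements fall into each bucket, and that the clean-up phase writes exactly $\sqrt{n}$ elements per bucket in both cases. One must also argue that real and dummy items are indistinguishable once encrypted — which follows because a dummy is formatted to match the size of a real $(x,v)$ pair and everything is re-encrypted under CPA-secure encryption, so ciphertext positions leak nothing about the real/dummy split. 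I would treat the reduction carefully to ensure $\advB$ never queries $\Dec$ on a challenge ciphertext, keeping it a legal Enc-IND-CPA adversary; this is immediate here since the shuffle only ever re-encrypts and never needs to decrypt the challenge ciphertexts within the simulation.
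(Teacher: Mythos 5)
Your proposal is correct and follows essentially the same route as the paper's proof: it combines the non-failure of the pass for $\rho\in P(\pi_0)$ (Lemma~\ref{lemma:correctness}) with the data-independence of the metadata (Lemma~\ref{lemma:data-indep}), and then reduces any Shuffle-IND distinguisher to an Enc-IND-CPA adversary by submitting the two plaintext sequences of the challenge executions to the encryption challenger and splicing the returned ciphertexts into the fixed metadata skeleton. Your added care about the equal length and shape of the two plaintext sequences (via padding to $p\log n$ per batch) and the legality of the CPA reduction is a welcome tightening of details the paper leaves implicit.
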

\begin{proof}
We showed in~Lemma~\ref{lemma:correctness} that
Algorithm~\ref{alg:shuffle-pass} succeeds
for all $\rho \in P(\pi_0)$.
We also showed that all metadata
in the transcript that is revealed to
the adversary~$\mathcal{A}$
in the Shuffle-IND game in~Definition~\ref{def:ind-cpa}
after every call to $\m{Shuffle}$
is independent of data content and hence
can be determined based only on~$n$
and is the same for any choice of input
and output.
The data content exchanged in each call
does depend on the data, however,
it is always encrypted.
We show that the security of the shuffle
depends on the security of the
underlying encryption scheme.

To the contrary, we assume that there is a PPT adversary~$\adv$
that can distinguish with a non-negligible
advantage
two permutations~$\tau_0$ and $\tau_1$
by observing the transcript of one of them.
Hence, this adversary can win Shuffle-IND game.
We show that if~$\adv$ exists then
we can construct an adversary~$\mathcal{B}$
who can use $\adv$ to win Enc-IND-CPA game
with a non-negligible advantage,
which would break our assumption about
the encryption scheme.
We recall that in Enc-IND-CPA
game~$\mathcal{B}$
has access to
oracles~$\m{Enc}$ and~$\m{Dec}$ and
can encrypt and decrypt sequences of messages of his choice,
except asking for the decryption of the challenge
ciphertext.

We construct the adversary~$\advB$
as follows. $\advB$ does not need to
run $\m{Setup}$ since
he has oracle access to~$\m{Enc}$
and~$\m{Dec}$.
$\adv$ starts making calls to $\m{Shuffle}$
on chosen pairs of input arrays and permutations.
$\advB$ imitates the shuffle by responding
with the encrypted permutation and a transcript $\alpha$, that
he can produce himself.
He continues doing so until~$\adv$ comes up
with a challenge of two (input, permutation) pairs
$(A_0, \tau_0)$ and $(A_1,\tau_1)$.
$\advB$~first creates a static transcript
that will be the same for both permutations.
He then, extracts all the calls to be made
to $\m{Enc}$ into two sequences: one
that corresponds to $(A_0,\tau_0)$
and one to $(A_1,\tau_1)$.
He gives these two sequences of elements
to be encrypted to his own challenger.
The challenger picks one sequence at random,
encrypts all its plaintexts (i.e., elements) one by one and gives
the result to~$\advB$.
$\advB$~combines the ciphertexts with the metadata,
that depend only on~$n$,
to create a valid transcript~$\alpha$ and sends
it to $\adv$.
He continues, responding to~$\m{Shuffle}$
requests from $\adv$
until $\adv$ outputs his guess~$b$
for the pair $(A_b,\tau_b)$.
$\advB$ outputs $b$ as his guess
for which
sequence of messages his challenger picked.
Since $\adv$'s advantage in winning the
game is non-negligible so is~$\advB$'s.
\ifLCNS \qed \fi
\end{proof}
}
\vspace{-6pt}
\ifFull \ProofPass \fi

\begin{theorem}
\label{thm:shuffle}
The Melbourne Shuffle (Algorithm~\ref{alg:shuffle-alg})
is a randomized shuffle algorithm
that succeeds with very high probability
and is data-oblivious according to Definition~\ref{def:obl-shuffle}.
\end{theorem}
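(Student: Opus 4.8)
The plan is to prove the two assertions separately, assembling the single-pass lemmas already established. The success claim is exactly Lemma~\ref{lemma:two-pass-fail}: the full algorithm makes two calls to $\shufflepass$, the first failing only if $\pi_1 \notin P(\pi_0)$ and the second only if $\pi_1 \notin P(\pi)$, and by Lemma~\ref{lemma:bb-perms} each of these events has probability $\m{negl}(n)$, so a union bound yields success with probability $1-\m{negl}(n)$ for every $\pi_0$ and $\pi$. For this half I would simply invoke that lemma.

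The heart of the theorem is obliviousness. The idea is to view the transcript of Algorithm~\ref{alg:shuffle-alg} as the concatenation of the two single-pass transcripts, $\alpha_1$ for the random permutation $\pi_1$ and $\alpha_2$ for the desired permutation $\pi$, and to lift the single-pass obliviousness of Lemma~\ref{lemma:shuffle-pass} to this composition. Two facts drive the argument. By Lemma~\ref{lemma:data-indep} the metadata of each pass depends only on $n$ (and the constant $p$), so the metadata of the whole transcript is fixed and identical for both challenges $(A_0,\tau_0)$ and $(A_1,\tau_1)$. Moreover, since $\pi_1$ is drawn from internal coins that are hidden from the server and chosen independently of the challenge bit, the first pass leaks nothing about the bit beyond its encrypted content; only the encrypted content of either pass is data-dependent.

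Concretely I would extend the reduction of Lemma~\ref{lemma:shuffle-pass} from one pass to two. Supposing a PPT adversary $\adv$ wins Shuffle-IND (Definition~\ref{def:ind-cpa}) against the full algorithm with non-negligible advantage, I build $\advB$ against Enc-IND-CPA. Using oracle access to $\Enc$ and $\Dec$, $\advB$ answers all of $\adv$'s warm-up and post-challenge $\m{Shuffle}$ queries itself. On the challenge, $\advB$ draws a random $\pi_1$, writes down the data-independent metadata of both passes, and extracts the two ordered plaintext sequences to be encrypted across both passes --- one for $(A_0,\tau_0)$, one for $(A_1,\tau_1)$ --- hands them to its CPA challenger, splices the returned ciphertext sequence into the metadata to form a valid two-pass transcript, and forwards it to $\adv$; then $\adv$'s guess becomes $\advB$'s guess, preserving the advantage and contradicting CPA security.

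The hard part will be the one place where the two passes genuinely interact: the second pass fails exactly when $\pi_1 \notin P(\tau_b)$, an event depending on the hidden bit $b$, so observing a failure could in principle leak it and the reduction above assumes no failure. I would handle this by showing the leakage is negligible: by Lemma~\ref{lemma:bb-perms} this failure has probability $1/n^{\Omega(\log\log n)}$ for either choice of $\tau_b$, so conditioning on the overwhelmingly likely joint-success event (on which the reduction applies verbatim) shifts $\Pr[b=b']$ by at most $\m{negl}(k)$. Combining the data-independent metadata, the CPA-secure content, and this negligible failure leakage gives $\Pr[b=b'] \le \tfrac{1}{2} + \m{negl}(k)$, which together with the success claim establishes the theorem in the sense of Definition~\ref{def:obl-shuffle}.
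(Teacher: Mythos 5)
Your proposal is correct, and both halves line up with what the paper needs: success is exactly Lemma~\ref{lemma:two-pass-fail}, and obliviousness is argued by a reduction whose soundness rests on the data-independence of the metadata and on the failure event being negligible and driven by the secret coins of $\pi_1$ (the paper makes this last point in a remark after its proof, just as you do at the end). Where you diverge is the target of the reduction. The paper keeps things modular: it reduces an adversary against the two-pass $\m{shuffle}$ to an adversary against a \emph{single} $\shufflepass$ in the Shuffle-IND game, letting Lemma~\ref{lemma:shuffle-pass} absorb all the cryptography; $\advB$ simulates the first pass itself and forwards the challenge to a single-pass challenger. You instead inline Lemma~\ref{lemma:shuffle-pass}'s argument and reduce the entire two-pass transcript directly to Enc-IND-CPA, extracting one ordered plaintext sequence per challenge spanning both passes and splicing the returned ciphertexts into the fixed metadata. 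The paper's route buys reuse of the single-pass lemma, but its simulation of the first pass is delicate, since the output of that pass is an encryption of $\pi_1(A_b)$ and hence already depends on the hidden bit; your monolithic CPA reduction sidesteps that issue cleanly, because the challenger produces the ciphertexts for both passes at once and $\advB$ never needs to know $b$ to lay out the transcript. Your explicit conditioning on the joint-success event to bound the failure leakage is also slightly more careful than the paper's informal remark. Either argument suffices for Definition~\ref{def:obl-shuffle}.
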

\newcommand{\ProofShuffle}{%
\begin{proof}
Algorithm~\ref{alg:shuffle-alg} makes two
calls to method
$\m{shuffle\_pass}$, which is oblivious by Lemma~\ref{lemma:shuffle-pass}.
If Algorithm~\ref{alg:shuffle-alg}
is not oblivious, then there is a PPT adversary
$\adv$ who can distinguish
$\m{shuffle}$ of two permutations.
If $\adv$ exists, we can build an adversary~$\advB$
who can break the security of the underlying $\m{shuffle\_pass}$
using~$\adv$.
Whenever $\adv$ makes a call to $\m{Shuffle}$
for a permutation $\pi$,
$\advB$ first picks a random permutation~$\pi_1$
and calls $\m{shuffle\_pass}$ on $\pi_1$.
It then uses the output of this call and $\pi$
to make another call to~$\m{shuffle\_pass}$,
this time returning the output to~$\adv$.
This continues until
$\adv$ comes up with two challenge
pairs (input, permutation)
$(A_0, \tau_0)$ and $(A_1,\tau_1)$.
$\advB$ first picks a random permutation
and runs shuffle-pass on it.
He then gives his own challenger
the output of this shuffle along with $(A_0,\tau_0)$ and $(A_1,\tau_1)$.
The challenger returns to $\advB$
the shuffle according to~$(A_b,\tau_b)$, keeping $b$ secret.
$\advB$ forwards what he receives to $\adv$.
$\advB$ continues replying shuffle requests of~$\adv$
as he did before until $\adv$
makes a guess for $b$.
$\advB$ outputs this guess as his own.
\ifLCNS \qed \fi
\end{proof}

Note that method $\m{shuffle\_pass}$ outputs $\m{fail}$ for
some permutations. Whenever it does so,
$\advB$ sends this information to~$\adv$.
However, as showed in Lemma~\ref{lemma:two-pass-fail}
this happens with negligible probability for any pair of input
and output permutations and reveals nothing to the adversary
about the output permutation since the failure is due to secret
random bits.
}
\vspace{-8pt}
\ifFull \ProofShuffle \fi



\section{The Optimized \shuffle Shuffle}
\label{sec:optim_shuffle}

In this section we present an optimized version of the \shuffle shuffle
that has smaller memory requirements on the memory of the user and the server,
and succeeds with higher probability
than the basic version of the previous section.  \ifFull The framework
of the optimized version is similar: we first re-randomize the input,
i.e., shuffle it according to a random permutation $\pi'$ and then
shuffle $\pi'$ towards the desired permutation~$\pi$.  \fi The main
difference with the basic version lies in the shuffle pass.

\ifFull
\subsection{Algorithm} \else
\paragraph{Algorithm}
\fi

As in the basic version, the shuffle pass splits the input array $I$
and the output array $O$ in consequent buckets of size $\sqrt{n}$.
For auxiliary storage we use two temporary arrays $T_1$ and $T_2$ of
size $p_1 n$ and $p_2 n$, respectively, where $p_1, p_2 > 1$ are
constants to be determined in the analysis.  We split $T_1$ and $T_2$
in buckets of size $p_1\sqrt{n}$ and~$p_2\sqrt{n}$, respectively.  The
shuffle pass proceeds with two distribution phases, instead of one for
the basic version, followed by a single clean-up phase in the end.
Detailed pseudo-code is given in Algorithms~\ref{alg:optim-shuffle-alg}--\ref{alg:optim-distr-phase2}
\ifLCNS (Appendix~\ref{app:optimized}). \fi

\newcommand{\CodeOptimShuffle}{%
\begin{algorithm}[h]
\caption{\label{alg:optim-shuffle-alg}%
  The optimized \shuffle shuffle algorithm, $\mathsf{optim\_shuffle}(I, \pi, O)$, where
  the user can read and store in private memory~$M$ up to $ p_2\sqrt{n}$~elements.}
$\bm{I}$: an array of $n$ encrypted elements $(x,v)$;
$\bm{\pi}$: the desired permutation for elements of~$I$;
 $\bm{O}$: the output array containing the permutation of $I$ according to $\pi$, where every element~$(x,v)$
is re-encrypted.\\
\begin{algorithmic}[1]
    \STATE Let $\pi_1$ be a random permutation
    \STATE Let $T_1$ and $T_2$ be two empty arrays stored remotely of size $p_1n$ and $p_2 n$, respectively
    \STATE $\m{optim\_shuffle\_pass}(I, T_1, T_2, \pi_1, O)$
    \STATE $I \leftarrow O$
    \STATE $ \m{optim\_shuffle\_pass}(I, T_1, T_2, \pi, O)$
 \end{algorithmic}
\end{algorithm}

\begin{algorithm}[h]
\caption{\label{alg:optim-shuffle-pass}%
  Single pass $\mathsf{optim\_shuffle\_pass}(I, T_1, T_2, \rho, O)$ of
  the optimized \shuffle shuffle algorithm, where
  the user can read and store in private memory~$M$ up to $p_2 \sqrt{n}$~elements.}
$\bm{I}$: an array of $n$ encrypted elements $(x,v)$; $\bm{T_1}$: an
auxiliary array that fits $p_1 n$ elements; $\bm{T_2}$: an auxiliary
array that fits $p_2 n$ elements; $\bm{\rho}$: the desired permutation
for elements of~$I$; $\bm{O}$: the output array containing the
permutation of $I$ according to $\rho$, where every elements $(x,v)$
is re-encrypted.\\
\begin{algorithmic}[1]
    \STATE $\m{distr\_phase1}(I, \rho, T_1)$
    \STATE $\m{distr\_phase2}(T_1, \rho, T_2)$
    \STATE $\m{clean\_up\_phase}(T_2, \rho, O)$
  \end{algorithmic}
\end{algorithm}
}

\ifFull \CodeOptimShuffle \fi

The first distribution phase moves elements from $I$ to $T_1$, the
second distribution phase moves elements from $T_1$ to $T_2$.  We
abstract the layout of elements in each array further by sequentially
splitting buckets in chunks.  The goal of the first distribution phase
is to place elements in correct chunks and place them in correct
buckets within the chunks in the second distribution phase.
When a bucket is read, it is decrypted and any elements that are
written back are re-encrypted. In the following, we denote with
$\rho$ the target distribution of the shuffle pass.
For an array of size $n$, this algorithm assumes messages and
client private memory of size $O(\sqrt{n})$ and server
memory of size $O(n)$.

\paragraph{Distribution Phase \Rmnum{1}}
We view a sequence of $\sqrt[4]{n}$ buckets in each array as a chunk.
Hence, $I$, $T_1$, $T_2$ and $O$ each have $\sqrt[4]{n}$ chunks.  The
goal of the first distribution phase is to place elements of~$I$
in~$T_1$ in such a way that all elements that belong to the first
chunk of $O$ according to~$\rho$ can be found in the first chunk of
$T_1$, similarly for the second chunk, and so on.  \ifFull
The pseudo-code for
this phase is given in~Algorithm~\ref{alg:optim-distr-phase1}.
\else 
Details and pseudocode for this phase are given in
Appendix~\ref{app:optim-distr-phase1}.
\fi

\newcommand{\optdistI}{%
Referring to the pseudo-code of
Algorithm~\ref{alg:optim-distr-phase1}, the distribution proceeds by
placing elements in chunks at the bucket level
(line~\ref{line:optim-getRange-1}).
For every bucket in $I$, a batch with at most $p_1 \sqrt[4]{n}$
elements is moved to every chunk of $T_1$.  The assignment of elements
to chunks of~$T_1$ is determined by the permutation $\rho$ that we
wish to achieve.  In particular, an element $(x,v)$ is placed in the
$\lfloor\rho(x)/n^{3/4}\rfloor$th chunk of $T_1$
(line~\ref{line:optim-dist1-chunk}).  Note that
$\lfloor\rho(x)/n^{3/4}\rfloor$ is the index of the chunk where
$(x,v)$ belongs in $O$ since $n^{3/4}$ is the size of a chunk in~$O$.

As a result of placing $\sqrt{n}$ elements in $\sqrt[4]{n}$ batches of
size~$p_1\sqrt[4]{n}$, some batches will not be full. We pad such
batches with dummy elements (line~\ref{line:optim-dist1-pad}).  If, on
the other hand, more than $p_1 \sqrt[4]{n}$ elements need to be moved
according to $\rho$, the algorithm fails.  We consider and analyze
this case later.  If the distribution succeeds, every chunk of~$T_1$
has exactly $p_1 n^{3/4}$ elements with $n^{3/4}$ real elements only.
Note that here, as in the distribution phase of the basic shuffle,
$\sqrt[4]{n}$ calls in lines~\ref{line:optim-for-start11}-\ref{line:optim-for-end1}
can be substituted with a  single~$\putRangeDist$
of size~$p_1\sqrt{n}$.
For an illustration of this phase refer
to~Figure~\ref{fig:optim-distr-phase1}.

\ifFull
\begin{algorithm}[b!]
\else
\begin{algorithm}[hbtp]
\fi
\caption{\label{alg:optim-distr-phase1}%
  $\mathsf{distr\_phase1}(I, \rho, T_1)$ of the optimized \shuffle shuffle algorithm, where
  the user can read and store in private memory~$M$ upto $p_2 \sqrt{n}$~elements, $p_1 \le p_2$.}
$\bm{I}$: an array of $n$ encrypted elements $(x,v)$;
$\bm{\rho}$: the desired permutation;
$\bm{T_1}$: output array containing encrypted elements in correct chunks according to $\rho$.\\
\begin{algorithmic}[1]
    \STATE $\m{max\_elems} \leftarrow p_1  \sqrt[4]{n}$
    \STATE $\m{num\_chunks} \leftarrow \sqrt[4]{n}$
    \STATE $\m{num\_buckets} \leftarrow  \sqrt{n}$ 
    \STATE $\m{chunk\_size} \leftarrow n^{3/4}$
    \STATE \COMMENT{Distribution phase $\text{\Rmnum{1}}$: distribute elements of~$I$ in~$T_1$
    according to their chunk index in~$O$}
    \FOR[go through buckets in~$I$]{$i \in \{0, \ldots, \m{num\_buckets}-1\}$}
    	\STATE $\bucket_M \leftarrow \mathsf{getRange}(I,i \times \sqrt{n}, \sqrt{n})$ \label{line:optim-getRange-1}
	\STATE $\bucket[rev]_M \leftarrow \mathsf{empty\_map}()$ \COMMENT{Reverse map of chunk ids in $T_1$ to elements}~
	\FOR[Assign elements their chunk ids according to $\rho$]{$e \in \bucket_M$}
	         \STATE $(x,v) \leftarrow \Dec(e)$
	         \STATE $\mathsf{cid} \leftarrow  \lfloor \rho(x)/\m{chunk\_size} \rfloor$  \COMMENT{Chunk id of element $(x,v)$ in the output shuffle} \label{line:optim-dist1-chunk}
	         \STATE $\bucket[rev]_M[\m{cid}].\mathsf{add}(\Enc(x,v))$  \COMMENT{Collect elements that correspond to the same chunk in $T_1$}
	\ENDFOR
	\STATE \COMMENT {Can be done via a single $\putRangeDist$ for $\sqrt[4]{n}$ batches of size $\m{max\_elems}$}
	\FOR[Distribute $\bucket_M$ among chunks of~$T_1$]{$\mathsf{cid} \in \{0, \ldots,\m{num\_chunks}-1\}$}   \label{line:optim-for-start11}
		\IF{$\mathsf{size}(\bucket[rev]_M[\mathsf{cid}] ) > \mathsf{max\_elems}$}
			\STATE  \COMMENT{Permutation requires more than $p_1 \sqrt[4]{n}$ elements moved from a bucket of $I$ to a chunk of $T_1$}
			\STATE \textbf{fail} \label{line:distr1-fail}
		\ENDIF
		\STATE \COMMENT{Hide how many real elements go to $T_1$ from this bucket by padding with encrypted dummies}~
		\STATE $\bucket[rev]_M[\mathsf{cid}] \leftarrow \m{dummy\_pad}(\bucket[rev]_M[\mathsf{cid}], \m{max\_elems})$ \label{line:optim-dist1-pad}
		\STATE \COMMENT{Write a batch of $\mathsf{max\_elems}$ from every bucket of $I$ to every chunk of~$T_1$}~
		\STATE $\mathsf{putRange}(T_1,\m{cid} \times \m{chunk\_size} + i \times \mathsf{max\_elems}, \bucket[rev]_M[\m{cid}])$ \label{line:optim-putRange-1}
	\ENDFOR  \label{line:optim-for-end1}
    \ENDFOR
     \end{algorithmic}
\end{algorithm}

\begin{figure}[hbt]
\begin{center}
\includegraphics[scale=0.45]{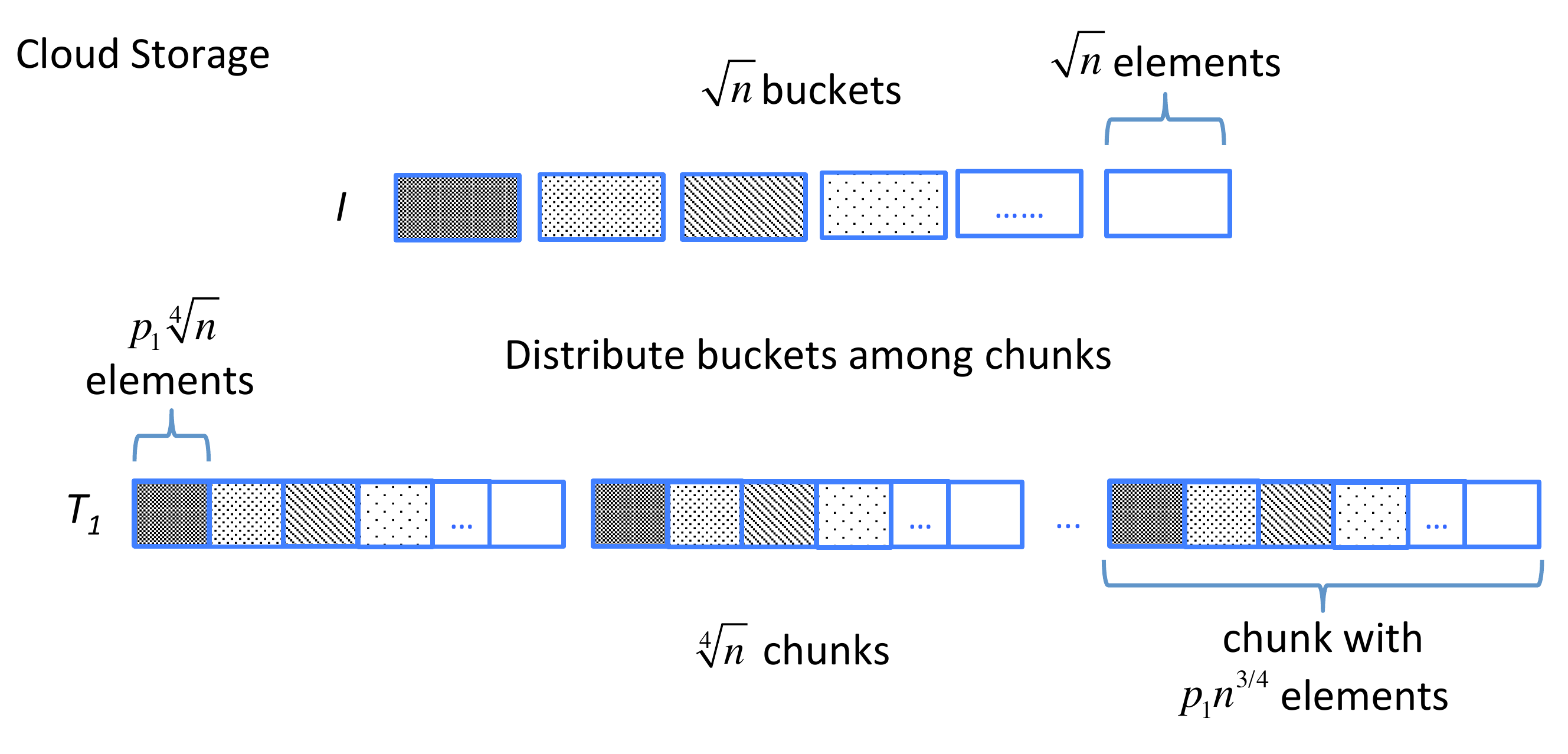}
\caption{Illustration of the arrangement of elements from the input~$I$
in the output~$T_1$ after the first distribution phase
of the Optimized Melbourne Shuffle~(Section~\ref{sec:optim_shuffle}).
See pseudocode in Algorithm~\ref{alg:optim-distr-phase1}.}
\label{fig:optim-distr-phase1}
\end{center}
\end{figure}
}

\ifFull \optdistI \fi
 
\paragraph{Distribution Phase \Rmnum{2}}
Observe that elements in $T_1$ belong to the correct chunk
but not the correct bucket within the chunk.
The second distribution phase remedies
this, such that by the end of this phase
elements of chunks of $T_1$
are in their correct buckets in $T_2$.
\ifFull
The pseudo-code of this phase is presented
in~Algorithm~\ref{alg:optim-distr-phase2}.
\else
Details and pseudocode for this phase are given in
Appendix~\ref{app:optim-distr-phase2}.
\fi

\newcommand{\optdistII}{%
Referring to the pseudo-code of Algorithm~\ref{alg:optim-distr-phase2}, 
we proceed by reading buckets of size $p_1\sqrt{n}$
in each chunk of~$T_1$ (line~\ref{line:optim-getRange-2}).
For every $j$th bucket of $i$th chunk of~$T_1$
we write
$\sqrt[4]{n}$ batches of size $p_2 \sqrt[4]{n}$ to~$T_2$~(line~\ref{line:optim-putRange-2}).
The batches are written to all
$\sqrt[4]{n}$ buckets of~$i$th~chunk in~$T_2$.
A real element $(x,v)$ is assigned to
$\lfloor \rho(x) / \sqrt{n}\rfloor \mod n^{3/4}$th batch~(line~\ref{line:optim-dist2-bid}),
which is the bucket id in the~$i$th chuck of $T_2$,
where this batch will be written to~(line~\ref{line:optim-putRange-2}).
Note that dummy elements added during the first distribution phase are ignored.
If there is a batch that no element has been assigned to
or if a batch has less than $p_2 \sqrt[4]{n}$
elements, we pad it with dummies.
If there is a batch with more than $p_2 \sqrt[4]{n}$
elements, the algorithm fails.
Note that $i \times \sqrt[4]{n} + \m{bid}$
is the index of $x$'s bucket in $O$.

Once all the batches are set up,
the batches are written in their place within
the same chunk in $T_2$.
Again,~$\sqrt[4]{n}$ calls in lines~\ref{line:optim-for-start2}-\ref{line:optim-for-end2}
can be substituted with a  single~$\putRangeDist$
of size~$p_2\sqrt{n}$.
When all buckets of $T_1$
have been processed, buckets of $T_2$
contain $p_2 \sqrt{n}$ elements each.
This is a consequence of writing
a batch of~$p_2 \sqrt[4]{n}$ elements
from every chunk of $T_1$.
Moreover, every bucket of $p_2 \sqrt{n}$
elements contains all $\sqrt{n}$ elements that belong to this
bucket in~$O$.
For an illustration of this phase refer to~Figure~\ref{fig:optim-distr-phase2}.

\begin{algorithm}[hbtp]
\caption{\label{alg:optim-distr-phase2}%
  $\mathsf{distr\_phase2}(T_1, \rho, T_2)$ of the optimized \shuffle shuffle algorithm, where
  the user can read and store in private memory~$M$ upto $p_2 \sqrt{n}$~elements.}
$\bm{T_1}$: array containing encrypted elements in correct chunks according to $\rho$;
$\bm{\rho}$: the desired permutation;
$\bm{T_2}$: output array containing encrypted elements in correct buckets according to $\rho$.\\
\begin{algorithmic}[1]
    \STATE $\m{max\_elems} \leftarrow p_2 \sqrt[4]{n}$
    \STATE $\m{num\_chunks} \leftarrow \sqrt[4]{n}$
    \STATE $\m{num\_buckets} \leftarrow  \sqrt{n}$ 
    \STATE $\m{bucket\_size} \leftarrow \sqrt{n}$
    \STATE $\m{chunk\_size} \leftarrow n^{3/4}$
    \STATE $\m{num\_buckets\_per\_chunk} \leftarrow \sqrt[4]{n}$
    \STATE \COMMENT{Distribution phase $\text{\Rmnum{2}}$: distribute elements of~$T_1$ in~$T_2$}
    \FOR[go through chunks in~$T_1$]{$i \in \{0, \ldots, \m{num\_chunks}-1\}$}
    \FOR[go through buckets in~$i$th chunk]{$j \in \{0, \ldots, \m{num\_buckets\_per\_chunk} -1\}$}
    	\STATE $\bucket_M \leftarrow \mathsf{getRange}(I,i\times \m{chunk\_size} + j \times \sqrt{n}, p_1\sqrt{n})$ \label{line:optim-getRange-2}
	\STATE $\bucket[rev]_M \leftarrow \mathsf{empty\_map}()$ \COMMENT{Reverse map of bucket ids in $T_2$ to elements}~
	\FOR[Assign elements their bucket ids according to $\rho$]{$e \in \bucket_M$}
	         \STATE $(x,v) \leftarrow \Dec(e)$
	         \IF[Ignore dummy elements]{$(x,v)$ is real}	         
	            \STATE $\mathsf{bid} \leftarrow  \lfloor \rho(x) / \m{bucket\_size} \rfloor \mod \m{num\_buckets\_per\_chunk}$  \COMMENT{Bucket id of element $(x,v)$ in $i$th chunk of $T_2$} \label{line:optim-dist2-bid}
	         \STATE $\bucket[rev]_M[\m{bid}].\mathsf{add}(\Enc(x,v))$  \COMMENT{Collect elements that correspond to the same bucket}
	        \ENDIF
	\ENDFOR
	\STATE \COMMENT {Can be done via a single $\putRangeDist$ for $\sqrt[4]{n}$ batches of size $\m{max\_elems}$}
	\FOR[Distribute $\bucket_M$ among buckets of~$T_2$ in $i$th chunk)]{$\mathsf{bid} \in \{0, \ldots,\m{num\_buckets\_per\_chunk}-1\}$} \label{line:optim-for-start2}
		\IF{$\mathsf{size}(\bucket[rev]_M[\mathsf{bid}] ) > \mathsf{max\_elems}$}
			\STATE  \COMMENT{Permutation requires more than $p_2 \sqrt[4]{n}$ elements moved from $T_1$ to $T_2$}
			\STATE \textbf{fail} \label{code:distr2-fail}
		\ENDIF
		\STATE \COMMENT{Hide how many real elements go to $T_2$ from this bucket by padding with encrypted dummies}~
		\STATE $\bucket[rev]_M[\m{bid}] \leftarrow \m{dummy\_pad}(\bucket[rev]_M[\mathsf{bid}], \m{max\_elems})$
		\STATE \COMMENT{Write a batch of $\mathsf{max\_elems}$ from every bucket of $T_1$ to buckets
		in $i$th chunk in~$T_2$}~
		\STATE $\mathsf{putRange}(T_2,i \times \m{chunk\_size} + \m{bid} \times p_2 \sqrt{n} 
		+ j\times\m{max\_elems}, \bucket[rev]_M[\m{bid}])$ \label{line:optim-putRange-2}
	\ENDFOR \label{line:optim-for-end2}
    \ENDFOR
    \ENDFOR
     \end{algorithmic}
\end{algorithm}

\begin{figure}[hbt]
\begin{center}
\includegraphics[scale=0.45]{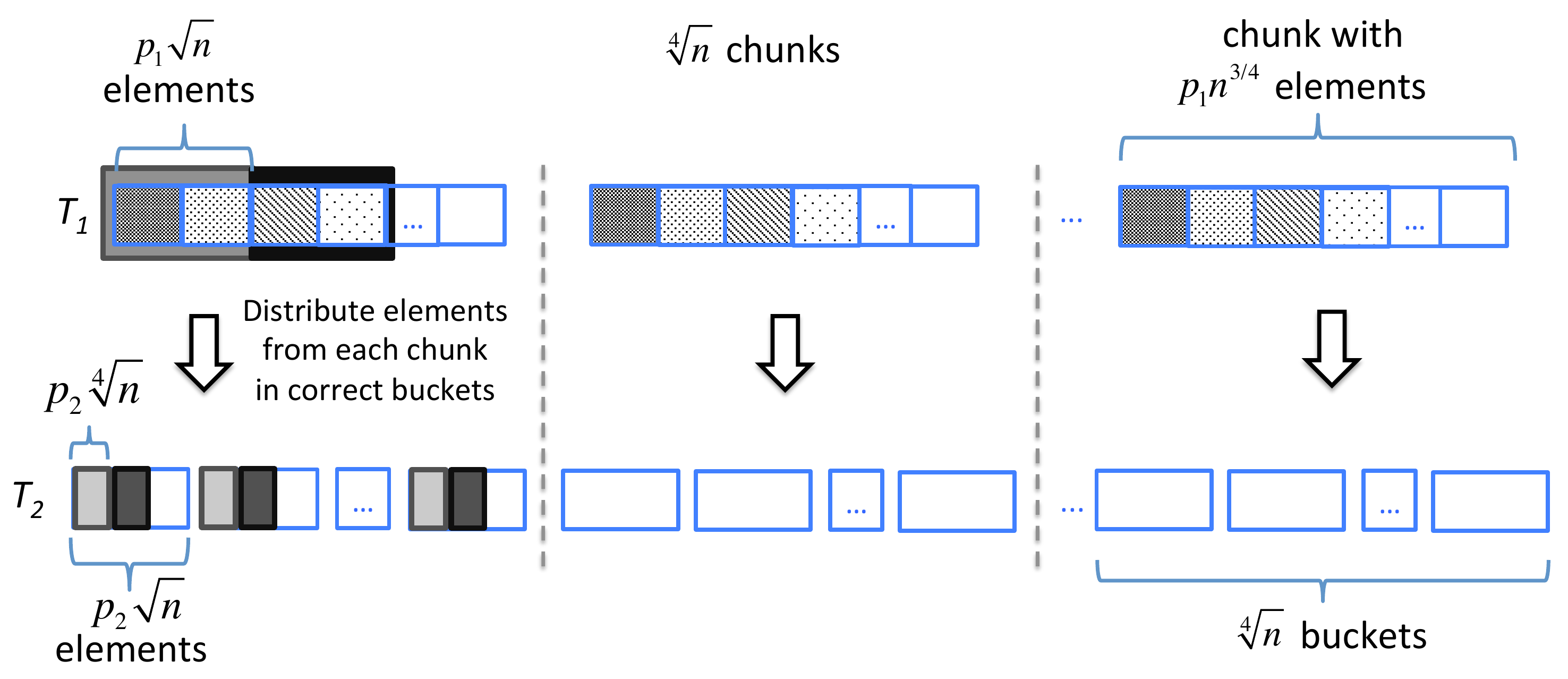}
\caption{Illustration of the arrangement of elements 
after processing two buckets of the first chunk of the input~$T_1$
in the output~$T_2$ in the second distribution phase
of the optimized \shuffle Shuffle. See pseudocode in Algorithm~\ref{alg:optim-distr-phase2}.
Read and written elements are indicated in gray and black colors.}
\label{fig:optim-distr-phase2}
\end{center}
\end{figure}
}

\ifFull \optdistII \fi

\paragraph{Clean-up Phase}
This phase is similar to the clean-up
phase of the basic version,
hence we omit its pseudo-code.
\ifFull Recall that a bucket of $T_2$
contains
$p_2 \sqrt{n}$ encrypted elements
but contains only $\sqrt{n}$
real elements which are in the correct bucket
but not in the correct spot within the bucket.
\else
The elements are in the correct buckets
but not in the correct spots.
\fi
We remedy this by reading every bucket $j\in [1, \sqrt{n}]$,
decrypting it,
removing dummy elements such that only
$\sqrt{n}$ real elements are left,
sorting it according to~$\rho$,
re-encrypting the elements and writing them back
to the $j$th bucket of~$O$.

\paragraph{Performance}
The performance of the optimized \shuffle shuffle is summarized in the
following theorem.
\proofInApp{app:optim-shuffle}
\begin{theorem} \label{theorem:optim-shuffle}
Given an input array of size $n$, 
the optimized Melbourne shuffle (Algorithm~\ref{alg:optim-shuffle-alg})
executes~$O(\sqrt{n})$ operations, each exchanging a message of size
$O(\sqrt{n})$ between
a user with private memory
of size~$O(\sqrt{n})$ and a server
with storage of size~$O(n)$.
Also, the user and the server perform~$O(n)$ work.
\end{theorem}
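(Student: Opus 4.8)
The plan is to give a direct accounting argument, phase by phase. The optimized \shuffle shuffle (Algorithm~\ref{alg:optim-shuffle-alg}) is just two invocations of $\m{optim\_shuffle\_pass}$ (Algorithm~\ref{alg:optim-shuffle-pass}), one for the random permutation $\pi_1$ and one for $\pi$, and a single pass decomposes into $\m{distr\_phase1}$, $\m{distr\_phase2}$, and the clean-up phase. Since two passes only double every quantity, it suffices to bound the number of I/Os, the per-operation message size, the user's private memory, the server storage, and the total work of one pass, and then multiply by the constant $2$. Throughout I would use that $p_1,p_2>1$ are constants, so factors of $p_1,p_2$ are absorbed.

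First I would count I/Os and message sizes per phase, exploiting that the $\sqrt[4]{n}$ per-bucket writes merge into one $\putRangeDist$. In $\m{distr\_phase1}$ the outer loop runs once per bucket of $I$, i.e.\ $\sqrt{n}$ times; each iteration issues one $\getRange$ reading $\sqrt{n}$ elements and one consolidated $\putRangeDist$ writing $p_1\sqrt{n}$ elements, giving $2\sqrt{n}$ operations with $O(\sqrt{n})$ messages. In $\m{distr\_phase2}$ the loops range over the $\sqrt[4]{n}$ chunks and the $\sqrt[4]{n}$ buckets within each chunk, for $\sqrt{n}$ iterations total, each reading $p_1\sqrt{n}$ elements and writing a consolidated batch of $p_2\sqrt{n}$ elements, again $2\sqrt{n}$ operations with $O(\sqrt{n})$ messages. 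The clean-up phase reads each of the $\sqrt{n}$ buckets of $T_2$ (size $p_2\sqrt{n}$) and writes the matching bucket of $O$ (size $\sqrt{n}$), for another $2\sqrt{n}$ operations. Summing gives roughly $6\sqrt{n}$ operations per pass, hence $O(\sqrt{n})$ for the whole algorithm, each exchanging $O(\sqrt{n})$ elements.

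Next I would bound memory and work. The private memory needed at any instant is one decrypted bucket plus its reverse map, of size at most $\max(p_1,p_2)\sqrt{n}=O(\sqrt{n})$, and this scratch space is reclaimed afterward. The server simultaneously holds $I$, $T_1$, $T_2$, and $O$, of sizes $n$, $p_1 n$, $p_2 n$, and $n$, so server storage is $O(n)$ during the shuffle and drops back to $n$ when it finishes. For the work bound, each element is decrypted, re-encrypted, and assigned to a chunk or bucket a constant number of times across the three phases, and the number of real-plus-dummy elements handled per phase is $O(n)$. The only place where a super-linear cost could intrude is the in-bucket reordering in the clean-up phase, so I would stress that each real element's destination slot is $\rho(x)\bmod\sqrt{n}$ and hence known explicitly, permitting direct placement in $O(1)$ per element rather than a comparison sort; this keeps each bucket's clean-up at $O(\sqrt{n})$ and the total at $O(n)$ per pass, hence $O(n)$ overall.

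The main obstacle I anticipate is not the raw I/O count but justifying $O(n)$ work rather than $O(n\log n)$: one must argue that ``sort real elements using $\rho$'' in the clean-up, and the analogous assignments in lines~\ref{line:optim-dist1-chunk} and~\ref{line:optim-dist2-bid}, are realized by direct indexing into known positions, and that dummy padding and its later removal add only a constant factor because $p_1,p_2$ are constants. Verifying that the $\putRangeDist$ consolidation is legitimate---i.e.\ that the $\sqrt[4]{n}$ target offsets of lines~\ref{line:optim-putRange-1} and~\ref{line:optim-putRange-2} are exactly those a single distributed write expects---is the remaining routine check.
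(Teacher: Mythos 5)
Your proposal is correct and follows essentially the same phase-by-phase I/O and memory accounting as the paper's proof, which likewise counts $6\sqrt{n}$ operations per pass (for a total of $12\sqrt{n}$), bounds private memory by $p_2\sqrt{n}$ and server storage by $(p_1+p_2)n$, and asserts $O(n)$ total work. Your additional observation that the clean-up's ``sort'' must be realized by direct indexing via $\rho(x)\bmod\sqrt{n}$ to avoid an $O(n\log n)$ bound is a worthwhile refinement of a point the paper leaves implicit.
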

\ifFull \else {The \shuffle shuffle is highly parallelizable
(see Appendix~\ref{app:pram} for details).\fi
\newcommand{\ProofOptimShuffle}{%
\begin{proof}
A single optimized shuffle pass in Algorithm~\ref{alg:optim-shuffle-pass}
requires $3 \sqrt{n}$ calls to $\m{getRange}$,
$2 \sqrt{n}$ calls to $\m{putRangeDist}$
and $\sqrt{n}$ calls to $\m{putRange}$
when the user and the server can exchange up to $p_2 \sqrt{n}$
elements in a single request.
Then, the optimized shuffle algorithm
in Algorithm~\ref{alg:optim-shuffle-alg}
requires $12$ requests in total.
The size of the user private memory required to perform the shuffle is at most
$p_2 \sqrt{n}$,
while
required server's memory is at most $(p_1 + p_2) n$.
However, this overhead in server's memory is temporary
since this memory is required only during the
the distribution phases. Similarly, for the user
private memory, the size of $O(\sqrt{n})$
is required while shuffling and is decreased to $O(1)$
once it is finished.
The total work for the user is $O(n)$.
\ifLCNS \qed \fi
\end{proof}
}
\ifFull \ProofOptimShuffle \fi

\ifFull
\subsection{Security Analysis}
\else
\paragraph{Security Analysis}
\fi

\ifFull
In this section, we show \else
We show in Appendix~\ref{app:security:optimized}
\fi
that the optimized \shuffle shuffle is
oblivious for every permutation $\pi$ with very high probability.

\newcommand{\SecOptimShuffle}{%
\begin{definition}
\label{def:q-perm}
Let $A$ be an array of $n$ elements such that for every $1 \le x \le n$ is
at location~$\pi_0(x)$ in~$A$.
Let $B=\pi(A)$ be a permutation on $A$.
Split~$A$ and $B$ in $\sqrt[4]{n}$ chunks of equal size
and fix constants $p_1,p_2 \ge 1$.
Split every chunk of $A$ and $B$ further
in $\sqrt[4]{n}$ buckets of size $\sqrt{n}$.

We say that a permutation $\pi \in Q(\pi_0)$
if according to $\pi$: (1) every chunk of $B$
has at most $p_1 \sqrt[4]{n}$
elements from every bucket of $A$,
and (2) if every bucket of~$B$
has at most $p_2 \sqrt[4]{n}$
elements from every chunk of~$A$.
\end{definition}

\begin{lemma}
The size of $Q(\pi_0)$ is
$(1 - \m{negl}(n)) \times n!$ for every permutation~$\pi_0$.
\end{lemma}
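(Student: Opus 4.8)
The plan is to mirror the balls-and-bins argument of Lemma~\ref{lemma:bb-perms}, now applied to the two-level (chunk/bucket) refinement that defines $Q(\pi_0)$. I fix the initial permutation $\pi_0$, draw $\pi$ uniformly from all $n!$ permutations so that $B=\pi(A)$, and observe that $Q(\pi_0)$ is cut out by two constraints. I would bound the probability that a random $\pi$ violates each constraint and then combine by a union bound. Throughout I take $p_1,p_2 > 1$ (as the optimized algorithm requires); the strict inequality is what produces the tail decay.

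For condition~(1), for each bucket $a$ of $A$ (there are $\sqrt{n}$ of them, each of size $\sqrt{n}$) and each chunk $c$ of $B$ (there are $\sqrt[4]{n}$ of them, each of size $n^{3/4}$), let $X_a^c$ count the elements of bucket $a$ landing in chunk $c$. Since $\pi$ spreads the $\sqrt{n}$ elements of $a$ over $n$ positions, of which a fraction $n^{3/4}/n = n^{-1/4}$ lie in $c$, the mean of $X_a^c$ is $\sqrt{n}\cdot n^{-1/4} = \sqrt[4]{n}$. As in Lemma~\ref{lemma:bb-perms}, the $X_a^c$ are dependent, so I would pass to independent Poisson variables $Y_a^c$ of mean $\sqrt[4]{n}$ via the Poisson approximation~\cite[Chapter 5.4]{mu-pcrap-05}. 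The standard Poisson tail bound then gives, for each pair,
\[
\Pr[Y_a^c \ge p_1 \sqrt[4]{n}] \le \me^{-\sqrt[4]{n}}\left(\frac{\me}{p_1}\right)^{p_1 \sqrt[4]{n}} = \exp\!\left(\sqrt[4]{n}\,(p_1 - p_1 \ln p_1 - 1)\right).
\]
Because $g(p)=p-p\ln p-1$ satisfies $g(1)=0$ and $g'(p)=-\ln p<0$ for $p>1$, we have $g(p_1)<0$, so this probability is $\me^{-\Omega(\sqrt[4]{n})}$. A union bound over the $\sqrt{n}\cdot\sqrt[4]{n} = n^{3/4}$ pairs (absorbing the constant factor from the Poisson approximation) leaves $n^{3/4}\cdot\me^{-\Omega(\sqrt[4]{n})} = \m{negl}(n)$, since an exponential in $n^{1/4}$ dominates any polynomial.

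Condition~(2) is completely symmetric: for each chunk $c'$ of $A$ (size $n^{3/4}$) and bucket $b$ of $B$ (size $\sqrt{n}$), let $Z_{c'}^b$ count the elements of $c'$ landing in $b$; its mean is $n^{3/4}\cdot n^{-1/2} = \sqrt[4]{n}$, the same value as before. The identical Poisson tail bound, with $p_2$ in place of $p_1$, gives a per-pair failure probability $\me^{-\Omega(\sqrt[4]{n})}$, and there are again $\sqrt[4]{n}\cdot\sqrt{n} = n^{3/4}$ pairs, so the total is $\m{negl}(n)$. Adding the two failure probabilities, $\Pr[\pi\notin Q(\pi_0)] = \m{negl}(n)$, whence $|Q(\pi_0)| = (1-\m{negl}(n))\,n!$; this holds for every $\pi_0$ because the argument never used $\pi_0$ beyond fixing which elements populate each bucket and chunk of $A$.

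The main obstacle I anticipate is purely in justifying the passage to independent Poissons: the counts $X_a^c$ and $Z_{c'}^b$ are negatively associated, both within a single source block and across the permutation constraint, so the Poisson approximation must be invoked with the same care as in Lemma~\ref{lemma:bb-perms}, and one must check that the constant multiplicative factor it introduces is harmless here. A cleaner alternative that sidesteps this is to note that each fixed count is exactly hypergeometric and hence satisfies the Chernoff bounds of the binomial, which yields the same per-pair estimate rigorously. Everything else is a routine Poisson/Chernoff tail computation; the only arithmetic point demanding care is that $p_1,p_2>1$ strictly, since at $p=1$ the exponent $g(p)$ vanishes and no concentration is obtained.
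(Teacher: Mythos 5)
Your proposal is correct and follows essentially the same route as the paper's proof: the same two constraints, the same balls-and-bins counts with mean $\sqrt[4]{n}$, the same passage to independent Poisson variables via the Poisson approximation, the same per-pair tail bound, and the same union bounds over the $n^{3/4}$ pairs and over the two conditions. Your explicit check that the exponent $p - p\ln p - 1$ is strictly negative only for $p>1$ is a welcome bit of extra care (the paper's algebra is slightly garbled at that step), but it does not change the argument.
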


\begin{proof}
Let $A$ and $B$ be arrays of $n$
items. Split $A$ and $B$ in chunks
of size $n^{3/4}$
and then split each chunk
in $\sqrt[4]{n}$ buckets of size $\sqrt{n}$.
The shuffle pass succeeds
for any permutation that satisfies the following
two constraints.
(1) Every chunk of $B$ contains
at most $p_1\sqrt[4]{n}$ elements
from every bucket of $A$
and (2)
Each bucket of $B$
has at most $p_2 \sqrt[4]{n}$
elements of the $i$th chunk of $A$.
We compute the probability
of a uniform permutation
not satisfying each of the constrains
independently and use union bound
to bound the probability of
failing at least one of the constraints.

Let $X_a^b$ be the
number of elements of
$a$th bucket of $A$ in $b$th chunk of $B$.
Since there are $\sqrt{n}$
elements in every bucket of A
and there are $\sqrt[4]{n}$
chunks in~$B$,
the mean of $X_a^b$ is $\sqrt[4]{n}$. 
As in Lemma~\ref{lemma:bb-perms}
we use the Poisson approximation
and use independent
Poisson variables $Y_a^b$
with mean value of $\sqrt[4]{n}$.
Then for a specific $a$ and $b$:
\[
\Pr[Y_a^b \ge p_1\sqrt[4]{n}] \le \frac{{\me}^{-\sqrt[4]{n}} (\me {\sqrt[4]{n}})^{p_1{\sqrt[4]{n}}}}{ {(p_1\sqrt[4]{n}})^{p_1\sqrt[4]{n}} } = 
 \frac{{\me}^{p_1} {\sqrt[4]{n}}^{p_1{\sqrt[4]{n}}}}{ {(p_1\sqrt[4]{n}})^{p_1\sqrt[4]{n}} }=
 \frac{{\me}^{p_1}}{ {p_1}^{p_1\sqrt[4]{n}} }. 
\]

Using union bound the probability:
\[
\Pr[\bigcup_{\substack{\text{$1 \le a \le \sqrt{n}$}\\\text{$1 \le b \le \sqrt[4]{n}$}}} Y_a^b \ge p_1\sqrt[4]{n}]
\le n^{3/4} \frac{{\me}^{p_1}}{ {p_1}^{p_1\sqrt[4]{n}} } = \m{negl}[n].
\]
Hence, 
\[
\Pr[\medcup{X_a^b} \ge p_1\sqrt[4]{n}] \le 
2\Pr[\medcup{Y_a^b} \ge p_1\sqrt[4]{n}]
= \m{negl}[n].
\]

Consider the second constraint.
Let $U_c^d$ be the the number of elements of $c$th chunk
of $A$ in $d$th bucket of~$B$. 
Since there are $n^{3/4}$ elements in every
chunk of $A$ and $B$ has $\sqrt{n}$ buckets,
$U_c^d$ is a Poisson random
variable with parameter $\sqrt[4]{n}$.
Using the Poisson approximation,
let $W_c^d$ be $n^{3/4}$ independent
Poisson random variables with mean~$\sqrt[4]{n}$.
Similar to the analysis of the first constraint,
the probability that there is at least one bucket
of~$B$ that has more than~$p_2 \sqrt[4]{n}$
elements of a chunk of $A$ is $\m{negl}(n)$.

Using union bound the probability
that at least one of the constraints is not
satisfied is the sum of
the respective probabilities, hence,
with probability $1-\m{neg}(n)$,
moving elements for
a random permutations via two
levels will succeed.
\ifLCNS \qed \fi
\end{proof}

\begin{lemma}
\label{lemma:data-indep-optim}
The metadata of requests exchanged between
the client and the server in method $\optimshufflepass$
(Algorithm~\ref{alg:optim-shuffle-pass})
is independent of input permutation~$\pi_0$
and output permutation~$\rho \in Q(\pi_0)$,
and depends only on~$n$.
\end{lemma}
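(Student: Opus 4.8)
The plan is to follow the same template as the proof of Lemma~\ref{lemma:data-indep} for the basic shuffle, adapting it to the three-phase structure of $\optimshufflepass$ (Algorithm~\ref{alg:optim-shuffle-pass}). The transcript $\alpha$ is again a sequence of (request, response) pairs, each request being a $\getRange$, a $\putRange$, or (after aggregation) a $\putRangeDist$. I would split $\alpha$ into its three phases --- distribution phase~\Rmnum{1} ($I \to T_1$), distribution phase~\Rmnum{2} ($T_1 \to T_2$), and the clean-up phase ($T_2 \to O$) --- and argue phase by phase that every piece of metadata (the array name, the offset, and the length) is a function of the loop counters, of $n$, and of the constants $p_1, p_2$ only, hence independent of both $\pi_0$ and $\rho$.

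For distribution phase~\Rmnum{1} (Algorithm~\ref{alg:optim-distr-phase1}) the reads scan $I$ sequentially in $\sqrt{n}$ buckets of size $\sqrt{n}$, so each $\getRange$ has offset $i\sqrt{n}$ and length $\sqrt{n}$; the subsequent writes go to offsets of the form $\m{cid}\cdot n^{3/4} + i\cdot p_1\sqrt[4]{n}$, each of fixed length $p_1\sqrt[4]{n}$, and the $\sqrt[4]{n}$ such writes per input bucket collapse into one $\putRangeDist$ of total size $p_1\sqrt{n}$. Since the counters $i$ and $\m{cid}$ range over fixed index sets, the locations touched and the order in which they are touched are identical for every $\rho$; only the (encrypted) payload differs. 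Distribution phase~\Rmnum{2} (Algorithm~\ref{alg:optim-distr-phase2}) is analogous: reads scan $T_1$ chunk-by-chunk and bucket-by-bucket at offsets $i\cdot n^{3/4} + j\sqrt{n}$ of length $p_1\sqrt{n}$, and writes land at $i\cdot n^{3/4} + \m{bid}\cdot p_2\sqrt{n} + j\cdot p_2\sqrt[4]{n}$ in fixed-size batches $p_2\sqrt[4]{n}$ (again aggregated into a $\putRangeDist$ of size $p_2\sqrt{n}$). Finally, the clean-up phase reads the $\sqrt{n}$ buckets of $T_2$, each of the fixed size $p_2\sqrt{n}$, and writes size-$\sqrt{n}$ buckets sequentially to $O$; all of these offsets and lengths depend only on $n$ and $p_2$.

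The crux of the argument --- and the step I expect to need the most care --- is justifying that the \emph{length} of every write is content-independent. This rests entirely on the dummy-padding step (padding to $\m{max\_elems}$): each batch is padded to exactly $p_1\sqrt[4]{n}$ (resp.\ $p_2\sqrt[4]{n}$) before being written, so the transmitted size never reflects how many real elements $\rho$ happens to route to a given chunk or bucket. For this padding to be well defined the algorithm must not invoke \textbf{fail}, and here the hypothesis $\rho \in Q(\pi_0)$ is essential: the two defining constraints of $Q(\pi_0)$ in Definition~\ref{def:q-perm} are exactly the negations of the two failure tests of the two distribution phases, so for every $\rho \in Q(\pi_0)$ both phases run to completion and emit the full, deterministic sequence of requests described above. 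Were an arbitrary $\rho$ allowed, a failure would truncate the transcript at a $\rho$-dependent point and leak information; restricting to $Q(\pi_0)$ rules this out, in direct analogy with Lemma~\ref{lemma:correctness}. Combining the three phases, the entire metadata stream is a fixed function of $n$, $p_1$, and $p_2$, which establishes the lemma.
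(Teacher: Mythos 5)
Your proposal is correct and follows essentially the same route as the paper's proof: a phase-by-phase walk through the two distribution phases and the clean-up phase, checking that every $\getRange$/$\putRange$/$\putRangeDist$ offset and length is a fixed function of the loop counters, $n$, $p_1$, and $p_2$. Your explicit discussion of the dummy-padding and of why $\rho \in Q(\pi_0)$ is needed to avoid a transcript-truncating \textbf{fail} is more detailed than the paper's sketch (which defers the failure analysis to Lemma~\ref{lemma:correctness-optim}), but it is the same argument.
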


\begin{proof}[Sketch]
We consider the metadata from $\getRange$ and $\putRange$
requests during the two distribution phases and the clean-up phase.
During the distribution phase~\Rmnum{1} $\sqrt{n}$
$\getRange$ calls are made with metadata
representing sequential get accesses
to blocks of size $\sqrt{n}$ of input~$I$~(line~\ref{line:optim-getRange-1}
in Algorithm~\ref{alg:optim-distr-phase1}).
The call to $\putRangeDist$ puts blocks of data of size~$p_1 \sqrt{n}$
to locations in~$T_1$ that are determined based on the id of the block
being read from~$I$ (lines~\ref{line:optim-for-start11}--\ref{line:optim-for-end1}
in Algorithm~\ref{alg:optim-distr-phase1}). Hence, metadata depend on the
size of~$I$, $n$, and does not depend on the content.

During the distribution phase~\Rmnum{2}, blocks of size $p_1\sqrt{n}$
are sequentially read using $\getRange$ on~$T_1$~(line~\ref{line:optim-getRange-2}
in Algorithm~\ref{alg:optim-distr-phase2})
and block are written back using~$\putRangeDist$ to $T_2$
of size~$p_2 \sqrt{n}$~(lines~\ref{line:optim-for-start2}--\ref{line:optim-for-end2}
in Algorithm~\ref{alg:optim-distr-phase2}).
Hence, metadata is deterministic
based on~$n$.

Clean-up phase sequentially reads blocks of size~$p_2\sqrt{n}$
from~$T_2$
and writes back blocks of size~$\sqrt{n}$, independent of the data.
Hence, metadata, produced by Algorithm~\ref{alg:optim-shuffle-pass}
is data independent.
\ifLCNS \qed \fi
\end{proof}

\begin{lemma}
\label{lemma:correctness-optim}
Let $\pi_0$ be the initial permutation of $n$ elements
in~the input array~$I$.
Algorithm~\ref{alg:optim-shuffle-pass} succeeds for
all permutations $\rho \in Q(\pi_0)$ and
is data-oblivious according to Definition~\ref{def:obl-shuffle}.
\end{lemma}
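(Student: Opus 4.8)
The plan is to prove the two assertions separately, mirroring the structure used for the basic shuffle pass in Lemmas~\ref{lemma:correctness} and~\ref{lemma:shuffle-pass}. The first assertion (success for every $\rho \in Q(\pi_0)$) is a purely combinatorial statement about when Algorithm~\ref{alg:optim-shuffle-pass} reaches one of its two \textbf{fail} lines, while the second (data-obliviousness) follows by combining the metadata-independence already established in Lemma~\ref{lemma:data-indep-optim} with a reduction to the CPA-security of the encryption scheme that is essentially identical to the one in Lemma~\ref{lemma:shuffle-pass}.

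For the first assertion, I would observe that $\optimshufflepass$ can abort only at line~\ref{line:distr1-fail} of $\m{distr\_phase1}$ or at line~\ref{code:distr2-fail} of $\m{distr\_phase2}$, so it suffices to show that neither guard is triggered when $\rho$ satisfies both conditions of Definition~\ref{def:q-perm}. Phase~\Rmnum{1} reads each bucket of $I$ and, for each output chunk index (line~\ref{line:optim-dist1-chunk}), forms a batch of the elements of that bucket destined for that chunk; the guard fires precisely when some batch exceeds $p_1\sqrt[4]{n}$, i.e. when some chunk of the output receives more than $p_1\sqrt[4]{n}$ elements from some bucket of $A$, which is exactly the negation of condition~(1). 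For Phase~\Rmnum{2} the key bookkeeping step is to identify what a bucket of $T_1$ holds: since Phase~\Rmnum{1} writes the $\sqrt{n}$ per-chunk batches consecutively and a bucket of $T_1$ spans $\sqrt[4]{n}$ of them, the $j$th bucket of the $i$th chunk of $T_1$ contains exactly those elements of the $j$th chunk of $A$ that are destined for the $i$th chunk of $O$. Hence the number moved from that bucket to a single bucket of $T_2$ (line~\ref{line:optim-dist2-bid}) equals the number of elements of the $j$th chunk of $A$ landing in one fixed output bucket, so the guard at line~\ref{code:distr2-fail} fires exactly when this exceeds $p_2\sqrt[4]{n}$ --- the negation of condition~(2). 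Thus both conditions of $Q(\pi_0)$ rule out failure, and the pass succeeds whenever $\rho \in Q(\pi_0)$.

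I expect this index-chasing correspondence --- verifying that a $T_1$-bucket aligns with a chunk of $A$ and a $T_2$-bucket with a bucket of $O$, up to the constant blow-up factors $p_1,p_2$ --- to be the main obstacle, since the two \textbf{fail} guards are phrased in terms of the intermediate arrays $T_1,T_2$ while the definition of $Q(\pi_0)$ speaks only of $A$ and $B=\pi(A)$. Getting the modular arithmetic on chunk and bucket indices to line up is where the real content lies; once the identification is made, each failure condition collapses to one of the two counting constraints.

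For data-obliviousness I would argue exactly as in Lemma~\ref{lemma:shuffle-pass}. By Lemma~\ref{lemma:data-indep-optim} every request's metadata depends only on $n$, not on $\pi_0$ or $\rho$, so the only potentially revealing part of the transcript is the encrypted content of the get/put operations, and every element is re-encrypted on each write. Assuming toward a contradiction a PPT adversary $\adv$ that wins the Shuffle-IND game for $\optimshufflepass$ with non-negligible advantage, I would build $\advB$ against Enc-IND-CPA: $\advB$ simulates every Shuffle request using its own encryption oracle, replaces $\adv$'s challenge pairs $(A_0,\tau_0),(A_1,\tau_1)$ by the two element-sequences that the data-independent transcript would encrypt, hands them to its challenger, reassembles the returned ciphertexts with the fixed metadata into a valid transcript $\alpha$, and finally echoes $\adv$'s guess. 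Because the metadata is identical for both challenges, $\advB$'s advantage equals $\adv$'s, contradicting CPA-security; hence $\optimshufflepass$ is oblivious per Definition~\ref{def:obl-shuffle}.
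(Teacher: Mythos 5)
Your proposal is correct and follows essentially the same route as the paper's own proof: match each of the two \textbf{fail} guards to one of the two conditions of $Q(\pi_0)$, then invoke the metadata-independence of Lemma~\ref{lemma:data-indep-optim} and the same Enc-IND-CPA reduction as in Lemma~\ref{lemma:shuffle-pass}. Your explicit identification of the $j$th bucket of the $i$th chunk of $T_1$ with the elements of the $j$th chunk of $A$ destined for the $i$th chunk of $O$ is exactly the index-chasing the paper's sketch leaves implicit, and it is carried out correctly.
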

\begin{proof}[Sketch]
The algorithm can fail during the first or second distribution
phases (lines~\ref{line:distr1-fail} in Algorithm~\ref{alg:optim-distr-phase1}
and~\ref{code:distr2-fail} in Algorithm~\ref{alg:optim-distr-phase2}).
The first failure happens when distributing
a bucket of size~$\sqrt{n}$ of~$I$ among~$\sqrt[4]{n}$ buckets in$~T_1$
and there is a bucket that needs to put
more than $p_1\sqrt{n}$ of its elements to
one of the buckets according to the input
permutation~$\rho$. This failure represents failure
to satisfy the first condition of~Definition~\ref{def:q-perm} of~$Q(\pi_0)$.

The second distribution phase results in a failure when one of
the buckets in~$T_1$, formed in the previous step,
is distributed among~$\sqrt[4]{n}$ buckets of~$T_2$
and requires more than~$p_2 \sqrt[4]{n}$ of its elements
in one of the buckets of~$T_2$.
This corresponds to failing the second condition of~Definition~\ref{def:q-perm}.
Hence, Algorithm succeeds for all permutations in~$Q(\pi_0)$.

The metadata produced by the algorithm is data-independent
as showed in Lemma~\ref{lemma:data-indep-optim}.
Since data is re-encrypted every time it is written back
in $\getRange$ we can use the reduction similar to Lemma~\ref{lemma:shuffle-pass}
to show that the security of the shuffle depends on
Enc-IND-CPA secure encryption scheme.
\ifLCNS \qed \fi
\end{proof}

\begin{theorem}
\label{thm:optim-shuffl}
The optimized Melbourne Shuffle (Algorithm~\ref{alg:optim-shuffle-alg})
is a randomized shuffle algorithm
that succeeds with very high probability
and is data-oblivious according to Definition~\ref{def:obl-shuffle}.
\end{theorem}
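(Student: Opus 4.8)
The plan is to follow the same two-stage template used for the basic version in Theorem~\ref{thm:shuffle}, replacing the single-pass building block $\m{shuffle\_pass}$ with $\optimshufflepass$ and the permutation family $P(\pi_0)$ with $Q(\pi_0)$ (Definition~\ref{def:q-perm}). There are two claims to establish separately: (i) that Algorithm~\ref{alg:optim-shuffle-alg} is a randomized shuffle that succeeds with very high probability, and (ii) that it is data-oblivious in the sense of Definition~\ref{def:obl-shuffle}.

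For (i), I would argue exactly as in Lemma~\ref{lemma:two-pass-fail}. Let $\pi_0$ be the initial permutation of the input $I$. Algorithm~\ref{alg:optim-shuffle-alg} invokes $\optimshufflepass$ twice: first on a freshly sampled random permutation $\pi_1$, and then on the target permutation $\pi$. By Lemma~\ref{lemma:correctness-optim}, a pass succeeds whenever its target permutation lies in $Q$ of the current input permutation. The first pass succeeds unless $\pi_1 \notin Q(\pi_0)$, and since $\pi_1$ is chosen using internal coins, this probability is $\m{negl}(n)$ and independent of $\pi_0$ by the lemma bounding $|Q(\pi_0)|$. Conditioned on the first pass succeeding, the input to the second pass is permuted according to $\pi_1$, so the second pass succeeds iff $\pi_1 \in Q(\pi)$, again a failure of probability $\m{negl}(n)$ by the same counting lemma. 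A union bound over the two failure events yields overall failure probability $\m{negl}(n)$ for every choice of $\pi_0$ and $\pi$.

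For (ii), I would give a reduction to the single-pass obliviousness of Lemma~\ref{lemma:correctness-optim}, mirroring the proof of Theorem~\ref{thm:shuffle}. Suppose toward a contradiction that some PPT adversary~$\adv$ wins the Shuffle-IND game against the full algorithm with non-negligible advantage. I would construct~$\advB$ that breaks the obliviousness of $\optimshufflepass$: whenever~$\adv$ requests a shuffle of $(B,\rho)$, $\advB$ itself runs a first $\optimshufflepass$ on a self-chosen random permutation and feeds its output, together with $\rho$, into a second invocation, returning the composed transcript to~$\adv$. When~$\adv$ commits to challenge pairs $(A_0,\tau_0),(A_1,\tau_1)$, $\advB$ performs the random first pass on its own, hands the resulting intermediate array and the two target permutations to its own single-pass challenger, relays the challenger's response to~$\adv$, and finally echoes~$\adv$'s guess. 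Because the first pass is driven by coins independent of the challenge bit and is faithfully simulated, $\advB$'s advantage equals~$\adv$'s, contradicting Lemma~\ref{lemma:correctness-optim}.

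The main obstacle I anticipate is bookkeeping around the reduction and the failure event rather than any deep difficulty. Specifically, I must ensure that the \emph{fail} outputs produced by $\optimshufflepass$ are handled consistently: $\advB$ forwards any failure verbatim to~$\adv$, and by part~(i) such failures occur with probability $\m{negl}(n)$ due solely to secret random bits, hence are independent of the challenge permutation and leak nothing. I also need the metadata of both passes to be identical across the two challenges, which is precisely Lemma~\ref{lemma:data-indep-optim}, so that the only content~$\advB$ must obtain from its challenger is the encrypted payload while the remainder of the transcript is synthesized from~$n$ alone.
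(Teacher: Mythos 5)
Your proposal is correct and follows essentially the same route as the paper: the paper's proof likewise runs $\optimshufflepass$ twice (once for a random $\pi_1$, once for $\pi$), invokes Lemma~\ref{lemma:correctness-optim} for both the success of each pass on permutations in $Q(\cdot)$ and the single-pass obliviousness, and then appeals to the same reduction argument as Theorem~\ref{thm:shuffle}. Your write-up is in fact more detailed than the paper's sketch, spelling out the union bound over the two failure events and the handling of \emph{fail} outputs in the reduction, but it introduces no new ideas beyond what the paper intends.
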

\begin{proof}[Sketch]
Algorithm~\ref{alg:optim-shuffle-alg} runs the optimized shuffle
pass algorithm in Algorithm~\ref{alg:optim-shuffle-pass} twice.
Each shuffle pass succeeds if the desired permutation
is from the set $Q(\pi_0)$, where $\pi_0$ is the original
permutation of the input~$I$.
Running the shuffle twice: once for a random
permutation~$\pi_1$ and then for the desired permutation,
ensures that the algorithm can succeed for every
permutation with very high probability, where
the probability depends on random coin tosses
that determine the intermediate random permutation~$\pi_1$.
We showed in Lemma~\ref{lemma:correctness-optim}
that Algorithm~\ref{alg:optim-shuffle-pass}
is data-oblivious, hence, running it twice also
produces a data-oblivious algorithm (same reduction argument
as in Theorem~\ref{thm:shuffle}).
\ifLCNS \qed \fi
\end{proof}

} 

\ifFull \SecOptimShuffle \fi


\section{The Melbourne Shuffle with Small Messages}
\label{sec:small-msg-shuffle}

The \shuffle shuffle
and its optimized version can be extended
to work with messages and private memory
of size  $n^{1/c} \log n$ (or $n^{1/c}$ for the optimized version),
for $c \ge 3$.
\ifLCNS The algorithm is described in
Appendix~\ref{app:small-msg-shuffle}. \fi

\newcommand{\SmallMsgShuffle}{%

The idea behind the approach is to
run the algorithm recursively with depth~$c-1$.
For a fixed~$c$,
one first splits the output in large buckets
of size $n^{(c-1)/c}$ and executes the
shuffle as in the square root case:
distributing~$n^{1/c}$ among
$n^{1/c}$ large buckets. We call
this the first level of the recursion.
After this, each large bucket has correct elements
but not in correct buckets nor positions.
The square root shuffle is executed again
on each large bucket, but now splitting
the large bucket of size $O(n^{(c-1)/c} \log n)$~($O(n^{(c-1)/c})$ for the optimized version)
in~$n^{(c-2)/c}$ buckets and again using
only~$n^{1/c}$ private memory.
The client follows the recursion
until the size of the inner buckets becomes $O(n^{2/c})$
when elements can be distributed in their correct
buckets of size~$n^{1/c}$.
At this point, the buckets are small enough
that they can be read to private memory
during the clean-up phase and be placed
in correct positions within their bucket.
Hence, there are $c-1$ levels of recursion.
Each level~$i$ requires a block of $n^{(c-i)/c}$ buckets,
each of size~$n^{1/c}$,
to be distributed among $n^{1/c}$ output buckets.
Since every level has $n^{(i-1)/c}$ blocks,
the square root solution is required
to be executed~$n^{(i-1)/c}$~times per level,
each making~$O(n^{(c-i)/c})$
accesses. Hence, the total number of requests
can be expressed
as
$\sum_{i=1}^{c-1} O(n^{(c-i)/c} \times n^{(i-1)/c}) = O((c-1) n^{(c-1)/c})$.

For the Melbourne shuffle that uses
private memory and messages with a multiplicative $\log n$
factor (Section~\ref{sec:shuffle}), the naive solution
could accumulate the $O({(\log  n)}^c)$ factor
if used naively. This happens due
to reading $n^{1/c}$ elements and writing back~$O(n^{1/c} \log n)$
elements and clean-up phase not being able to reduce this
to $n^{1/c}$ until the last level of the recursion.
One can prevent this by observing that
when distributing $n^{2/c}$ elements
among $n^{1/c}$ output buckets,
every bucket 
will have at most $n^{1/c}$ of its elements
in every output bucket, with very high probability
(by using Chernoff bounds~\cite[Chapter 4]{mu-pcrap-05}).
Hence, after distributing $O(\log n)$
elements from buckets of size
$n^{1/c}$, we can make another sequential pass,
reading $n^{1/c} \log n$ elements at a time
(i.e., the elements that were contributed by
batches of size~$\log n$ from~$n^{1/c}$
buckets)
and applying the fact that all together
they could not contribute more than $O(n^{1/c})$
elements, and hence writing back only~$O(n^{1/c})$.
Note that we remain data-oblivious, since
again we are using the observation of what the
adversary would expect to see with very high probability.

We note that each recursive level does not depend
on higher levels and, hence, has the
same memory requirements as a single shuffle
pass of the corresponding algorithm. 
}

\ifFull \SmallMsgShuffle \fi

\begin{theorem} \label{theorem:small-msg-shuffle} %
  Given an integer constant $c \geq 3$ and
  an input array of size $n$, 
  the optimized \shuffle shuffle executes
  $O(cn ^{(c-1)/c})$ operations, each exchanging a message of size
  $O(n^{1/c})$ between a user with private memory of size
  $O(n^{1/c})$ and a server with storage of size $O(n)$. Also, the
  user and server perform $O(c n)$ work.
\end{theorem}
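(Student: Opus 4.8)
The plan is to prove this by a direct accounting over the recursive structure described above, treating the optimized square-root shuffle of Theorem~\ref{theorem:optim-shuffle} as the elementary building block applied at each recursion level. First I would fix the decomposition: there are $c-1$ levels, and at level~$i$ the $n$ items are partitioned into $n^{(i-1)/c}$ disjoint \emph{blocks}, each consisting of $n^{(c-i)/c}$ buckets of size~$n^{1/c}$, so each block holds $n^{(c-i+1)/c}$ items and the block sizes over a level sum to~$n$. On each block, one invocation of the square-root-style distribution-and-clean-up procedure redistributes its $n^{(c-i)/c}$ buckets among $n^{1/c}$ output buckets, using messages and private memory of size~$O(n^{1/c})$.

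The core estimate is the I/O count. By the reasoning of Theorem~\ref{theorem:optim-shuffle} applied at the block level, a single block invocation at level~$i$ reads and writes each of its $n^{(c-i)/c}$ buckets a constant number of times, hence costs $O(n^{(c-i)/c})$ operations. Multiplying by the $n^{(i-1)/c}$ blocks at that level gives a per-level cost of
\[
n^{(i-1)/c}\cdot O(n^{(c-i)/c}) = O(n^{(c-1)/c}),
\]
which is independent of~$i$. Summing over the $c-1$ levels yields
\[
\sum_{i=1}^{c-1} O(n^{(c-1)/c}) = O\bigl(c\, n^{(c-1)/c}\bigr),
\]
establishing the operation bound.

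For the remaining claims I would argue as follows. The message and private-memory bounds of $O(n^{1/c})$ hold at every level because the only objects ever held in private memory are single buckets of size~$n^{1/c}$ together with the $O(n^{1/c})$-sized batches produced by the distribution step; this is uniform across levels and does not accumulate, since blocks are processed one at a time. The $O(n)$ server-storage bound follows because, at any instant, the server holds the current array ($O(n)$), the output array of the level being built ($O(n)$), and the linear-sized temporary arrays of the single block currently being shuffled, which is $O(\text{block size}) = O(n)$; the clean-up phase of each block invocation restores its footprint to the real-item count before the next block is processed. Finally, the $O(cn)$ work bound follows by charging $O(n^{1/c})$ work to each operation, giving $O(n^{(c-1)/c}\cdot n^{1/c}) = O(n)$ work per level and $O(cn)$ over all levels; equivalently, each item is touched a constant number of times at each of the $c-1$ levels.

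The main obstacle I anticipate is not the geometric sum, which telescopes cleanly, but the bookkeeping that guarantees server storage stays $O(n)$ rather than inflating as the recursion descends: one must confirm that the temporary arrays of a level-$i$ block are deallocated (via its clean-up phase) before the next block or the next level begins, so that the linear temporary overhead of the optimized shuffle is never paid simultaneously across more than one active block. A secondary point worth verifying is that the per-block failure probabilities still compose to a negligible total after a union bound over all $\sum_i n^{(i-1)/c} = O(n^{(c-2)/c})$ block invocations, so that correctness and obliviousness (inherited from Theorem~\ref{thm:optim-shuffl} applied per block) survive the recursion; this is where the concentration argument of the balls-and-bins lemmas must be reapplied at the appropriate bucket-to-bucket scale for each level.
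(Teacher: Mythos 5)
Your proposal is correct and follows essentially the same route as the paper: the identical decomposition into $c-1$ recursion levels with $n^{(i-1)/c}$ blocks of $n^{(c-i)/c}$ buckets each, the same per-block cost of $O(n^{(c-i)/c})$ operations, and the same telescoping sum $\sum_{i=1}^{c-1} O(n^{(c-i)/c}\cdot n^{(i-1)/c}) = O(c\,n^{(c-1)/c})$. The additional bookkeeping you flag (one active block's temporaries at a time, and a union bound over block invocations for the failure probability) is a reasonable tightening of points the paper treats only in passing.
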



\ifFull
\section{Applications}

In this section, we show how the \shuffle shuffle can be efficiently
parallelized on a PRAM. We also show how to use the  \shuffle shuffle
to obtain an efficient oblivious storage solution.

\else
\section{Oblivious Storage}
\label{sec:oblivious:storage}
\fi

\newcommand{\PRAMShuffle}{
In the EREW PRAM model, $\sqrt{n}$ processors can agree on
a seed for a permutation of the shuffle and
run the shuffle as specified below.
If privacy is not an issue, encryption and decryption
calls can be omitted from the algorithms
to get a data independent shuffle algorithm.

\begin{theorem}
  The (optimized) Melbourne shuffle can be executed in $O(1)$ steps by
  $\sqrt{n}$ processors, each with $O(\sqrt{n}\log n)$ ($O(\sqrt{n})$)
  private memory, that access a shared memory of size
  $O(n \log n)$ ($O(n)$) via EREW protocol.
\end{theorem}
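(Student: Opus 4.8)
The plan is to parallelize each phase of a single shuffle pass by assigning one processor to each iteration of its outer loop, then to argue that these iterations touch pairwise disjoint regions of shared memory (so the EREW discipline is respected) and that each processor's local footprint matches the claimed private memory. The sequential algorithms already organize the work as a constant number of loops — one distribution and one clean-up loop in the basic case (Algorithm~\ref{alg:shuffle-pass}), two distribution phases and a clean-up phase in the optimized case (Algorithms~\ref{alg:optim-distr-phase1}--\ref{alg:optim-distr-phase2}) — and each loop has exactly $\sqrt{n}$ iterations, so mapping iteration to processor collapses each phase into a single parallel round.

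First I would handle the basic shuffle. In the distribution phase, processor $\mathsf{id}_I \in \{0,\ldots,\sqrt{n}-1\}$ reads bucket $\mathsf{id}_I$ of $I$ via $\getRange(I,\mathsf{id}_I\times\sqrt{n},\sqrt{n})$; these ranges are disjoint across processors, giving exclusive reads. Each processor then writes its $\sqrt{n}$ batches to $T$ at offsets $\mathsf{id}_T\times\sqrt{n}\times\m{max\_elems}+\m{max\_elems}\times\mathsf{id}_I$ (line~\ref{line:simple-putRange1}); because the additive term $\m{max\_elems}\times\mathsf{id}_I$ separates processors within every $\mathsf{id}_T$ block, the written ranges are pairwise disjoint, giving exclusive writes. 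The clean-up phase is analogous: processor $\mathsf{id}_T$ reads block $\mathsf{id}_T$ of $T$ (size $\sqrt{n}\,p\log n$) and writes bucket $\mathsf{id}_T$ of $O$ (size $\sqrt{n}$) at locations disjoint from the other processors. Each processor holds at most one $T$-block of size $O(\sqrt{n}\log n)$, so $O(\sqrt{n}\log n)$ private memory suffices, while $T$ occupies $O(n\log n)$ shared cells, as claimed.

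Next I would treat the optimized shuffle, where the only new subtlety is distribution phase~\Rmnum{2}, whose loop is doubly indexed by chunk $i$ and in-chunk bucket $j$. I would linearize $(i,j)$ into a single index over the $\sqrt[4]{n}\cdot\sqrt[4]{n}=\sqrt{n}$ pairs. The read ranges on $T_1$ are the distinct buckets of size $p_1\sqrt{n}$, so reads are exclusive; for the writes (line~\ref{line:optim-putRange-2}) at offset $i\times\m{chunk\_size}+\m{bid}\times p_2\sqrt{n}+j\times\m{max\_elems}$, the chunk term separates distinct $i$, while within a fixed chunk the $\sqrt[4]{n}$ values of $j$ tile each $p_2\sqrt{n}$-sized $\m{bid}$ block exactly, since $\sqrt[4]{n}\cdot p_2\sqrt[4]{n}=p_2\sqrt{n}$. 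Hence the writes are disjoint. Distribution phase~\Rmnum{1} and the clean-up phase parallelize exactly as in the basic case. Each processor's footprint is $O(\sqrt{n})$, the arrays $T_1,T_2$ total $O(n)$ shared cells, and every phase is one parallel round.

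Finally, the common permutation seed sits in each processor's $O(1)$ stateful private memory, so every processor evaluates $\rho(x)$ locally without touching shared state, and dummy padding and re-encryption are purely local; thus no hidden read/write contention arises. Correctness is inherited verbatim from the sequential analysis (Lemmas~\ref{lemma:correctness} and~\ref{lemma:correctness-optim}), because the parallel schedule performs exactly the same reads, computations, and writes as the sequential loops — only the order changes, and the disjointness just established makes the outcome order-independent. Counting rounds, each pass costs a constant number of parallel steps and the algorithm runs its two passes sequentially (with the relabeling $I\leftarrow O$ between them), giving $O(1)$ total steps. The main obstacle is the write-disjointness bookkeeping for distribution phase~\Rmnum{2}; everything else is a direct read-off of the loop structure.
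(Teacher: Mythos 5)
Your proposal is correct and takes essentially the same route as the paper's own (much terser) proof sketch, which likewise assigns processor $\mathsf{id}$ to bucket $\mathsf{id}$, lets exclusivity follow from the disjoint $\getRange$/$\putRange$ ranges baked into the offset formulas, and dismisses the optimized case as ``analogous.'' The only difference is that you explicitly verify the write-disjointness arithmetic (in particular the tiling $\sqrt[4]{n}\cdot p_2\sqrt[4]{n}=p_2\sqrt{n}$ for the second distribution phase), a detail the paper leaves implicit.
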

\begin{proof}
(Sketch)
Label each processor with an id from 0 to $\sqrt{n}-1$.
During the shuffle pass in~Algorithm~\ref{alg:shuffle-pass}
every bucket~$\m{id}_I$ of $I$ can be read by
the $\m{id}_I$th processor from the shared memory.
The processor creates batches of elements that go
from his bucket to output buckets in the shared memory.
He then can write the batches
to the locations of the shared memory of~$O$
as specified in line~\ref{line:simple-putRange1} of Algorithm~\ref{alg:shuffle-pass}
by using processor's id instead of $\m{id}_I$.
During the clean-up phase each processor can read the bucket
of~$O$ with the same id as his processor id and write it back.
(Changes to Algorithm~\ref{alg:optim-shuffle-pass} are analogous.)
\ifLCNS \qed \fi
\end{proof}
}
\ifFull
\subsection{PRAM}
 \PRAMShuffle

\subsection{Oblivious Storage}

\label{sec:oblivious:storage}
 \fi

In this section, we give an overview of
a secure and efficient oblivious storage scheme
that uses the Melbourne shuffle.
The oblivious storage (OS) we
consider here follows the framework proposed
in~\cite{go-spsor-96} and the follow-up work of~\cite{prac_oram}.
The goal of the oblivious storage is to hide
client's access pattern to his remotely stored data from anyone
observing it, including the storage provider.
Informally, OS transforms a virtual
sequence of requests into a simulated one
that appears to be data-independent.
This is achieved by a mixture of
accesses that are the same for every
access sequence (e.g., the Melbourne shuffle)
and of accesses that are randomized, and come from
the same distribution, hence they appear to
be independent of
the access sequence.

\paragraph{OS Scheme}

Our OS scheme consists of \emph{setup},
\emph{access} and \emph{rebuild}
phases. The setup phase
arranges, encrypts
and outsources the data to the
remote storage server.
The access phase transforms a virtual
request into a sequence of accesses
to the remote storage. Once these
accesses are performed,
the requested element
is returned.
After a batch of requests, the
data at the server is shuffled
in order to be able to proceed 
with the access phase for
the next batch.
\ifFull 

We \else In Appendix~\ref{app:oram}, we \fi
 provide a short description
of the \emph{square root OS solution}~\cite{go-spsor-96}
and show how the Melbourne Shuffle
improves its performance.

\newcommand{\ORAMShuffle}{%

In the following description we assume
that the user and the server
can exchange messages of
size $O(\sqrt{n})$ and the client's memory
is also $O(\sqrt{n})$.

\paragraph{Setup}
Let $A$ be an array on $n$ items.
The client extends $A$ by
adding $\sqrt{n}$ fake
elements with keys $n+1, n+2, \ldots, n + \sqrt{n}$.
He then encrypts~$A$ using Enc-IND-CPA secure
encryption
to get an array~$I$, which he sends to the server
to store.
The client picks a secret permutation, PRP~$\pi$
and calls $\m{shuffle}(I, \pi, O)$
where $O$ is the location at the server
where the shuffled and encrypted
array of $A$ is stored.
We set $I \leftarrow O$ for the access phase.
The user also allocates at the server an empty cache $C$
that can fit encryptions of up to
$\sqrt{n}$
requested elements.
As we shall see, the number of non-empty locations
in~$C$ is the total number of requests
that were made to remote storage since
the last time $I$ was shuffled.
We refer to how many elements
are present in $C$ as $l$.
The client remembers the seed that generated~$\pi$
so that he can find the elements during
the access phase.

\paragraph{Access Phase}
Given the cache $C$, the encrypted array~$I$ and a request $\m{read}(x)$ or $\m{write}(x,v')$
the access phase creates the following
oblivious simulation
that depends on $n$, the total number of accesses made so far,
and secret PRP~$\pi$.
It starts by reading the cache with
a single request since the cache fits
in one request message and in client's memory.
It decrypts the cache and
checks if an element with the key~$x$ is present in~$C$.
If so, it remembers the element $(x,v)$.
Then, a location in $I$ is accessed as follow.
If the element was found in the cache
a fake element with the key~$n+l$
is accessed by requesting the location
$\pi(n+l)$ of~$I$.
Otherwise, the location that stores
the element with the key~$x$ is accessed,
by requesting location~$\pi(x)$ of~$I$.
After reading the cache and making one request to~$I$,
the user has the desired element $(x,v)$. If the original
request was $\m{read}$ he writes encrypted element $(x,v)$
to the first empty location in~$C$ on the server,
if it was a $\m{write}$, the client writes $(x,v')$ instead. 
This phase can proceed this way for $\sqrt{n}-1$ more
requests, after that the cache fills up and the
rebuild phase follows.
This phase requires 3 accesses to the remote
storage per every requested element.
\paragraph{Rebuild Phase}
The goal of the rebuild phase is to free $C$
by placing
updated elements back to~$I$ and
shuffle~$I$ using a new secret permutation~$\pi'$
that is independent of $\pi$. This step has to be
done in a data-oblivious manner to prevent
correlations between access patterns to~$I$ before
and after reshuffle.
One first writes~$C$ to~$I$ by reading~$C$
in private memory and then reading
buckets of size~$\sqrt{n}$ of~$I$,
updating them with elements of~$C$, if needed,
and writing them back re-encrypted.
After merging the cache~$C$ and~$I$ we
are ready to call the Melbourne shuffle via~$\m{shuffle}(I, \pi', O)$.
The client updates the seed that generated~$\pi'$
in his private memory and allocates an empty cache~$C$
at the server.
The rebuild takes $O(\sqrt{n})$
accesses if we use the optimized Melbourne Shuffle.

\paragraph{Deamortized Oblivious Storage}
The overall cost of the shuffle
can be amortized over a batch of~$\sqrt{n}$
requests to achieve $O(1)$ overhead.
We could also deamortize this by using the method
of~\cite{gmot-orsew-11}. The method involves
doubling the space at the server by adding a second cache of size~$\sqrt{n}$
and memory of size $n+\sqrt{n}$ where the rebuild
is happening ``behind the scenes'' during the access phase.
The rebuild on the
new array is done by batching its requests
with the requests from the access phase.
The client now does a constant increase
in the amount of work and has a new permutation
ready
when the cache is full.
}
\ifFull \ORAMShuffle \fi
%
%
\ifFull
The performance of the above OS scheme base on the optimized \shuffle
shuffle is summarized in the following theorem.
\fi
\newpage
\begin{theorem}  \label{theorem:oram:shuffle}
The randomized oblivious storage scheme based on the optimized \shuffle shuffle has the
following properties, where $n$ is the size of the outsourced dataset:
\ifLCNS
\begin{itemize}[noitemsep]
\else
\begin{itemize}
\fi
\item The private memory at the client and each message exchanged
  between the client and server have size~$O(\sqrt{n})$.
\item The memory at the server has size $O(n)$.
\item The access overhead to perform a storage request is $O(1)$.
\end{itemize}
\end{theorem}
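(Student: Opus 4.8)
The plan is to prove the three bounds by a phase-by-phase accounting of the square-root oblivious storage construction, charging the expensive rebuild to a batch of cheap accesses. First I would fix the invariant that at the start of each batch the server holds an encrypted, freshly shuffled array $I$ of the $n+\sqrt{n}$ padded items (under a secret PRP $\pi$) together with an empty cache $C$ of capacity $\sqrt{n}$. The client's persistent private state is only the encryption key and the seed of $\pi$, which is $O(1)$; all scratch space used below never exceeds $O(\sqrt{n})$, and each message carries at most $O(\sqrt{n})$ items, establishing the memory and message bounds for the setup and access phases.

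Next I would account for a single access. Reading the whole cache is one I/O of a message of size $O(\sqrt{n})$ that fits in private memory, and scanning it for the requested key is $O(\sqrt{n})$ local work. Exactly one location of $I$ is then fetched --- the real location $\pi(x)$ if the key was absent from $C$, or a fresh fake location $\pi(n+l)$ otherwise --- and the (possibly updated) item is written into the next empty slot of $C$. Thus each access costs a constant number of I/Os, each of size $O(\sqrt{n})$, and touches a \emph{distinct} location of $I$, which is precisely what will make the access pattern to $I$ resemble a sequence of distinct pseudorandom locations.

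Then I would account for the rebuild, triggered once after every $\sqrt{n}$ accesses. Merging $C$ back into $I$ reads $I$ in $\sqrt{n}$ buckets of size $\sqrt{n}$, patches each with the matching cached updates, and writes it back, for $O(\sqrt{n})$ I/Os with $O(\sqrt{n})$ messages. Reshuffling $I$ under a fresh independent PRP $\pi'$ is a single call to the optimized \shuffle shuffle, so by Theorem~\ref{theorem:optim-shuffle} it costs $O(\sqrt{n})$ I/Os with messages and private memory $O(\sqrt{n})$ and server storage $O(n)$; the temporary arrays $T_1,T_2$ of size $O(n)$ are freed afterwards, so the steady-state server footprint stays $O(n)$. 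Amortizing this $O(\sqrt{n})$ rebuild over the $\sqrt{n}$ accesses of the batch yields $O(1)$ amortized overhead. To get worst-case $O(1)$ I would invoke the deamortization of~\cite{gmot-orsew-11}, keeping a second cache and a second shuffled copy and spreading the rebuild evenly across the next batch's accesses, which increases work and server space by only constant factors, keeping them $O(n)$.

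The main obstacle is not the arithmetic but verifying that this composition is genuinely oblivious, which is what justifies calling it an oblivious storage scheme. Here I would argue that the observable transcript decomposes into three pieces, each independent of the request sequence: the cache operations, which are identical for every sequence (always read all of $C$, always write the next slot); the single accesses to $I$, which by the fake-element mechanism never repeat a location within a batch and, by security of the PRP, are indistinguishable from distinct uniformly random locations drawn independently of the requests; and the reshuffle transcript, which is oblivious by Theorem~\ref{thm:optim-shuffl}. Since every item is re-encrypted under the CPA-secure scheme on each write, the contents leak nothing, and a standard hybrid over the PRP and the encryption lets me conclude indistinguishability with the stated negligible error, completing the proof alongside the performance accounting above.
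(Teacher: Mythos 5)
Your proposal is correct and follows essentially the same route as the paper: the same setup/access/rebuild decomposition, the same per-access accounting (one cache read, one real-or-fake probe of $I$ at $\pi(x)$ or $\pi(n+l)$, one cache write), the same $O(\sqrt{n})$-I/O rebuild via merge-then-shuffle charged to a batch of $\sqrt{n}$ accesses, and the same appeal to~\cite{gmot-orsew-11} for deamortization. Your added obliviousness argument (identical cache transcript, distinct pseudorandom probes, oblivious reshuffle, CPA re-encryption) is not spelled out by the paper for this theorem but matches the reductions it gives elsewhere.
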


\paragraph{Extension to Small Messages}
The recursion could be applied to square root solution,
when messages of size $n^{1/c}$ are used
to exchange between the user and the server,
and the user has $n^{1/c}$ private memory.
The solution contains
one
cache of size~$n^{1/c}$, that is small enough
to fit into private memory and be read using one request,
and~$c-1$~levels. 
Each level $i$ is large enough to
contain~$n^{(i+1)/c}$ real elements
and~$n^{i/c}$ fake elements.
The cache and levels 1 to $i-1$ have
a similar cache functionality for level $i$ as the cache~$C$ in
the square root solution, except together
they can store $O(n^{i/c})$ previously accessed elements. 
Each level~$i< c-1$ contains $n^{(i+1)/c}$ buckets of size $O(\log n)$
that allows one to store $n^{(i+1)/c}$ elements
in a hash table and avoid collisions with very high probability.
Buckets that have less than $\log n$ elements\ifFull, real or fake,\else~\fi
are filled in with dummies.
The last level, level $c-1$, has $n$ elements,
hence a permutation can be used to store and
access the elements.
\newcommand{\ORAMShuffleSmall}{%

Given the memory arrangement of the
extension with small messages at the server from~Section~\ref{sec:oblivious:storage},
the access and rebuild phases
proceed as follows.

\emph{Access phase} requires a total of $c+1$ accesses
to the server: read the cache of size~$n^{1/c}$,
read $O(\log n)$ size bucket from every $c-2$ levels,
read one element from the last level, write an updated
cache back. Note that each bucket
can be read into memory since we assume
private memory and messages of size~$O(n^{1/c})$.
Hence,~$O(c)$ accesses are required to
access an element obliviously.
The access phase proceeds as follows:
read the cache, if an element is found access
a new fake element from level 1, and proceed
with fake accesses from then on.
Otherwise, look up the bucket using a hash function of level 1
where the element is supposed to be if it was read before.
Again, proceed with consequent fake accesses,
if the element was found, or keep looking for the element.

\emph{Rebuild phase:} After $n^{1/c}$ elements
have been accessed, level 1 is rebuilt taking
$O(n^{1/c} \log n)$ accesses to be rebuilt using the
Melbourne Shuffle for small messages~(see Section~\ref{sec:small-msg-shuffle}). 
Similarly, when ($i$-1)th level is full it requires
shuffling of $n^{(i+1)/c} \log n$ elements at level~$i$
using $O(i \times n^{i/c} \log n)$ accesses. Finally, the last
level requires~$O((c-1)n^{(c-1)/c})$ accesses to rebuild.
We could amortize the cost of the rebuilds to get
$O(c \log n)$ amortized, or deamortized
using~\cite{gmot-orsew-11}, access overhead per every element.
We could also increase message size to~$O(\sqrt[c]{n} \log n)$
and use the optimized Melbourne shuffle to achieve
a constant overhead from the rebuild, since
now the rebuild for level~$i$ takes
 $O(n^{i/c})$ accesses and can be (de)amortized over
 $O(n^{i/c})$ elements that caused it.
 Hence, with messages of size $\sqrt[c]{n} \log n$,
 and using the optimized Melbourne shuffle
 we get $O(c)$ (de)amortized access overhead.

Though, this solution resembles the Hierarchical
Solution of~\cite{go-spsor-96} by using buckets of size~$\log n$
at every level, it has two important differences.
The expansion factor from level to level is $n^{1/c}$,
instead of $2$, hence, we only have $c$ levels,
and the rebuild phase uses our shuffle
algorithm instead of a more expensive oblivious sort.
}
\ifFull 

\ORAMShuffleSmall \else
Details are given in
Appendix~\ref{app:oram:small}.
\fi

\begin{theorem}
  \label{theorem:oram:shuffle:small}
The randomized oblivious storage scheme based on the \shuffle shuffle
with small messages has the
following properties, where $n$ is the size of the outsourced dataset
and $c$ is a constant such that $c \geq 3$:
\ifLCNS
\begin{itemize}[noitemsep]
\else
\begin{itemize}
\fi
\item The private memory at the client and each message exchanged
  between the client and server have size $O(\sqrt[c]{n})$ $(O(\sqrt[c]{n} \log n))$.
\item The memory at the server has size $O(n)$.
\item The access overhead to perform a storage request is $O(c \log n)$ $(O(c))$.
\end{itemize}
\end{theorem}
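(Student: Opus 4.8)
The plan is to establish the three bullets separately, reading off the private-memory/message and server-storage bounds directly from the layout described in Section~\ref{sec:oblivious:storage} and concentrating the real work on the access overhead. First I would fix the data structure: a cache of size $n^{1/c}$ that fits in private memory and is read in a single message, together with $c-1$ levels, where each level $i<c-1$ is a hash table of $n^{(i+1)/c}$ buckets of capacity $O(\log n)$ holding up to $n^{(i+1)/c}$ real and $n^{i/c}$ fake elements, and the last level $c-1$ is a single permuted array of $n$ slots. Since every message and all private scratch space hold at most a full bucket, the cache, or one shuffle message, the message and private-memory bound is $O(n^{1/c})$ in the basic case and $O(n^{1/c}\log n)$ when the optimized Melbourne shuffle with the larger message is used. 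For server storage I would sum the level sizes: the bucketed levels contribute $\sum_{i=1}^{c-2} O(n^{(i+1)/c}\log n)$, dominated by $O(n^{(c-1)/c}\log n)=o(n)$, and the last level contributes $O(n)$, for $O(n)$ in total.

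Next I would bound the cost of a single access between rebuilds. An access reads the cache once, probes exactly one $O(\log n)$-size bucket at each of the $c-2$ bucketed levels, reads one slot at the permuted level, and writes the refreshed cache back, for $c+1=O(c)$ messages; here I would invoke a balls-into-bins estimate in the style of Lemma~\ref{lemma:bb-perms} to guarantee that, with all-but-negligible probability, no bucket exceeds its $O(\log n)$ capacity, so that each bucket probe is a single message. The obliviousness argument is more delicate: within the lifetime of a level every real element is probed at most once (afterwards it lives in the cache and is reinserted only at the next rebuild), so each probe hits an independent, previously-unvisited hash bucket, and repeats or already-found items are masked by accesses to fresh fake keys. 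Combined with CPA re-encryption on every write, the induced transcript is simulatable, reducing—exactly as in Lemma~\ref{lemma:shuffle-pass} and Theorem~\ref{thm:optim-shuffl}—to the obliviousness of the underlying shuffle and encryption scheme.

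The heart of the proof, and the step I expect to be the main obstacle, is the amortized rebuild cost. Level $i$ is reshuffled precisely when level $i-1$ (capacity $n^{i/c}$, with the cache playing the role of level $0$) fills, i.e.\ once every $n^{i/c}$ accesses, and each rebuild reshuffles the $\Theta(n^{(i+1)/c}\log n)$ entries of level $i$ using the small-message Melbourne shuffle of Section~\ref{sec:small-msg-shuffle} at message size $n^{1/c}$. Invoking Theorem~\ref{theorem:small-msg-shuffle} at recursion depth $i+1$ (the depth needed to bring messages down to $n^{1/c}$, and a constant since $i+1\le c=O(1)$) bounds one reshuffle of $n^{(i+1)/c}$ elements by $O(n^{i/c}\log n)$ operations. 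Dividing by the rebuild period $n^{i/c}$ gives $O(\log n)$ amortized operations per access at level $i$, and summing over the $c-1$ levels yields the claimed $O(c\log n)$ overhead; the last level, rebuilt once per $n^{(c-1)/c}$ accesses at cost $O((c-1)n^{(c-1)/c})$, contributes the same $O(c)$ per access and is absorbed.

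For the parenthetical variant I would rerun this accounting with the optimized small-message shuffle at message size $n^{1/c}\log n$, which removes the multiplicative $\log n$ from each reshuffle and collapses the per-level amortized cost to $O(1)$, so that the sum over the $c-1$ levels gives $O(c)$ overhead. Finally, since all of the above are amortized bounds, I would invoke the deamortization technique of~\cite{gmot-orsew-11}—doubling the server space and running the rebuild of each level in the background across the accesses that trigger it—to convert the amortized overhead into the same worst-case overhead, completing all three parts of the theorem.
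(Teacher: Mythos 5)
Your proposal follows essentially the same route as the paper: the same hierarchical layout (cache plus $c-1$ levels of $O(\log n)$-size buckets and a final permuted level), the same $c+1$-message access phase, the same per-level rebuild amortization via Theorem~\ref{theorem:small-msg-shuffle} (rebuild level $i$ every $n^{i/c}$ accesses at cost $\tilde{O}(n^{i/c})$), the same switch to $O(\sqrt[c]{n}\log n)$ messages with the optimized shuffle for the $O(c)$ variant, and the same appeal to~\cite{gmot-orsew-11} for deamortization. The extra detail you supply on the server-storage summation and on why the access transcript is simulatable is consistent with, and slightly more explicit than, the paper's own sketch.
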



\ifFull
\vspace{-8pt}
\section*{Acknowledgements}
  This research was supported in part by
  the National Science Foundation under grants CNS--1011840,
  CNS--1012060, CNS--1228485, CNS--1228639, and IIS--124758,
  by the National Institutes of Health under grant R01-CA180776, and
  by the Office of Naval Research under grant N00014-08-1-1015.
  Olga Ohrimenko worked on this project in part while at Brown
  University.
  %

\fi


\ifLCNS
\bibliographystyle{splncsnat}
\else
\bibliographystyle{abbrv}
\fi
\bibliography{paper}

\ifLCNS
\clearpage

\input{appendix}

\fi

\end{document}